\documentclass{llncs}

\usepackage[applemac]{inputenc}
\usepackage{authblk} 

\usepackage{setspace}

\usepackage{booktabs,gensymb}
\usepackage[a4paper, margin=1in, bottom = 4cm, top = 4cm]{geometry} 
\usepackage{bm,rotating}  
\usepackage[table]{xcolor} 

\usepackage{amsmath, amssymb}
\usepackage[hidelinks]{hyperref} 
\usepackage{mathtools} 

\usepackage{tikz}
\usetikzlibrary{trees,matrix, arrows,fit, positioning, shapes.geometric}
\usepackage{colortbl} 
\definecolor{lightgray}{gray}{0.9}

\newcolumntype{C}[1]{>{\centering\let\newline\\\arraybackslash\hspace{0pt}}m{#1}}

\usepackage{cleveref} 
\usepackage{braket} 
\usepackage{thm-restate} 
\usepackage{orcidlink} 




\definecolor{mycolor1}{HTML}{1E3ECF}
\definecolor{mycolor2}{HTML}{41ED84}
\definecolor{mycolor3}{HTML}{D809A1}
\definecolor{mycolor4}{HTML}{F58B00}

\def\R{\mathcal{R}}
\def\r{\mathbf{r}}
\def\T{\mathcal{T}}
\def\agg{\mathbf{agg}}

\def\a{\mathtt{a}} 
\def\b{\mathtt{b}} 

\newcommand{\rlike}{\r^\mathrm{like}} 
\newcommand{\rrisk}{\r^\mathrm{risk}} 
\newcommand{\rneg}{\r^\mathrm{negl}} 
\newcommand{\Rlike}{\R^\mathrm{like}} 
\newcommand{\Rrisk}{\R^\mathrm{risk}} 
\newcommand{\Rneg}{\R^\mathrm{negl}} 

\newcommand{\x}{\mathbf{x}} 

\newcommand*\ov[1]{\overline{#1}}


\newcommand{\kick}[1]{}

\title{An Axiomatic Approach to Formalized Responsibility Ascription}
\author{Sarah Hiller\inst{1,3}\footnote{Corresponding author. Contact: sarah.hiller@fu-berlin.de.}\orcidlink{0000-0002-4836-4036} \and
Jonas Israel\inst{2}\orcidlink{0000-0002-3992-3203} \and
Jobst Heitzig\inst{3}\orcidlink{0000-0002-0442-8077}}

\institute{Free University Berlin, Institute for Mathematics, Berlin, Germany \and
Technical University Berlin, Efficient Algorithms Research Group, Berlin, Germany \and
Potsdam Institute for Climate Impact Research, Potsdam, Germany}



\pagestyle{plain}

\begin{document}

\maketitle

\begin{abstract}
    
    A formalized and quantifiable responsibility score is a crucial component in many aspects of the development and application of multi-agent systems and autonomous agents. 
    We can employ it to inform decision making processes based on ethical considerations, as a measure to ensure redundancy that helps us in avoiding system failure, as well as for verifying that autonomous systems remain trustworthy by testing for unwanted responsibility voids in advance. 
    We follow recent proposals to use probabilities as the basis for responsibility ascription in uncertain environments rather than the deterministic causal views employed in much of the previous formal philosophical literature. 
    Using an axiomatic approach we formally evaluate the qualities of (classes of) proposed responsibility functions. To this end, we decompose the computation of the responsibility a group carries for an outcome into the computation of values that we assign to its members for individual decisions leading to that outcome, paired with an appropriate aggregation function.
    Next, we discuss a number of intuitively desirable properties for each of these contributing functions. We find an incompatibility between axioms determining upper and lower bounds for the values assigned at the member level\kick{, imposing a choice for either the upper or the lower bound}. Regarding the aggregation from member-level values to group-level responsibility we are able to axiomatically characterise one promising aggregation function. 
    Finally, we present two maximally axiom compliant group-level responsibility measures -- one respecting the lower bound axioms at the member level and one respecting the corresponding upper bound axioms.
    
    \keywords{Responsibility under Uncertainty, Agency, Formal Ethics, Trustworthy Autonomous Systems, Axiomatic Evaluation}
   
\end{abstract}

\section{Background and Introduction}\label{sec:intro}


One important challenge in the development and deployment of trustworthy autonomous and multi-agent systems is a formalized and quantifiable responsibility measure.
Internally to a system we can use responsibility as a 
value which informs decision making processes, while from an external point of view we can use shared responsibility as a measure to ensure redundancy and avoid system failures in the case of an agent failure. Finally, in the evaluation of autonomous agents interacting with humans it is important to verify the requirement of meaningful human control \cite{santoni_de_sio_meaningful_2018}.
The study of responsibility in multi-agent systems lies at the intersection of game theory, formal ethics, logic, and computer science. For a recent overview of open questions regarding responsibility formalisation in autonomous systems applications see \cite{yazdanpanah_responsibility_2021}.
In the current paper we focus on backward looking responsibility for groups of agents in complex scenarios including uncertainty.

In order to arrive at a formalized responsibility representation we need to take two steps: First, we need to represent the situation in which responsibility is to be ascribed. The way this is implemented varies considerably between approaches from a philosophical tradition versus ones from a computer science background. Second, we need to look at the criteria for responsibility ascription and how these can be modelled in the given situation.

Braham and van Hees \cite{braham2012} model the decision situations in which responsibility is assigned using normal-form games. That is, several players may interact, but decisions are all made at one single point in time, after which the outcome is clear.
Contrastingly, in the field of \emph{stit} logics, decisions take place at a moment in an infinitely branching history. Every choice is represented by a new branching off of possible futures \cite{stit2001,horty2001}. 
Finally, Yazdanpanah et al. \cite{yazdanpanah_strategic_2019} use concurrent game structures where agents' actions initiate a change between finitely many possible states of a system, with the possibility of revisiting a state more than once. Beyond these, there are many other ways to model the relevant decision situations for the study of responsibility ascription (e.g., see \cite{BBB+18a,GlPa19a}) which are, however, less relevant to our approach.
%
Our representation is somewhere in between the previously mentioned approaches of using normal-form games, infinite Kripke frames, and concurrent games. Namely, we use extensions of extensive-form game trees. These extend normal-form games with a temporal component, or, viewed from a different angle they can be conceived of as finite fragments of infinite stit frames \cite{duijf2018}. They are perhaps most similar to the concurrent epistemic game structures used by \cite{yazdanpanah_strategic_2019}, with several interacting agents having the choice to influence the system. However, we include the option of consequences being probabilistic or uncertain, as agents may not always know about the consequences of their choices in interactive settings.

The criteria for responsibility ascription are mostly consistent throughout recent philosophical literature.
Free choice and the capability for moral reasoning \cite{vandepoel2011,vallentyne2008} are generally assumed to hold for all agents and actions (possibly after an appropriate restriction).
Most attention tends to be focused on the analysis of a causal relation between the action or event in question and an undesirable outcome.
Implementations of causality include NESS causation \cite{braham2012,wright1988}, or modelling causality as \emph{seeing to it that} a certain outcome obtains in stit logics \cite{stit2001}. Very similar to this strict view, \cite{yazdanpanah_strategic_2019} represent responsibility as the possibility to avoid a certain outcome, that is, seeing to it that it's opposite holds.
Further elements such as (missing) knowledge \cite{broersen2011epistemic}, intention \cite{halpern2018} or other excuses \cite{fischer2011} may be added or left out, depending on the desired complexity of the representation.


In contrast to the often unquestioned assumption of actual causation as a precondition for responsibility there are scenarios in which there is no actual causality but we would still want to speak of responsibility. For example, in legal contexts `attempts' are well-known cases of non-actualized causal relations \cite{moore2019}. 
\cite{baltag2020} try to solve this by retroactively equipping stit logics with a notion of \emph{action failure} or \emph{attempt}. 
More comprehensively, in some of the occasions in which we want to assign responsibility but cannot fall back on actual causation it has been observed that the action in question \emph{increased the probability} of the outcome \cite{broersen2011attempt}.
Note, however, that we can have both causation without probability increase -- due to an initial lowering of the probability of unlikely causes -- and probability increase without causation -- due to fizzling or probabilistic preemption.\footnote{Fizzling here describes an event in which the probability of the outcome was increased but stays below 1 and the unwanted outcome does not occur due to chance. Probabilistic preemption is simply preemption in a probabilistic context, that is, the event which increased the probability of the outcome does not turn out to be a cause as a different event interrupts the causal chain by causing the outcome first.} Thus, attempts to reduce causation to probability raising, as described by \cite{hitchcock2021,kvart2002,vallentyne2008}, do not seem very promising.

We argue that causation is not probability increase, but that both are separate grounds for responsibility, as was already suggested by \cite{moore_causing_2007}: ``These metaphysical conclusions are enough to make possible the moral thesis I now want to advance: that chance-raising is an independent desert basis [for responsibility], along with causation and counterfactual dependence.''\footnote{The well worked out reduction by Halpern and Pearl of causation to counterfactual dependence \emph{in an appropriately modified context} may serve as an argument to disregard the third disjunct \cite{halpern_causes_2005}.}

As responsibility due to actual causation has been the focus in the literature so far, we now analyse the aspect of responsibility due to probability increase.
In a previous work (that is currently under review) a subset of the authors suggested some functions for computing responsibility based on probability increase and evaluated these according to a number of paradigmatic example scenarios. 
The current work provides a more formalised and in-depth examination of the properties of responsibility functions.
We decompose the responsibility computation for a group of agents into two parts: First we compute values for individual decisions by group members leading to the outcome, before aggregating these values into the final group responsibility. 

In order to evaluate proposed responsibility functions we employ an axiomatic method \cite {thomson2001}. That is, we formulate desirable properties which our responsibility functions are to have in the form of axioms, and assess the quality of (sets of) proposed functions according to these axioms. Additionally, we will present certain incompatibility and implication results between the axioms and a characterisation of one specific subset of functions\kick{ using some of the proposed axioms}. Two responsibility functions will be singled out to be maximally axiom-compliant with respect to our impossibility result, one respecting the lower-bound axioms and one respecting the upper-bound ones.

\section{Framework and Responsibility Functions}
\label{sec:framework}

We are interested in modelling decision situations of several agents interacting over time. Additionally, we want to allow for the presence of various forms of uncertainty, resulting from the interactive nature of the scenario or from an inherent uncertainty (both probabilistic or not) regarding the outcomes of actions.
As noted earlier, the framework presented here also occurs in a different work currently under review which is authored by a subset of the authors.

\subsection{Framework}\label{sec:framework-framework}

In order to represent decision scenarios we use extensive-form game trees equipped with specific additions for the features mentioned above. Formally, the framework is the following.

\begin{definition}[\textbf{Decision tree}] A morally evaluated multi-agent \textbf{decision tree} with uncertainty (without possible confusion in the remainder abbreviated as {\em tree}) is a tuple \\
\[\T = \langle I, V, E,\sim,(A_v),(c_v),(p_v), \epsilon \rangle\]
where:
\begin{itemize}
\item  $I$ is a nonempty finite set of {\em agents} (or players).
\item $\langle V,E\rangle$ is a directed rooted tree with nodes $V$ and edges $E$. $V = \bigcup_{i\in I} V_i \cup V_a \cup V_p\cup V_o$ is partitioned into a set of {\em decision nodes} $V_i$ for each agent $i\in I$, 
a set of {\em ambiguity nodes} $V_a$, 
a set of {\em probabilistic uncertainty nodes} $V_p$, 
and a set of {\em outcome nodes} $V_o$, that are exactly the leaves of the tree.
      We denote the set of all decision nodes by
      $V_d := \bigcup_{i\in I} V_i $.

\item $\sim$ is an equivalence relation on $V_d$ so that $v\sim w$ implies $v, w \in V_i$ for the same agent $i \in I$. 
      We call $\sim$ {\em information equivalence} and the equivalence classes of $\sim$ in $V_i$ the {\em information sets} of $i$. 
\item $A_v$ is a nonempty, finite set of $i$'s possible {\em actions} in $v$, for each agent $i\in I$ and decision node $v\in V_i$. Whenever $v \sim w$, $A_{v} = A_{w}$. Let $\bigcup \limits_{v \in V_d} A_v =: \mathcal A$.

\item $c_v:A_v\to S_v$, where $S_v := \{ w\in V: (v,w)\in E \}$, is a bijective {\em consequence function} mapping actions to successor nodes for each node $v\in V\setminus V_o$.

\item $p_v\in\Delta(S_v)$ for $v\in V_p$, with $\Delta (A)$ being the set of all probability distributions on a given set $A$, is a probability distribution on the set of possible successor nodes for each probability node. 

\item $\epsilon \subseteq V_o$ is a set of ethically undesirable outcomes. 

\end{itemize}

\end{definition}

The set of {\em agents} and the {\em directed tree} encode a multi-agent decision situation with the direction of the tree showing the temporal progression. {\em Decision, ambiguity} and {\em probability nodes} receive their intuitive interpretation. Information equivalence $\sim$ encodes which nodes are indistinguishable to an agent at a given time. Note that only decision nodes of one agent may be related -- that is, i) agents always know whether or not it is their turn, and ii) we do not need to index the equivalence relation with the agent under consideration. We use the term {\em uncertainty} as an umbrella term for information uncertainty, ambiguity and probabilistic uncertainty.\footnote{Note that according to this representation we do not assign probabilities to any player's actions. While it can very well be argued that people do indeed reason using subjective probability distributions over other players' actions, we do not include this here for several reasons. The first one is that it is extremely difficult, if not impossible, to assess these probabilities objectively. The other reason is that including reasoning via other agents' expected actions leads to possible excuses in responsibility ascription that we do not want to permit.}

\paragraph{Graphical representation.} We represent these features graphically as follows. An example of two simple scenarios is depicted in \Cref{fig:scenarios}.

    Directed edges are shown as arrows labelled with actions, descriptions of the situation, or probabilities, depending on the preceding node. Decision nodes are depicted as diamonds labelled with the respective agent and ambiguity nodes are empty diamonds, in line with the intuition that these can be regarded as a decision node of a special agent that represents the decision scenario's environment.  Probabilistic uncertainty nodes are depicted as empty squares. Outcome nodes are represented as circles, with undesirable ones shaded in gray. Information sets are connected via a dotted line.

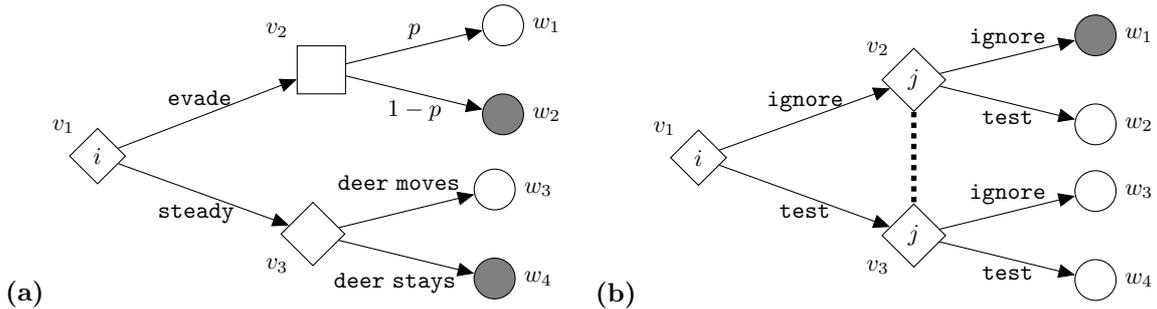
\begin{figure}
\centering
{
    {\bf (a)}\begin{tikzpicture} [->, scale=.9, every node/.style={scale=.9}, node distance=1cm, decision node/.style={diamond, draw, aspect=1, minimum height= 7mm},  prob node/.style={regular polygon,regular polygon sides=4, draw, minimum height= 1cm},  dummy/.style = {}, good outcome/.style={circle, draw, minimum width=6mm}, bad outcome/.style={circle, draw, fill=gray, minimum width=6mm }, >= triangle 45]
\node[decision node] (1) [] [label=above left: $v_1$] {$i$};

\node[dummy] (a) [right of = 1] [] {};
\node[dummy] (a2) [right of = a] [] {};

\node[prob node] (2) [above right =  of a2] [label=above left: $v_2$] { };
\node[decision node] (3) [below right = of a2] [label=below left: $v_3$] { \ \ \ \ \ };

\node[dummy] (b) [right of = 2] [] {};
\node[dummy] (b2) [right of = b] [] {};
\node[dummy] (b3) [right of = b2] [] {};

\node[good outcome] (good2) [above right = .4cm of b2] [label = right: $w_1$]{}; 
\node[bad outcome] (bad2) [below right = .4cm of b2] [label = right: $w_2$] {};

\node[dummy] (c) [right of = 3] [] {};
\node[dummy] (c2) [right of = c] [] {};

\node[good outcome] (good3) [above right = .4cm of c2] [label = right: $w_3$] {};
\node[bad outcome] (bad3) [below right = .4cm of c2] [label = right: $w_4$] {};

\path (1) edge node[above] {$\mathtt{evade \ \ }$} (2) (1) edge node[below] {$\mathtt{steady \ \ }$} (3) (2) edge node[above] {$p$}  (good2) (2) edge node[below] {$1 - p$} (bad2) (3) edge node[above=5pt] {$\mathtt{deer \ moves\ \ }$} (good3) (3) edge node[below=3pt] {$\mathtt{deer \ stays\ \ \ }$} (bad3) ;

\end{tikzpicture}
}
{
    {\bf (b)}
    \begin{tikzpicture} [-, scale=.9, every node/.style={scale=.9}, node distance=1cm, decision node/.style={diamond, draw, aspect=1, minimum height= 7mm},  prob node/.style={regular polygon,regular polygon sides=4, draw, minimum height= 1cm},  dummy/.style = {}, good outcome/.style={circle, draw, minimum width=6mm}, bad outcome/.style={circle, draw, fill=gray, minimum width=6mm }, >= triangle 45]
\node[decision node] (1) [] [label=above left: $v_1$] {$i$};

\node[dummy] (a) [right of = 1] [] {};
\node[dummy] (a2) [right of = a] [] {};

\node[decision node] (2) [above right =  of a2] [label=above left: $v_2$] {$j$}; %
\node[decision node] (3) [below right = of a2] [label=below left: $v_3$] {$j$};

\node[dummy] (b) [right of = 2] [] {};
\node[dummy] (b2) [right of = b] [] {};
\node[dummy] (b3) [right of = b2] [] {};

\node[bad outcome] (good2) [above right = .4cm of b2] [label = right: $w_1$]{};
\node[good outcome] (bad2) [below right = .4cm of b2] [label = right: $w_2$] {};

\node[dummy] (c) [right of = 3] [] {};
\node[dummy] (c2) [right of = c] [] {};

\node[good outcome] (bad3) [above right = .4cm of c2] [label = right: $w_3$] {};
\node[good outcome] (good3) [below right = .4cm of c2] [label = right: $w_4$] {};

\path[->] (1) edge node[above] {$\mathtt{ignore}$}  (2) (1) edge node[below] {$\mathtt{test}$} (3) (2) edge node[above] {$\mathtt{ignore}$}  (good2) (2) edge node[below] {$\mathtt{test}$} (bad2) (3) edge node[below] {$\mathtt{test}$} (good3) (3) edge node[above] {$\mathtt{ignore}$} (bad3);

\draw[line width=2pt, dotted=on] (2) -- (3); 

\end{tikzpicture}}

\caption{\label{fig:scenarios} Graphical representation of interactive decision scenarios with uncertainty in the described framework. (a) An autonomous vehicle (agent $i$) approaches a deer on a street. It can either try to evade the deer with a computed success rate of $p$ or keep steady, resulting in a crash if the deer does not move.
(b) Agents $i$ and $j$ are two instances in a distributed system which may perform a costly test for the failure of a component, with $i$ coming before $j$. Both may either perform the test or ignore it, but $j$ has no way of knowing whether $i$ performed the test. If they both ignore the test a failure is overlooked.
}
\end{figure}

\paragraph{Further notions.} Additionally, we will need formal representations of the following notions defined on this framework: that of \emph{past} and \emph{future} of a certain node, that of \emph{strategies}, which tell us for a given set of agents $G$ all future decisions of $G$, and that of \emph{scenarios}, which tell us about others' future decisions and resolve all non-probabilistic uncertainty. Formally, these are defined as follows.

\begin{definition}[\textbf{History, Branch, Information Branch}]
Let a tree $\T$ and a node $v\in V$ be given. Then the \textbf{history} of $v$ is 
\[ H(v):= \{ v' \in V \mid \exists v_1,\ldots,v_n \in V \text{ s.t. } (v',v_1), (v_1, v_2), \ldots, (v_n, v) \in E \}, \]
and the \textbf{branch} of $v$ is 
\[ B(v):= \{ v' \in V \mid v \in H(v')\}, \] 
that is, all possible futures after $v$. The \textbf{information branch} of $v$ is $B^{\sim}(v):= \bigcup_{v'\sim v} B(v') $, that is, all futures considered possible in the information set of $v$.
\end{definition} 

Given a group $G$, we call their decision nodes $V_G:=\bigcup_{i\in G} V_i $, and the decision nodes of all agents not in the group $I\setminus G$ together with all ambiguity nodes $V_{-G}:= \bigcup_{i\in I\setminus G} V_i \cup V_a $. With these two notations we can define the terms \emph{strategy} and \emph{scenario}.

\begin{definition}[\textbf{Strategy}]
Let a tree $\T$, a group $G\subseteq I$, and a node $v\in V$ be given. Then a \textbf{strategy} for $G$ at $v$ is a function \[ \sigma \colon V_G \cap B^{\sim}(v) \to \bigcup_{v_d \in V_G\cap B^{\sim}(v)} A_{v_d} \]
that chooses an action for every one of $G$'s future decision nodes considered possible according to the information set in $v$, i.e., $\sigma(v) \in A_v$. Additionally, we require that the actions selected are not just logically but also practically possible, that is, a strategy must select the same action for all nodes in one information set: $ \sigma (v_d) = \sigma (v'_d) $ for all $v_d, v'_d \in V_G \cap B^{\sim} (v) \text{ s.t. } v_d \sim v'_d $.\footnote{This is sometimes called a \emph{uniform strategy} \cite{yazdanpanah_strategic_2019}, however, as this is the only kind of strategy we use here we leave out the qualification. Also, our restriction is only concerned with individual knowledge at the agent level and does not concern any kind of group knowledge or coordination. Therefore, we can employ this restriction without going into the debate about what actions a group can collectively select.}  We denote the set of all these functions by $\Sigma(\T, G, v)$.
\end{definition}

\begin{definition}[\textbf{Scenario}]
Again, let a tree $\T$, a group $G \subseteq I$, and a node $v\in V_G$ be given. Then a \textbf{scenario} for $G$ at $v$ is a node $v_{\zeta} \sim v$ together with a function 
\[ \zeta \colon V_{-G} \cap B(v_{\zeta}) \to \bigcup_{v'\in V_{-G} \cap B(v_{\zeta})} S_{v'} \]
that chooses successor nodes $\zeta (v')$ for all future decision nodes by non-members of G or ambiguity nodes, i.e., $\zeta(v) \in S_{v}$. The node $v_{\zeta}$ is considered the actual node in scenario $\zeta$, thus resolving information uncertainty. The function $\zeta$ resolves non-probabilistic uncertainty regarding what will happen in the future (disregarding $G$'s own actions).

We denote the set of all these functions by  $Z^\sim(\T, G, v)$. Note that the specification of $v_{\zeta}$ is implicit in the definition of $\zeta$.
\end{definition}

Given a node $v\in V_G$, a strategy $\sigma\in\Sigma(\T, G, v)$, and a scenario $\zeta\in Z^\sim(\T, G, v)$, one can easily compute in a recursive way the conditional probability to end in an outcome node $w$ given that the current node is actually $v_\zeta$ and the choices in non-probabilistic nodes are made according to what $\sigma$ and $\zeta$ specify. Let's call this conditional probability the {\em likelihood}~$\ell(w | v, \sigma, \zeta)$. We are often interested in the likelihood of ending in any undesirable outcome node $w \in \epsilon$ for which we slightly abuse notation and write $\ell(\epsilon | v, \sigma, \zeta) \coloneqq \sum_{w \in \epsilon} \ell(w | v, \sigma, \zeta)$.

In the remainder of the paper, unless specified otherwise, we will use $v, v_1, v_2, \ldots$ to indicate decision nodes and $w, w_1, w_2, \ldots$ to indicate outcome nodes.

\subsection{Responsibility Functions}\label{sec:framework-respfunctions}

We are interested in assigning backward looking moral responsibility to a (possibly singleton) group for a certain outcome. In order to compute this, we utilize the path from the root node of the decision tree to the given outcome node. 
\kick{That is in particular the information which decisions the agents in the group took in each of the decision nodes along the way.} We will assign an intermediate value, which we call ``member contribution'', for single decisions by individual group members in a given node. Consequently, we use an aggregation function to combine these values into the final ``outcome responsibility''.\footnote{By outcome responsibility we mean moral responsibility assigned at an outcome node as opposed to at an intermediate node, and not non-moral causal responsibility (as it is sometimes understood in the literature, e.g. \cite{vincent2011}).} 

Member contribution is not computed irrespective of an agent's group membership but precisely with this information in mind. Thus, the point of view we take here does not stand in contrast to non-reducibility results known from the literature \cite{braham2018,duijf2018,tamminga2020}.

It is important to note that we place no restrictions on which agents can form a group. This is a decision the modeler takes at the time of the computation of responsibility. Therefore, we also do not place any \emph{a priori} restrictions on the level of coordination within an arbitrary group. Finally, we always assume common knowledge of the full decision situation. 

Member contribution is a function of the individual agent and the group that they are a member of, the node in which the current decision is taken as well as the action which the agent selects. All of this is considered within a given decision situation. We measure contribution as an increase in probability. Therefore the range of this function will be the interval $[0,1]$.\footnote{Note that a member contribution function $\r$ only receives information internal to the scenario modeled in the decision tree. The ascribed contribution is thus invariant under any outside information. This includes invariance with respect to the specific name an agent is given as well as any form of duty or other historic information.}

\begin{definition}[\textbf{Member Contribution}] Let a tree $\T$ be given. 
\textbf{Member contribution} is a function mapping a set (`group') of agents $G\subseteq I$, an agent $i\in G$, a node $v\in V_i$ and an action $\a\in A_v$ to values in the interval $[0,1]$: \[ \r \colon \mathcal P (I) \times I \times V_d \times \mathcal A \to [0,1].\]
Whenever $i \notin G$, $v\notin V_i$, or $\a\notin A_v$ the function is not defined.
\end{definition}

Next we define the general form of the functions we use to aggregate member contribution into outcome responsibility (see  \cite{mesiar2008} for a general introduction to aggregation functions).

\begin{definition}[\textbf{Aggregation function}] \
For $n\geq1$ an {\em n-ary aggregation function} is a function $\agg^{(n)}: [0,1]^n \to [0,1]$ that is non decreasing in each entry and fulfills the following boundary conditions 
\[\inf_{\mathbf{x}\in [0,1]^n} \agg^{(n)}(\mathbf{x} ) = \inf [0,1] = 0 \quad \text{ and } \quad \sup_{\mathbf{x}\in [0,1]^n} \agg^{(n)}(\mathbf{x} ) = \sup[0,1] = 1.\]
An {\em extended aggregation function} is a function $\agg: \bigcup \limits_{n\in \mathbb N_{\geq 1}} [0,1]^{n} \to [0,1]$ such that for all $n>1$, $\agg^{(n)} = \agg \restriction _{[0,1]^n}$ is an n-ary aggregation function and $\agg^{(1)}$ is the identity on $[0,1]$.
\end{definition}

Finally, outcome responsibility will be a function of a group of agents and an outcome node, again in a specified decision situation. \kick{Some of the functions suggested below will not be aggregation functions in the sense defined above, by virtue of their image not being restricted to the interval $[0,1] $. We will allow for real values as outcome responsibility in order to be able to talk about these functions nevertheless.}

\begin{definition}[\textbf{Outcome responsibility}] Let a tree $\T$ be given. \textbf{Outcome responsibility} is a function of a group of agents  $G\subseteq I$, and an outcome node $w\in V_o$ to the reals: \[ \R \colon \mathcal P (I) \times V_o \to \mathbb R.\] 
We consider outcome responsibility for group $G \subseteq I$ at node $w$ to be a function of the member contribution values of the individual decisions of agents $i\in G$ which lead to the outcome $w$: 
\[\R(G, w) := \agg\left((\r(G, i, v, \mathtt{a}_{v \to w} ))_{i\in G, v \in V_i\cap H(w)}\right),\]
for an appropriate aggregation function $\agg$.
$\a_{v \to w}$ is that action available in $v$ for which the successor node lies on the path to $w$.
\end{definition}

In the following three sections we provide an axiomatic study of member contribution, aggregation functions and outcome responsibility.

\section{Member Contribution Functions}\label{sec:pwresp}

We start our analysis by considering member contribution functions in more detail. After establishing a list of interesting axioms that reflect certain moral desiderata we present three functions capturing different notions of contribution. We conclude by qualitatively evaluating the presented functions with respect to the axioms.

\subsection{Axioms for Member Contribution Functions}\label{sec:pwresp-axioms}

The axioms described here demand certain contribution values in clearly defined situations. This reflects some moral desiderata that might (or might not) be important in different situations.
The first axiom reflects the idea that in situations the agent can not discern between they should not be treated differently.

\begin{description}
\item[(KSym)]{\em Knowledge Symmetry.}
  The same action taken in the same information set produces the same contribution value. Let $v\sim v' $ and $\a\in A_{v} = A_{v'}$. Then $\r(G, i, v, \a) = \r(G, i, v', \a)$.
\end{description}

The following axioms bound the assigned contribution value either to 0 or to 1 for taking specific actions. For both pairs of axioms it is easy to see that the version of the axiom considering all information branches (marked by the ${}^\sim$) is logically implied by the original one. 

\begin{description}
\item[(AMC)]{\em Avoidance of member contribution.} 
  Consider a tree $\T$, a group $G \subseteq I$ and a node $v\in V_i$ for some $i\in G$ with $A_{v} = \{\a, \b_1, \ldots, \b_m \},$ where $m\geq 0,$ such that choosing $\a$ certainly leads to a desirable outcome\kick{, $B(c_{v}(\a)) \cap V_o \subseteq \epsilon^c$,} and choosing any other option certainly leads to an undesirable outcome\kick{, $ ( (B(c_{v}(\b_1)) \cup \ldots \cup B(c_{v}(\b_m))) \cap V_o \subseteq \epsilon$}. Then $\r (G, i, v, \a) = 0 $.

\item[(AMC$^\sim$)]{\em Avoidance of member contribution in information branches.}
  Consider a tree $\T$, a group $G \subseteq I$ and a node $v\in V_i$ for some $i\in G$ with $A_{v} = \{\a, \b_1, \ldots, \b_m \}, $ where $m\geq 0,$ such that choosing $\a$ in any $v' \sim v$ certainly leads to a desirable outcome\kick{, $B^\sim (c_{v}(a)) \cap V_o \subseteq \epsilon^c$,} and choosing any other option certainly leads to an undesirable outcome\kick{, $ ( (B^\sim (c_{v}(\b_1)) \cup \ldots \cup B^\sim (c_{v}(\b_m))) \cap V_o \subseteq \epsilon$}. Then $\r (G, i, v, \a) = 0 $.

\item[(FMC)]{\em Full member contribution.}
  Consider a tree $\T$, a group $G \subseteq I$ and a node $v\in V_i$ for some $i\in G$ with $A_{v} = \{\a, \b_1, \ldots, \b_m \},$ where $ m > 0,$\footnote{Note that it is important here to enforce that there is actually a choice for the agent, i.e., that $m \neq 0$.} such that choosing $\a$ certainly leads to an undesirable outcome\kick{, $B(c_{v}(\a)) \cap V_o \subseteq \epsilon$,} and choosing any other option certainly leads to a desirable outcome\kick{, $ ( (B(c_{v}(\b_1)) \cup \ldots \cup B(c_{v}(\b_m))) \cap V_o \subseteq \epsilon^c$}. Then $\r (G, i, v, \a) = 1 $.

\item[(FMC$^\sim$)]{\em Full member contribution in information branches.}
  Consider a tree $\T$, a group $G \subseteq I$ and a node $v\in V_i$ for some $i\in G$ with $A_{v} = \{\a, \b_1, \ldots, \b_m \},$ where $ m > 0$, such that choosing $\a$ in any $v' \sim v$ certainly leads to an undesirable outcome\kick{, $B^\sim (c_{v}(\a)) \cap V_o \subseteq \epsilon$,} and choosing any other option certainly leads to a desirable outcome\kick{, $ ( (B^\sim (c_{v}(\b_1)) \cup \ldots \cup B^\sim (c_{v}(\b_m))) \cap V_o \subseteq \epsilon^c$}. Then $\r (G, i, v, \a) = 1 $.
\end{description}

The axioms (AMC) and (FMC) represent notions of {\em seeing to it that} \cite{stit2001}.
They request that if an agent ensures a desirable or undesirable outcome through a single action in one decision node, i.e. she {\em sees to it that} this outcome holds, she is to be assigned no or full contribution value, respectively.

\subsection{Proposed Functions}\label{sec:pwresp-functions}

We consider three different member contribution functions. The first
follows an ad-hoc way of measuring responsibility through increased likelihood of an undesirable outcome. The other two use more elaborate notions of responsibility through risk taking and through negligence. Let a tree $\T$, group $G\subseteq I$, agent $i\in G$ and node $v\in V_i$ be given.

\paragraph{Contribution through increase in guaranteed likelihood.}
Here, agent $i$ is assigned contribution value for the undesirable outcome if, and to the degree that, their action increased the \emph{guaranteed likelihood} of $\epsilon$. 
This is often the first intuition when adopting a probabilistic view of causation in responsibility ascription, see e.g. \cite{kaiserman_more_2018}, or \cite{vallentyne2008}: ``I shall assume that the relevant causal connection is that the choice increases the objective chance that the outcome will occur''.
We define the \emph{known guaranteed likelihood} of $\epsilon$ at node $v \in V$ as 
\[ \gamma(v) \coloneqq \min_{\sigma \in \Sigma(\T, G, v)} \min_{\zeta \in Z^\sim(\T,G,v)} \ell(\epsilon | v,\sigma, \zeta). \]

The member contribution we assign to agent $i \in G$ for performing action $\a \in A_v$ in node $v\in V_i$ is measured as the increase in guaranteed likelihood as follows
\[ \rlike(G, i, v,\a) \coloneqq \Delta \gamma(v, \a) \coloneqq \gamma(c_v(\a)) - \gamma(v).\]

\paragraph{Contribution through risk taking.}
This variant assigns a contribution value to an agent if their action can be seen as some kind of risk taking. We define taking a risk here as not avoiding a possible bad outcome.
We define the \emph{optimal avoidance} of $\epsilon$ by group $G$ given a scenario $\zeta\in Z^\sim(v)$ as 
\[ \omega(v,\zeta) \coloneqq  \min_{\sigma\in\Sigma(\T, G, v)} \ell(\epsilon | v,\sigma,\zeta). \]

\noindent We now measure member contribution as the shortfall in avoiding $\epsilon$ due to action $\a$ at $v$
\[ \rrisk(G,i, v, \a) \coloneqq \max_{\zeta \in Z^\sim(\T, G, v)} \Delta \omega(v, \zeta, \a) \coloneqq \max_{\zeta \in Z^\sim(\T, G, v)} [\omega (c_{v}(\a), \zeta) - \omega(v, \zeta)]. \]
That is, rather than assuming a single scenario we compare $G$'s strategies over all scenarios.

\paragraph{Contribution through negligence.}
The third and final variant we study in this paper assigns the contribution value in a similar way to the previous variant but deducts a baseline of unavoidable risk. This ensures that in situations where all available actions produce some risk of leading to an undesirable outcome the agent is only assigned a contribution value when choosing negligently. 
We define the minimal risk formally through
\[\underline\rho(G, i, v) \coloneqq \min_{\a\in A_{v}} \rho(G, i, v,\a) = \min_{\a\in A_{v}} \max_{\zeta \in Z^\sim(\T, G, v)} \Delta \omega(v, \zeta, \a) = \min_{\a \in A_{v}} \rrisk(G, i, v, \a) . \]
With that we define this variant of member contribution as 
\[ \rneg(G, i, v, \a) \coloneqq \Delta \rho(G, i, v, \a) \coloneqq \max_{\zeta \in Z^\sim(\T, G, v)} \Delta \omega(v, \zeta, \a) - \underline\rho(G, i, v). \]

\subsection{Application to Example Scenarios}\label{sec:pwresp-examples}

The suggested functions give the following results for the examples from \Cref{fig:scenarios} and \Cref{fig:coordination_rescue}. The computations can be found in \Cref{app:pw_examples}. We can see that in many situations $\rlike$ is not very fine-grained. Contribution values for members of the group $\{i,j\}$ are often zero as we can assume that the other group members will follow a strategy that avoids the undesirable outcome. Also, we see that in accordance with our intentions $\rrisk$ may assign contribution values where $\rneg$ fails to do so.

\begin{table}
 \centering
 \small
  \begin{tabular}{C{1cm}|C{1.2cm}C{1.2cm}|C{1.5cm}C{1.5cm}|C{1.2cm}C{1.2cm}|C{1.5cm}C{1.5cm}|C{1.2cm}C{1.2cm}}
  \toprule
     &
      \multicolumn{2}{c|}{\textbf{\Cref{fig:scenarios}(a)}} &
      \multicolumn{4}{c|}{\textbf{\Cref{fig:scenarios}(b)}} &
      \multicolumn{4}{c}{\textbf{\Cref{fig:coordination_rescue}}} \\
     & \multicolumn{2}{c|}{$G=\{i\}$} 
     & \multicolumn{2}{c|}{$G=\{i\}$ or $G=\{ j\}$} 
     & \multicolumn{2}{c|}{$G=\{i,j\}$} 
     & \multicolumn{2}{c|}{$G=\{i\}$ or $G=\{ j\}$} 
     & \multicolumn{2}{c}{$G=\{i,j\}$} \\
    & $\mathtt{evade}$ & $\mathtt{steady}$ & $\mathtt{ignore}$ & $\mathtt{test}$ & $\mathtt{ignore}$ & $\mathtt{test}$ & $\mathtt{left}$ & $\mathtt{right}$ & $\mathtt{left }$ & $\mathtt{right}$\\
    \midrule
    $\rlike$ & 1-p & 0 & 0 & 0 & 0 & 0 & 0 & 0 & 0 & 0\\
   \rowcolor{lightgray}  $\rrisk$ & 1-p & p & 1 & 0 & 0 & 0 & 1 & 1 & 0 & 0\\
    $\rneg$ & 1-2p & 0 & 1 & 0 & 0 & 0 & 0 & 0 & 0 & 0\\
    \bottomrule
  \end{tabular}
  \caption{\label{tbl:pw-examples}Evaluation of member contribution values with $\T$ as in the examples from \Cref{fig:scenarios} and \Cref{fig:coordination_rescue}. $G$ and $\a$ are as specified, $v$ is clear from the context. In cases where $G=\{i,j\}$ the values for agents $i$ and $j$ are the same. We assume $p< 1-p$.}
\end{table}

\subsection{Evaluation of Proposed Functions}\label{sec:pwresp-evaluation}

In this section we check compliance of the member contribution functions described earlier with our axioms. The proof of the following proposition can be found in \Cref{app:pw_proofs}. Note that the coordination game as depicted in \Cref{fig:coordination_rescue} works as a counterexample for all negative results.

\begin{table}
\centering
\small
\begin{tabular}{cccccc} \toprule
                        \phantom{\textbf{Variant}}  & (KSym)        & (AMC)         & (AMC$^\sim$)  & (FMC)         & (FMC$^\sim$)\\ \midrule
                        $\rlike$          & $\times$      & \checkmark    & \checkmark     & \checkmark   & \checkmark \\
\rowcolor{lightgray}    $\rrisk$         & \checkmark    & $\times$      & \checkmark     & \checkmark   & \checkmark \\
                        $\rneg$           & \checkmark    & \checkmark    & \checkmark     & $\times$     & \checkmark \\ \bottomrule
\end{tabular}
\caption{\label{tbl:axioms_pw_1}
Summary of the compliance by the three proposed functions with the considered axioms. ``\checkmark'' denotes that the function satisfies the axiom whereas ``$\times$'' denotes that there are instances where the axiom is not satisfied.
}
\end{table}

\begin{restatable}{proposition}{pwrespaxioms}\label{prop:pw_resp_axioms}
The three functions $\rlike, \rrisk$ and $\rneg$ satisfy the given axioms as specified in \Cref{tbl:axioms_pw_1}.
\end{restatable}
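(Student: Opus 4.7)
The approach is to check each of the fifteen cells of Table~\ref{tbl:axioms_pw_1}. The twelve positive entries follow by unfolding the definitions of $\rlike$, $\rrisk$, and $\rneg$; the three negative entries are witnessed by the coordination-rescue example of Figure~\ref{fig:coordination_rescue}, with the concrete computations deferred to Appendix~\ref{app:pw_examples}. A basic observation used throughout is that $\Sigma(\T,G,v)$ and $Z^\sim(\T,G,v)$ depend only on the information branch $B^\sim(v)$, and that the likelihood $\ell(\epsilon\mid v,\sigma,\zeta)$ is conditioned on the actual node $v_\zeta$ carried by $\zeta$; hence both $\gamma(v)$ and $\omega(v,\zeta)$ are invariant under $\sim$. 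This immediately gives (KSym) for $\rrisk$ and $\rneg$. It fails for $\rlike$ because $\rlike(G,i,v,\a)=\gamma(c_v(\a))-\gamma(v)$ also evaluates $\gamma$ at the specific successor $c_v(\a)$, and $c_v(\a)\not\sim c_{v'}(\a)$ in general when $v\sim v'$.

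\textbf{Avoidance axioms.} Under the precondition of (AMC$^\sim$) the successor $c_v(\a)$ leads certainly to a desirable outcome in every information-equivalent copy, so $\gamma(c_v(\a))=0$ and $\omega(c_v(\a),\zeta)=0$ for every $\zeta$; the analogous facts hold at $v$ itself via the strategy that plays $\a$ at every $v'\sim v$, giving $\gamma(v)=0$ and $\omega(v,\zeta)=0$. All three functions therefore return $0$. Under the weaker (AMC) only the scenario with $v_\zeta=v$ is controlled, which still suffices for $\rlike$ (a min over a single argument) but not for $\rrisk$, whose maximisation over $\zeta$ may pick some $v_\zeta\neq v$ in which $\a$ is unsafe. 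For $\rneg$ under (AMC), every alternative action $\b_k$ satisfies $\rrisk(G,i,v,\b_k)=1$: choosing the scenario with $v_\zeta=v$ forces $\omega(c_v(\b_k),\zeta)=1$ (since $\b_k$ at $v$ certainly leads to a bad outcome) and $\omega(v,\zeta)=0$ (via the strategy playing $\a$). Consequently $\underline\rho(G,i,v)=\rrisk(G,i,v,\a)$, and $\rneg(G,i,v,\a)=0$.

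\textbf{Full-contribution axioms.} Under (FMC$^\sim$), $\omega(c_v(\a),\zeta)=1$ and $\omega(v,\zeta)=0$ uniformly in $\zeta$ (the latter using a strategy that plays some $\b_k$ at every $v'\sim v$), so $\rrisk(G,i,v,\a)=1$; since $\omega(c_v(\b_k),\zeta)=0$ uniformly too, $\rrisk(G,i,v,\b_k)=0$, hence $\underline\rho=0$ and $\rneg=1$; the same argument with $\gamma$ in place of $\omega$ yields $\rlike=1$. Under the weaker (FMC) the values $\rrisk(G,i,v,\b_k)$ can be strictly positive, because the max over $\zeta$ can pick some $v_\zeta\sim v$ in which $\b_k$ is unsafe; this lifts $\underline\rho$ above $0$ and drops $\rneg(G,i,v,\a)$ strictly below $1$. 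The coordination example in Figure~\ref{fig:coordination_rescue} makes this concrete.

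\textbf{Main obstacle.} The delicate case is $\rneg$ satisfying (AMC): one must show that the baseline $\underline\rho(G,i,v)$ cancels the (possibly positive) value $\rrisk(G,i,v,\a)$ exactly. This hinges on the sharp observation that every alternative $\b_k$ attains $\rrisk(G,i,v,\b_k)=1$ via the single scenario with $v_\zeta=v$, forcing the minimum in the definition of $\underline\rho$ to coincide with $\rrisk(G,i,v,\a)$. Keeping the two distinct roles of the formal node $v$ and the actual node $v_\zeta$ straight throughout the maximisation over $\zeta$ is the technical crux.
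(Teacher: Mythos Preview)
Your proposal is correct and follows essentially the same approach as the paper: verify each cell of the table by unfolding definitions, use the coordination game of Figure~\ref{fig:coordination_rescue} for all three negative entries, and identify the cancellation $\underline\rho(G,i,v)=\rrisk(G,i,v,\a)$ (forced by $\rrisk(G,i,v,\b_k)=1$ via the scenario with $v_\zeta=v$) as the key step for $\rneg$ under (AMC). One small omission: you argue (FMC$^\sim$) for $\rlike$ and $\rrisk$ but do not separately verify the stronger (FMC); this is handled exactly as in your (AMC) discussion, by selecting the single scenario with $v_\zeta=v$ to witness $\Delta\omega=1$ (respectively $\Delta\gamma=1$), which is also how the paper proceeds.
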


We now show that one has to choose between assigning unavoidable contribution values in some situations -- and thus violating an axiom like (AMC) -- or having situations where we cannot assign full contribution values -- and in turn violating an axiom like (FMC).
We will come back to this point later for similar axioms regarding aggregated outcome responsibility.

\begin{restatable}{proposition}{AMCvsFMC}\label{thm:AMCvsFMC}
No function can simultaneously satisfy (KSym), (AMC), and (FMC).
\end{restatable}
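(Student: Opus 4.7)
The plan is to derive a contradiction by exhibiting a single tree and information-equivalent pair of decision nodes at which (AMC) forces a contribution value of $0$ while (FMC) forces a value of $1$ for the same action; (KSym) will then rule out simultaneous satisfaction.

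Concretely, I will construct a small tree with one agent $i$ that begins at an ambiguity node splitting into two successor decision nodes $v, v' \in V_i$ which are placed into a single information set, so $v \sim v'$ and $A_v = A_{v'} = \{\a, \b\}$. The consequences are then chosen to be ``mirrored'': at $v$, taking $\a$ leads deterministically to a desirable outcome while taking $\b$ leads deterministically to an undesirable outcome; at $v'$, the roles are swapped, so $\a$ leads to an undesirable outcome and $\b$ to a desirable one. Each of the four successor nodes is simply an outcome node, and $\epsilon$ is populated accordingly. This is essentially a coordination-style scenario (matching the structure of the rescue example in \Cref{fig:coordination_rescue} that the paper already uses as a counterexample), and it is straightforward to write down formally as a tuple $\T = \langle \{i\}, V, E, \sim, (A_v), (c_v), (p_v), \epsilon\rangle$ with $V_p = \emptyset$.

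Now let $G = \{i\}$ and apply the axioms to $\a$. At node $v$, the hypothesis of (AMC) is met verbatim: $\a$ certainly leads to a desirable outcome and the only other action $\b$ certainly leads to an undesirable outcome. Hence any function satisfying (AMC) must set $\r(G, i, v, \a) = 0$. At node $v'$, the hypothesis of (FMC) is met verbatim: $\a$ certainly leads to an undesirable outcome and $\b$ certainly leads to a desirable outcome (and $|A_{v'}| = 2 > 1$ so the non-degeneracy condition $m > 0$ is satisfied). Hence any function satisfying (FMC) must set $\r(G, i, v', \a) = 1$. But $v \sim v'$ and $\a \in A_v = A_{v'}$, so (KSym) forces $\r(G, i, v, \a) = \r(G, i, v', \a)$, which contradicts $0 \neq 1$.

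The main — and really only — subtlety will be verifying that the informal construction is a legal decision tree under the formal definition, in particular that the equivalence relation $\sim$ and the action-set agreement $A_v = A_{v'}$ are consistent, and that both (AMC) and (FMC) are genuinely applicable on the nose (no probability nodes on the relevant branches, no hidden outcomes that would violate the ``certainly leads to'' clauses). I do not expect any calculation beyond these bookkeeping checks; once the tree is in place, the three axioms collide immediately on the single quantity $\r(G, i, v, \a)$.
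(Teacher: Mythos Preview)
Your proposal is correct and follows essentially the same approach as the paper: exhibit two information-equivalent decision nodes with mirrored outcomes so that (AMC) forces value $0$ for action $\a$ at one node while (FMC) forces value $1$ at the other, contradicting (KSym). The only cosmetic difference is that the paper uses a second agent's decision at the root (the coordination game of \Cref{fig:coordination_rescue}) where you use an ambiguity node, but this is inessential---indeed, the paper itself uses your single-agent-plus-ambiguity variant later in \Cref{fig:coordination_with_nature}.
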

\begin{proof}
Consider the coordination game depicted in \Cref{fig:coordination_rescue} and assume that the member contribution function $\r$ satisfies (KSym), (AMC) and (FMC). By (AMC), $\r(j, v_1, \mathtt{cinema}) = \r(j, v_2, \mathtt{theater}) = 0$ and by (FMC) we have $\r(j, v_1, \mathtt{theater}) = \r(j, v_2, \mathtt{cinema}) = 1$. But this contradicts (KSym) as $\r(j, v_1, \mathtt{cinema}) \neq \r(j, v_2, \mathtt{cinema})$.\qed
\end{proof}

\begin{figure}
\centering
    \begin{tikzpicture} [-, scale=.9, every node/.style={scale=.9}, node distance=1cm, decision node/.style={diamond, draw, aspect=1, minimum height= 7mm},  prob node/.style={regular polygon,regular polygon sides=4, draw, minimum height= 1cm},  dummy/.style = {}, good outcome/.style={circle, draw, minimum width=6mm}, bad outcome/.style={circle, draw, fill=gray, minimum width=6mm }, >= triangle 45]
\node[decision node] (1) [] [label=above left: $v_1$] {$i$};

\node[dummy] (a) [right of = 1] [] {};
\node[dummy] (a1) [right of = a] [] {};
\node[dummy] (a2) [right of = a1] [] {};

\node[decision node] (2) [above right =  of a2] [label=above left: $v_2$] {$j$};
\node[decision node] (3) [below right = of a2] [label=below left: $v_3$] {$j$};

\node[dummy] (b) [right of = 2] [] {};
\node[dummy] (b1) [right of = b] [] {};
\node[dummy] (b2) [right of = b1] [] {};
\node[dummy] (b3) [right of = b2] [] {};

\node[good outcome] (good2) [above right = .4cm of b2] [label = right: $w_1$]{};
\node[bad outcome] (bad2) [below right = .4cm of b2] [label = right: $w_2$] {};

\node[dummy] (c) [right of = 3] [] {};
\node[dummy] (c1) [right of = c] [] {};
\node[dummy] (c2) [right of = c1] [] {};

\node[bad outcome] (bad3) [above right = .4cm of c2] [label = right: $w_3$] {};
\node[good outcome] (good3) [below right = .4cm of c2] [label = right: $w_4$] {};

\path[->] (1) edge node[above] {$\mathtt{left}$}  (2) (1) edge node[below] {$\mathtt{right}$} (3) (2) edge node[above] {$\mathtt{left}$}  (good2) (2) edge node[below] {$\mathtt{right}$} (bad2) (3) edge node[below] {$\mathtt{right}$} (good3) (3) edge node[above] {$\mathtt{left}$} (bad3);

\draw[line width=2pt, dotted=on] (2) -- (3); 

\end{tikzpicture}

\caption{\label{fig:coordination_rescue} A coordination game where two autonomous vehicles (agents $i$ and $j$) approach each other on a street with two lanes, destined to crash into each other if they don't evade. The crash is avoided if and only if both vehicles move to their respective right or both move to their respective left.}
\end{figure}
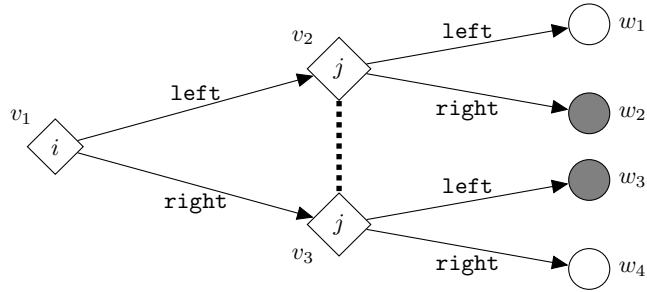

On the other hand, we can see in \Cref{tbl:axioms_pw_1} that  each combination of two out of the three axioms (AMC), (FMC), and (KSym) is possible.
We would argue that sacrificing (KSym) might not be satisfactory in most instances, as responsibility assignment should not depend on factors outside the agents' knowledge.
Instead, one could require (KSym) together with the corresponding weakened variants (AMC$^\sim$) and (FMC$^\sim$).

\section{Aggregation Functions}\label{sec:agg}

After studying member contribution functions and their axiomatic compliance, we will now turn to aggregation functions. Again, we start by defining a number of axioms applicable to these, before presenting a set of possible functions that could be used as aggregation functions and qualitatively compare them using the axioms.

\subsection{Axioms for Aggregation Functions}\label{sec:agg-axioms}

As a first axiom we consider monotonicity. If we input more entries of non-zero member contribution into the computation, we will want the outcome responsibility to increase as well. Similarly, if we input the same number of member contributions but some of them have larger values, we also expect the outcome responsibility to be larger (instead of just larger or equal, as is demanded in the definition of aggregation functions).

\begin{description}
    \item[(BSM$^+$)] {\em Bounded strict monotonicity under adding non-zero entries.} For $[x_1, \ldots, x_n] \in [0,1]^n$ and $x_{n+1} \in (0,1]$ it holds that \[\text{ either }\quad \agg([x_1, \ldots, x_n]) = 1 \quad \text{ or }\quad \agg([x_1, \ldots, x_n]) < \agg([x_1, \ldots, x_n, x_{n+1}]).\]
    \item[(BSM$^>$)] {\em Bounded strict monotonicity under increasing entries.} For $[x_1, \ldots, x_n], [y_1, \ldots, y_n] \in [0,1]^n$ where there is a  $j \in \{1,\dots,n\}$ s.t. $x_j < y_j$ and $x_i = y_i$ for all  $i \neq j$ it holds that
        \[\text{ either }\quad \agg([x_1, \ldots, x_n]) = 1 \quad \text{ or }\quad \agg([x_1, \ldots, x_n]) < \agg([y_1, \ldots, y_n]).\]
    \item[(BSM)] {\em Bounded strict monotonicity.} Both of the above hold.
\end{description}

Strengthening this idea, we can ask for the aggregation function to not only be monotone, but linear in each component. That is, if we keep the other input values fixed, the influence of one member contribution on the outcome should be linear.
In order to define this we use the following notational convention.
For an n-ary aggregation function $\agg^{(n)}$, an n-dimensional vector $\overline{\r} = (\r_1, \ldots, \r_n)$, and an index $i\leq n$, let $\agg^{(n)}_{\overline{\r},-i} (x) := \agg^{(n)} (\r_1, \ldots, \r_{i-1}, x, \r_{i+1},\ldots, \r_n)$ be the function obtained from $\agg^{(n)}$ by keeping all input values fixed as given by $\overline{\r}$, except for the i-th entry.
%
\begin{description}
    \item[(LIN)] {\em Linearity in each component.} For each $n\in \mathbb N_{\geq 1}$, for each input vector of length $n$,  $\overline{\r} \in [0,1]^n$ and for each index $i\leq n$, $\agg^{(n)}_{\overline{\r},{-i}}$ is a linear function. That is, \begin{align*}
        \agg^{(n)}_{\overline{\r},{-i}} (x) &= \agg^{(n)} (\r_1, \ldots, \r_{i-1}, x, \r_{i+1},\ldots, \r_n)\\
        &= ax + b \qquad \text{ for some }a,b\in \mathbb R. \end{align*}  
\end{description}

The third set of axioms concerns the behaviour of the aggregation function under certain special input values and can be interpreted as follows. Firstly, we can require that having been assigned full member contribution can not be relativized by other actions. 
\kick{In this case 1 is called an `annihilator' of the aggregation function.}
On the other hand, we can demand that adding a decision which had a member contribution of 0 does not change an agents', or their groups', assigned outcome responsibility. \kick{In this case, we say that 0 is a `neutral element' of the aggregation function.}
Lastly, one could require that doing something ``equally bad'' multiple times does not change the ascribed outcome responsibility.\kick{This would be called `idempotency' of the aggregation function
and stands in contrast to (BSM$^+$) defined above, which demands that if we add non-zero entries, the aggregated responsibility increases. In general, non-compliance with idempotency might be more desirable than compliance, depending on the modelled scenario.
Idempotency is for example relevant in situations as depicted in \Cref{fig:repeated-shooting} that we will discuss later in this section. }
\begin{description}
    \item[(AN1)] \emph{Annihilator 1.} For each $n\in \mathbb{N}_{\geq 1}$ and all $[x_1, \ldots, x_n] \in [0,1]^n$ it holds that if $x_k = 1$ for some $k \in \{1, \ldots, n\}$ then 
    \[\agg([x_1, \ldots, x_n]) = 1.\]
    
    \item[(NE0)] \emph{Neutral element 0.} For each $n \in \mathbb{N}_{\geq 1}$, all $[x_1, \ldots, x_n] \in [0,1]^n$ such that $x_i = 0$ for some $i \in \{1, \ldots, n\}$ it holds that 
    \[\agg([x_1, \ldots, x_n]) = \agg([x_1, \ldots, x_{i-1}, x_{i+1}, \ldots, x_n]).\]
    
    \item[(SIP)] \emph{Strong idempotency.} For all $k \in \mathbb{N}$ and $\x \in \bigcup_{n \in \mathbb{N}_{\geq 1}} [0,1]^n$ it holds that 
    \[\agg(\underbrace{\x \oplus \ldots \oplus \x}_\text{$k$ times}) = \agg(\x),\] 
    where $\x \oplus \ldots \oplus \x \in [0,1]^{kn}$ is the vector obtained by concatenating $\x$ with itself $k$ times. 
\end{description}

The next axiom, \emph{anonymity}, requires that it does not matter who exactly makes a certain decision, only the `value' of their decision 
matters for the final evaluation. Additionally, one can require that the time of the decision making has no effect. 

\begin{description}
    \item[(AAT)] {\em Anonymity (with respect to agents and time).} For any $\mathbf{x} \in [0,1]^n$ 
    and any permutation $\pi$ of the entries in $\mathbf x$ it holds that
        $ \agg(\mathbf x) = \agg( \pi (\mathbf x)). $
\end{description}

Finally, we consider an axiom which seems very natural if we hold that outcome responsibility and member contribution are intended to capture aspects of the same concept: that outcome responsibility is to be zero exactly in those cases where all member contributions which we provide as an input are also zero.

\begin{description}
    \item[(RED)] {\em Reducibility.} For any $\mathbf x \in [0,1]^n$ it holds that 
    $\agg(\mathbf x) = 0 \text{ iff }\mathbf x = (0,\ldots, 0).$
\end{description}

\subsection{Proposed Aggregation Functions}\label{sec:agg-functions}

Having developed a set of possibly desirable properties for aggregation functions we now continue by suggesting specific functions. We will discuss the full matrix of axiom compliance in \Cref{sec:agg-evaluation}. 
Let a tree $\T$, group of agents $G\subseteq I$ and outcome node $w \in V_o$ be given. We drop $\T$ and $G$ from the notation.
Note that the first proposed aggregator, the sum, is technically not an aggregation function in the sense of our definition in \Cref{sec:framework}, as it might exceed the interval $[0,1]$. We still include it here as it is a natural way to aggregate member contributions.

\paragraph{Variant 1, sum.} 
 \[ \mathbf{sum} [(\r(i, v, \mathtt{a}_{v \to w} ))_{i\in G, v \in V_i\cap H(w)} ] := \sum_{i\in G, v\in V_i \cap H(w)}  \r(i, v, \mathtt{a}_{v \to w}) \]

\paragraph{Variant 2, average.}
\[ \mathbf{avg} [(\r(i, v, \a_{v \to w} ))_{i\in G, v \in V_i\cap H(w)} ] := \frac{1}{|\Set{v\in V_i \cap H(w) \mid i\in G}|} \sum_{i\in G, v\in V_i \cap H(w)}  \r(i, v, \a_{v \to w}) \]
 
\paragraph{Variant 3, maximum.}
  \[ \mathbf{max} [(\r(i, v, \mathtt{a}_{v \to w} ))_{i\in G, v \in V_i\cap H(w)} ] := \max_{i\in G, v\in  V_i\cap H(w)}  \r(i,v, \mathtt{a}_{v \to w}) \]

 \paragraph{Variant 4, modified product.}
  \[ \mathbf{mprod} [(\r(i, v, \mathtt{a}_{v \to w} ))_{i\in G, v \in V_i\cap H(w)} ] := 1- \prod_{i\in G, v\in V_i \cap H(w)} (1-  \r(i, v, \mathtt{a}_{v \to w})). \]

\subsection{Evaluation of Proposed Aggregation Functions}\label{sec:agg-evaluation}

In addition to the axioms introduced in \Cref{sec:agg},  we include one more property in our axiomatic analysis. As we noted above, the sum technically is not an aggregation function. To make this explicit we add the following property that is satisfied by all aggregation functions in the sense of the definition in \Cref{sec:framework} but is violated by the sum.
\begin{description}
    \item[(01B)]{\em 0,1-boundedness.} $\agg (\bigcup\limits_{n\in \mathbb N_{\geq 1}}[0,1]) \subseteq [0,1]$.
\end{description}

\begin{proposition}\label{prop:agg_compliance}
    Compliance of the candidate aggregation functions with the axioms from \Cref{sec:agg-axioms} and (01B) is as depicted in \Cref{tbl:axioms-agg}.
\end{proposition}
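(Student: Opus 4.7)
The plan is to verify each cell of the compatibility table directly from the algebraic definitions of the four candidate functions, since every axiom is a pointwise statement about inputs in $[0,1]^n$ and outputs in $\mathbb{R}$. I would organise the verification function-by-function, handling the structural axioms (AAT), (LIN), (01B) first, then the boundary axioms (AN1), (NE0), (RED), then the monotonicity axioms (BSM$^+$), (BSM$^>$), (BSM), and finally (SIP). For every negative entry I would exhibit an explicit small counterexample in $[0,1]^n$.

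For $\mathbf{sum}$ all positive claims reduce to elementary properties of summation: anonymity, component-wise linearity, reducibility (entries are non-negative, so the sum vanishes iff all entries do), and strict increase under both adding and raising entries are immediate. A single example such as $[1,\tfrac12]$ refutes (01B) and (AN1), while any non-trivial vector refutes (SIP) because $k$-fold concatenation multiplies the sum by~$k$. For $\mathbf{avg}$ I would first exploit that at fixed arity $n$ the map is linear, symmetric, 0,1-bounded, strictly monotone in each component, and vanishes iff every entry vanishes, yielding (LIN), (AAT), (01B), (BSM$^>$), (RED). All remaining cases hinge on changing $n$: appending a value strictly smaller than the current average decreases it, so both (BSM$^+$) and (NE0) fail; (AN1) fails via $[1,0]$; and (SIP) holds because averaging is invariant under self-concatenation.

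For $\mathbf{max}$ the boundary axioms (AN1), (NE0), (RED), (01B), as well as (AAT) and (SIP), are direct from elementary properties of $\max$ over a finite, non-negative sample, while (LIN) fails because $\max$ is only piecewise linear, and both monotonicity axioms fail whenever the maximum sits at an index other than the one being added or raised, e.g.\ $\max([1,0]) = \max([1,\tfrac12])$. For $\mathbf{mprod}$ the crucial identity is
\[
  \mathbf{mprod}(r_1,\dots,r_n) \;=\; 1 - (1-r_i)\prod_{j\ne i}(1-r_j) \;=\; (1-C_i) + C_i\, r_i,
\]
with $C_i = \prod_{j\ne i}(1-r_j)$, which immediately yields (LIN) and, since $C_i > 0$ exactly when $\mathbf{mprod}<1$, both strict-monotonicity axioms. (AAT) is obvious; (AN1) follows because a factor $1-1=0$ annihilates the product; (NE0) because $1-0=1$ is the multiplicative neutral element; (RED) because a product of values in $[0,1]$ equals~$1$ iff every factor is~$1$; and (01B) because each factor lies in $[0,1]$. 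Only (SIP) fails, witnessed by any single $r\in(0,1)$ because repetition raises $(1-r)$ to a higher power.

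The main obstacle I anticipate is keeping the arity-varying versus arity-fixed axioms cleanly separated: (BSM$^+$), (NE0), and (SIP) all manipulate $n$, whereas (BSM$^>$), (LIN), (AAT), and (AN1) work at fixed $n$. This distinction drives every non-obvious row for $\mathbf{avg}$; one must also remember to invoke the ``$\mathbf{agg}=1$'' escape clause in the strict-monotonicity axioms exactly at the saturation point, which matters for $\mathbf{mprod}$ as soon as some $r_j = 1$.
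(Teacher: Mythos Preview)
Your approach is essentially the paper's own: the appendix proves the proposition via four lemmas, one per function, each walking through the axioms with elementary algebraic arguments and small explicit counterexamples, just as you outline. Your organisation (grouping the axioms into structural, boundary, monotonicity, idempotency) is slightly tidier, but the substance is identical.

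One point deserves flagging. You assert that $\mathbf{max}$ satisfies (NE0), whereas the paper's table and the lemma statement in the appendix record it as failing. Mathematically you are right: on inputs from $[0,1]$, deleting a zero entry cannot change the maximum. The paper's own proof item for this case reads only ``Clear by definition'' without an argument, and indeed the table entry appears to be an error in the paper rather than in your proposal. A similar internal inconsistency occurs for $\mathbf{sum}$ and (01B): the table shows a checkmark, but both you and the paper's appendix correctly argue that $\mathbf{sum}$ violates (01B). So your verification is sound; it simply does not reproduce these two cells of the table as printed.
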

\noindent
We abstain from discussing the proof here and instead refer to \Cref{sec:appendix-evaluation-agg}.

\begin{table}
\centering
\footnotesize
\begin{tabular}{lccccccccc} \toprule
\phantom{\textbf{Variant}}    & (01B)  & (BSM$^+$) & (BSM$^>$) & (LIN) & (AN1) & (NE0) & (SIP) &  (AAT) & (RED)   \\ 
\midrule
                     $\mathbf{sum}$     & \checkmark    & \checkmark    & \checkmark   & \checkmark & $\times$   & \checkmark    & $\times$     & \checkmark &  \checkmark \\
\rowcolor{lightgray} $\mathbf{avg}$     & \checkmark    & $\times$      & \checkmark   & \checkmark & $\times$   & $\times$      & \checkmark   & \checkmark & \checkmark \\ 
                     $\mathbf{max}$     & \checkmark    & $\times$      & $\times$     & $\times$   & \checkmark & $\times$      & \checkmark   & \checkmark & \checkmark \\
\rowcolor{lightgray} $\mathbf{mprod}$   & \checkmark    & \checkmark    & \checkmark   & \checkmark & \checkmark & \checkmark    & $\times$     & \checkmark & \checkmark \\
\bottomrule
\end{tabular}
\caption{\label{tbl:axioms-agg}
Summary of axiom compliance by the proposed aggregation functions.}
\end{table}

The sum fulfills monotonicity, anonymity and reducibility. However, it fails 0,1-boundedness and thus is not an aggregation function. Averaging the sum remains 0,1-bounded but fails monotonicity under adding entries. As the same reasoning applies to other normalisation methods we move away from considering normalisations of the sum. We mention the maximum for reasons of completeness, despite it clearly not fulfilling any monotonicity requirements. The same would hold for the product, which is why it is not not mentioned in detail and does not appear in the table. However, a modification of the product seems promising:
The issue with the first three proposed functions seems to be that they do not correctly include additional contributions when considering several actions, thus either failing 0,1-boundedness or monotonicity. The rationale behind the modified product becomes clearer when we look at an example of a repeated action. Consider the scenario depicted in \Cref{fig:repeated-shooting}, and consider the path where agent $i$ selects continue twice and then repair. If we assume that continuing once with a 90\% chance of disaster produces a member contribution of $0.9$, then in the given example continuing the first time adds $0.9$ to the outcome responsibility, continuing the second times adds $0.9$ of the remaining interval (i.e. $0.09$), and the final decision to repair does not reduce any of the previously gained contributions. 
This example also shows that the modified product does not fulfill (SIP) and hints at non-compliance with (SIP) being in fact desirable in certain cases.

From the proposed aggregation functions, $\mathbf{mprod}$ is the only one that is a proper aggregation function for any input member contribution values and which fulfills all the described axioms except for (SIP).

Moreover, we can even show the stronger statement that $\mathbf{mprod}$ is characterised by a combination of the discussed axioms and properties.\footnote{We require (LIN) for this characterisation. As one might argue against this axiom due to it being very strong and rather technical it is important to point out that even without the characterisation result, $\mathbf{mprod}$ is the most highly axiom compliant function out of the ones we suggested.}

\begin{restatable}{theorem}{characterisationmprod}\label{thm:characterisation-mprod}
The axioms (LIN), (AAT), (NE0) and (AN1) together uniquely characterise $\mathbf{mprod}$ among all aggregation functions.
\end{restatable}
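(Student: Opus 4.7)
The plan is to first observe (from the evaluation in \Cref{tbl:axioms-agg}) that $\mathbf{mprod}$ satisfies (LIN), (AAT), (NE0) and (AN1), so one direction of the characterisation is immediate. For the uniqueness direction, I would proceed by induction on the arity $n$, showing that any extended aggregation function $\agg$ satisfying the four axioms must coincide with $\mathbf{mprod}^{(n)}$ on $[0,1]^n$.

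The base case $n=1$ is immediate, since by definition of an extended aggregation function $\agg^{(1)}$ is the identity on $[0,1]$, as is $\mathbf{mprod}^{(1)}$. For the inductive step, assume $\agg^{(n)}(x_1,\ldots,x_n) = 1-\prod_{i=1}^{n}(1-x_i)$, and consider $\agg^{(n+1)}$. By (LIN) applied to the last component, there exist functions $\alpha,\beta\colon [0,1]^n\to\mathbb{R}$ such that
\[
\agg^{(n+1)}(x_1,\ldots,x_n,y) = \alpha(x_1,\ldots,x_n)\,y + \beta(x_1,\ldots,x_n).
\]
Substituting $y=0$ and applying (NE0) gives $\beta(x_1,\ldots,x_n) = \agg^{(n)}(x_1,\ldots,x_n) = 1-\prod_{i=1}^{n}(1-x_i)$ by the induction hypothesis. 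Substituting $y=1$ and applying (AN1) gives $\alpha(x_1,\ldots,x_n)+\beta(x_1,\ldots,x_n) = 1$, hence $\alpha(x_1,\ldots,x_n) = \prod_{i=1}^{n}(1-x_i)$. Combining these two and simplifying yields
\[
\agg^{(n+1)}(x_1,\ldots,x_n,y) = 1 - \prod_{i=1}^{n}(1-x_i)\cdot(1-y),
\]
which (renaming $y = x_{n+1}$) is exactly $\mathbf{mprod}^{(n+1)}(x_1,\ldots,x_{n+1})$. This closes the induction.

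The role of (AAT) in this argument is mainly to guarantee consistency and to simplify the presentation: (LIN) provides linearity in \emph{every} component individually, but by (AAT) it suffices to carry out the computation above for a single distinguished coordinate (the last one), since any permutation of the arguments leaves $\agg$ unchanged. Without anonymity, one would have to verify separately that the formulas obtained by singling out different coordinates are mutually consistent; with (AAT), this is automatic.

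I do not expect a serious obstacle here — the argument is a clean linear-algebra-style interpolation at the two endpoints $y=0$ and $y=1$, combined with induction. The only point requiring care is to articulate the extraction of $\alpha,\beta$ cleanly from (LIN) (which asserts linearity of the slice, and hence the existence of such an affine representation), and to justify that the boundary substitutions $y=0$ and $y=1$ are legitimate in $[0,1]$ so that (NE0) and (AN1) genuinely apply.
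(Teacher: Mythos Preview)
Your proof is correct and follows essentially the same inductive strategy as the paper: use (LIN) to write $\agg^{(n+1)}$ as an affine function of one coordinate, then pin down the two coefficients via (NE0) at $0$ and (AN1) at $1$, with the base case handled by the identity requirement on $\agg^{(1)}$. The only cosmetic difference is that the paper first derives the inclusion--exclusion expansion $\sum_i \r_i - \sum_{i<i'} \r_i\r_{i'} + \cdots$ and then separately checks that this equals $1-\prod_i(1-\r_i)$, whereas you work directly with the product form throughout; your observation about the (largely auxiliary) role of (AAT) is also accurate.
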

\noindent
We will only present the main idea of the proof here. For the full details we refer to \Cref{sec:appendix-evaluation-agg}.

\begin{proof}[Sketch]
Linearity in each component together with a subset of the other axioms constrains the function to 
\begin{align*}
\agg([ &\r_1,\ldots,\r_n]) = \\
&b+ \sum\limits_{i=1}^n a_i \r_i + \sum\limits_{\substack{i,i'= 1 \\ i<i'}}^n a_{i,i'} \r_i \r_{i'} + \ldots + \sum\limits_{\substack{i^{(1)},\ldots, i^{(n-1)} = 1  \\ i^{(1)}<\ldots < i^{(n-1)}}}^n a_{i^{(1)},\ldots,i^{(n)}} \r_{i^{(1)}}\ldots\r_{i^{(n-1)}}
+ a_{1,\ldots,n} \r_1\ldots\r_n. 
\end{align*}
Further axioms restrict the values of the coefficients as follows.
(AAT) implies that $a_1 = \ldots = a_n =: a^{(1)}$, $a_{1,2}= \ldots = a_{n-1, n} =:a^{(2)}$ etc.
Since an aggregation function is the identity on $[0,1]$, we know that $a^{(1)} = 1$, (NE0) additionally implies that $b=0$ and with $(AN1)$ we can inductively prove that $a^{(k)} = (-1)^{k-1} \quad \forall k\geq 1$, which results precisely in the function $\mathbf{mprod}$.
\end{proof}

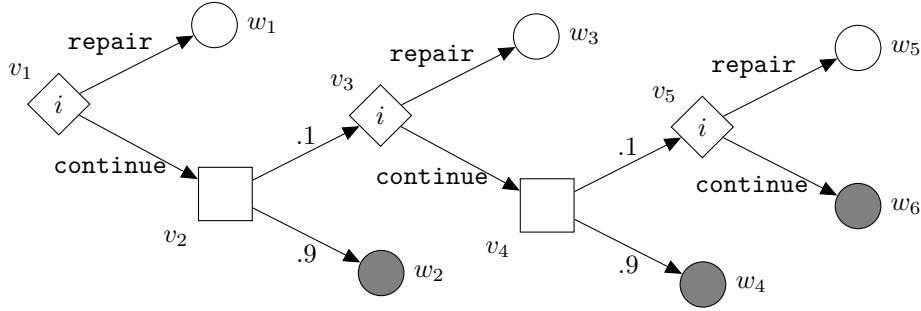
\begin{figure}
\centering
\begin{tikzpicture} [->, scale=1, every node/.style={scale=1}, node distance=1cm, decision node/.style={diamond, draw, aspect=1, minimum height= 7mm},  prob node/.style={regular polygon,regular polygon sides=4, draw, minimum height= 1cm},  dummy/.style = {}, good outcome/.style={circle, draw, minimum width=6mm}, bad outcome/.style={circle, draw, fill=gray, minimum width=6mm }, >= triangle 45]
\node[decision node] (v1) [] [label=above left: $v_1$] {$i$};

\node[dummy] (a) [right of = 1] [] {};

\node[good outcome] (w1) [above right =  of a] [label=right: $w_1$] { };
\node[prob node] (p1) [below right = of a] [label=below left: $v_2$] { };

\node[dummy] (b) [right of = p1] [] {};

\node[decision node] (v2) [above right =  of b] [label=above left: $v_3$]{$i$};
\node[bad outcome] (w2) [below right =  of b] [label=right: $w_2$] {};

\node[dummy] (c) [right of = v2] [] {};

\node[good outcome] (w3) [above right = of c] [label=right: $w_3$] {};
\node[prob node] (p2) [below right = of c] [label=below left: $v_4$] {};

\node[dummy] (d) [right of = p2] [] {};

\node[decision node] (v3) [above right = of d] [label=above left: $v_5$] {$i$};
\node[bad outcome] (w4) [below right = of d] [label=right: $w_4$] {};

\node[dummy] (e) [right of = v3] [] {};

\node[good outcome] (w5) [above right = of e] [label=right: $w_5$] {};
\node[bad outcome] (w6) [below right = of e] [label=right: $w_6$] {};

\path (v1) edge node[above] {$\mathtt {repair \qquad}$} (w1) (v1) edge node[below] {$\mathtt{continue \, \qquad}$} (p1) (p1) edge node[above] {$.1$}  (v2) (p1) edge node[below] {$.9$} (w2) (v2) edge node[above] {$\mathtt {repair \qquad}$} (w3) (v2) edge node[below] {$\mathtt{continue \, \qquad}$} (p2) (p2) edge node[above] {$.1$} (v3) (p2) edge node[below] {$.9$} (w4) (v3) edge node[above] {$\mathtt {repair \qquad}$} (w5) (v3) edge node[below] {$\mathtt{continue \, \qquad}$} (w6);

\end{tikzpicture}
\caption{\label{fig:repeated-shooting} Representation of an example scenario of a repeated action. The operator of a large machine can either ignore a warning of the machine and continue using it, knowing that this has a high chance of irreversible damage or perform a repair. At some point in time the last chance to repair the machine is reached.}
\end{figure}

\section{Outcome Responsibility Functions}\label{sec:outcomeresp}

After exploring member contribution functions and aggregation functions on their own we will now turn our attention to outcome responsibility as a combination of the two previously mentioned classes of functions. Again, we first propose a set of axioms that capture interesting properties before checking axiom compliance for a set of proposed functions.

\subsection{Axioms for Outcome Responsibility Functions}\label{sec:outcomeresp-axioms}

The following four axioms give upper and lower bounds on the assigned responsibility and in that sense follow a similar rationale as (AMC) and (FMC) for member contribution functions. 
The first two axioms present upper bounds by determining that there are cases in which no responsibility is to be ascribed. The first one captures the idea that if an agent has complete control over the situation and steers it towards a desired outcome, that agent should not be assigned responsibility.
The second axiom requires that agents have initial strategies which avoid responsibility.
The fact that there is a qualitatively different action available
is seen by some authors as one of the fundamental criteria for assigning responsibility. For example Braham and van Hees \cite{braham2018} introduce an \emph{avoidance opportunity condition}, which requires that a person has an opportunity to avoid contributing to an outcome in order for responsibility ascription to be appropriate. A similar -- but not equivalent -- idea is captured by the second axiom. 

\begin{description}
\item[(CC)]{\em Complete control.}
  Given a tree $\T$, a singleton group $G = \{i\}$ and a desirable outcome node $w\in V_o \setminus \epsilon$ such that the path to $w$ involves only decisions of agent $i$ and no decision nodes of other agents nor ambiguity or probability nodes, $H(w) \subseteq V_i$. Then $\R (w, i) = 0$.
  
\item[(NUR)] {\em No unavoidable responsibility.}
  Given a tree $\T$, each group $G\subseteq I$ must have an original strategy 
  that is guaranteed to avoid any outcome responsibility. That is, denoting the root node of $\T$ by $v_r$, for every group of agents $G\subseteq I$ there must be a strategy $\sigma\in\Sigma(\T,G,v_r)$, such that $\R(G,w) = 0$ for all outcomes $w\in V_o\cap \sigma(V_G\cap B^{\sim}(v_r))$ reachable with that strategy.
\end{description}

The next two axioms are lower bounds in the sense that they require a certain amount of responsibility to be attributed. Specifically, they request the absence of (individual or group-wise) responsibility voids.

\begin{description}
\item[(NRV)] {\em No responsibility voids.}
  Given a tree $\T$ with no uncertainty nodes, $V_a=V_p=\emptyset$,
  and a set of undesirable outcomes $\epsilon\neq V_o$.
  Then for each undesirable outcome $w\in\epsilon$, 
  some group is at least partially responsible, i.e., there exists a $G\subseteq I$ with $\R(G,w) > 0$.

\item[(NIRV)]{\em No individual responsibility voids.}
  Given a tree $\T$ with no uncertainty nodes, $V_a=V_p=\emptyset$,
  and a set of undesirable outcomes $\epsilon\neq V_o$. Then for each undesirable outcome $w\in\epsilon$, there is an individual agent $i \in I$ with $\R(\{i\},w)>0$.
  \kick{This axiom is a strengthening of NRV to only consider single agents, i.e., it rules out that each individual voter is assigned zero responsibility.}
\end{description}
\noindent
Note that as we allow group composition to inform member contribution functions it is not necessarily the case that (NRV) together with (RED) implies (NIRV).

The absence of individual responsibility in the presence of group responsibility is discussed repeatedly in the literature. 
That is, cases where (NRV) holds, but (NIRV) for members of the same group does not. 
For example, in the presentation by Duijf \cite{duijf2018} the premises we provide in our axioms place us in a cooperative decision context with no external uncertainty but with coordination uncertainty. This combination is shown to allow for the absence of individual responsibility in the presence of group responsibility. Similarly, Braham and van Hees \cite{braham2018} show that this combination is possible if a mechanism does not allow for the attribution of different aspects of an outcome to varying individuals but only for an overall attribution of responsibility at large, which is the case in our representation. 

In addition to the above axioms that are specifically designed for outcome responsibility functions, we also say that an outcome responsibility function satisfies (KSym), (AMC) or (FMC) -- as well as their weaker versions (AMC$^\sim$) and (FMC$^\sim$) -- if the underlying member contribution function satisfies the respective axiom.
Using this we prove the following impossibilities.
All proofs omitted in this section can be found in \Cref{app:outcome_proofs}.

\begin{restatable}{proposition}{NURvsNRV}\label{thm:NURvsNRV}
For any outcome responsibility function $\R$ the following impossibilities hold.
\begin{itemize}
    \item[(i)] $\R$ can not simultaneously satisfy (KSym), (FMC), and (NUR). 
    \item[(ii)] $\R$ can not simultaneously satisfy (KSym), (NRV), and (NUR).
\end{itemize}
\end{restatable}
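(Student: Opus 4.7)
The plan is to derive both impossibilities by exhibiting trees on which the three axioms in question produce incompatible constraints. Both parts are naturally attacked on the coordination game of \Cref{fig:coordination_rescue}, which already served as the counterexample for \Cref{thm:AMCvsFMC}; the key features to exploit are the information set $v_2\sim v_3$ and the fact that $\agg^{(1)}$ is the identity on $[0,1]$.

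For part (i), I would first apply (FMC) at agent $j$'s two decision nodes. At $v_2$, the action \texttt{right} certainly produces the undesirable $w_2$ while \texttt{left} produces the desirable $w_1$, so (FMC) forces $\r(\{j\},j,v_2,\mathtt{right}) = 1$, and symmetrically $\r(\{j\},j,v_3,\mathtt{left}) = 1$. Using $v_2\sim v_3$, (KSym) then immediately extends these to $\r(\{j\},j,v_2,\mathtt{left}) = \r(\{j\},j,v_3,\mathtt{right}) = 1$, so every relevant member contribution of $j$ equals $1$. Since $\R(\{j\},w)$ is a $1$-ary aggregation of a single member contribution of value $1$, we obtain $\R(\{j\},w) = 1$ for the unavoidable undesirable outcome reached under either uniform strategy of $\{j\}$ (namely $w_3$ under \texttt{left} and $w_2$ under \texttt{right}). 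This directly contradicts (NUR) for $\{j\}$.

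For part (ii), I would assume for contradiction that $\R$ satisfies (KSym), (NRV), and (NUR) on the coordination game. Since $i$ decides only at the root and $\agg^{(1)}$ is the identity, the path structure forces $\R(\{i\},w_1) = \R(\{i\},w_2)$ and $\R(\{i\},w_3) = \R(\{i\},w_4)$; likewise, (KSym) across $v_2\sim v_3$ yields $\R(\{j\},w_1) = \R(\{j\},w_3)$ and $\R(\{j\},w_2) = \R(\{j\},w_4)$. Applying (NUR) to $\{i\}$ and $\{j\}$ therefore zeroes an entire pair of symmetric $\R$-values in each case, and applying (NUR) to $\{i,j\}$ zeroes one $\R(\{i,j\},w_k)$. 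I would then argue by case analysis over the three chosen NUR strategies that for some undesirable outcome $w^\star\in\{w_2,w_3\}$ every non-empty $G\subseteq I$ must satisfy $\R(G,w^\star) = 0$, contradicting (NRV).

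The hardest part is closing the case analysis in (ii): because (NUR) only demands the existence of an avoidance strategy, one must rule out every consistent triple of strategies $(\sigma_i,\sigma_j,\sigma_{ij})$. The crucial move is to track how (KSym) couples the member contributions feeding into $\R(\{i,j\},w_2)$ and $\R(\{i,j\},w_3)$ through the aggregation, so that the apparently safe choices $(L,L)$ or $(R,R)$ for $\{i,j\}$ still leave at least one of $w_2,w_3$ without any positively responsible group; if the coordination game alone proves too permissive for this conclusion, a slight enlargement of the tree forcing $\{i,j\}$ into a non-safe NUR choice should close the argument.
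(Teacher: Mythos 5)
Your argument for part (i) is correct and coincides with the paper's own proof: (FMC) applied at $v_2$ and $v_3$ of \Cref{fig:coordination_rescue}, propagated through the information set by (KSym), forces all four of $j$'s member contributions to equal $1$, so $\R(\{j\},w)=1$ for every outcome reachable under either uniform strategy of $\{j\}$, contradicting (NUR).

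Part (ii), however, contains a genuine gap, and the fallback you only gesture at is in fact unavoidable: the coordination game of \Cref{fig:coordination_rescue} admits an outcome responsibility function satisfying (KSym), (NRV) and (NUR) simultaneously, so no case analysis over NUR-witnessing strategies on that tree can close. Concretely, set $\r(\{i\},i,v_1,\mathtt{left})=\r(\{j\},j,v_2,\mathtt{left})=\r(\{j\},j,v_3,\mathtt{left})=0$ and $\r(\{i\},i,v_1,\mathtt{right})=\r(\{j\},j,v_2,\mathtt{right})=\r(\{j\},j,v_3,\mathtt{right})=1$, and set all member contributions for the group $\{i,j\}$ to $0$. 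This is (KSym)-compliant; the uniform strategy $\mathtt{left}$ witnesses (NUR) for each of $\{i\}$, $\{j\}$ and $\{i,j\}$ (note $\agg(0,0)=0$ for any aggregation function, being its infimum); and (NRV) holds because $\R(\{j\},w_2)=1$ and $\R(\{i\},w_3)=1$. The structural feature you overlook is that in the coordination game each group itself controls which node of the information set is reached, so it can zero out its contributions along one self-chosen branch while remaining positively responsible on the other -- exactly what my example exploits. The paper therefore uses a different tree (\Cref{fig:coordination_with_nature}): a single agent $i$ has two information-equivalent decision nodes $v_1\sim v_2$ whose safe actions are opposite, and the split between them is caused by a root move outside $i$'s control, so that \emph{every} uniform strategy of $\{i\}$ reaches both a desirable and an undesirable outcome. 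There (KSym) forces $\r(\{i\},i,v_1,\a)=\r(\{i\},i,v_2,\a)$ for each action $\a$, (NUR) forces this common value to be $0$ for at least one action, and that action is precisely the one leading to one of the two undesirable outcomes, which then becomes a responsibility void for the only nontrivial group, contradicting (NRV). If you rebuild such a tree yourself, take care to model the root so that the premise $V_a=V_p=\emptyset$ of (NRV) is still satisfied and that no additional groups arise that could absorb the responsibility.
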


\subsection{Evaluation}\label{sec:outcomeresp-evaluation}

In the following we show the compliance of certain classes of responsibility functions with these new axioms, with the full overview being visible in \Cref{tbl:axioms_resp_2}. 

\begin{table}
\centering
\small
\begin{tabular}{lcccc|ccc} \toprule
                        \phantom{\textbf{Variant}}    & (NRV)        & (NIRV)        & (NUR)         & (CC)       & (KSym)    & (AMC)     & (FMC)       \\ \midrule
                        $\agg \circ \rlike$               & \checkmark   & \checkmark    & \checkmark    & \checkmark   & $\times$      & \checkmark    & \checkmark\\
\rowcolor{lightgray}    $\agg \circ \rrisk$                & \checkmark   & \checkmark    & $\times$      & \checkmark  & \checkmark    & $\times$      & \checkmark\\
                        $\agg \circ \rneg$                 & $\times$     & $\times$      & \checkmark    & \checkmark  & \checkmark    & \checkmark    & $\times$ \\ \bottomrule
\end{tabular}
\caption{\label{tbl:axioms_resp_2}
Summary of axiom compliance by the three variants of member contribution from \Cref{sec:pwresp-axioms} combined with an aggregation functions $\agg$ that satisfies (BSM) and (RED). For the axioms initially defined for member contribution functions on the right hand side of the table we say that the combined function satisfies the axiom if its member contribution function satisfies it.}
\end{table}

\begin{restatable}{proposition}{evalcombs}\label{prop:eval_combs}
Let $\agg$ be an aggregation function fulfilling (BSM) and (RED) and let $\r^{\triangle}$ for $\triangle \in \Set{\mathrm{like}, \mathrm{risk}, \mathrm{negl}}$ be as defined earlier. Then the compliance of the responsibility function $\R^\triangle := \agg \circ \r^\triangle$ with the axioms {(NRV)}, {(NIRV)}, {(NUR)} and {(CC)} is as presented in \Cref{tbl:axioms_resp_2}.
\end{restatable}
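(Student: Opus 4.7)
The plan is to exploit the hypotheses on $\agg$: since $\agg$ satisfies (RED), $\R^\triangle(G,w) = 0$ holds iff every member-contribution value $\r^\triangle(G,i,v,\a_{v\to w})$ along the path from the root to $w$ vanishes, and by (BSM) a single strictly positive member contribution forces $\R^\triangle(G,w) > 0$. Each entry of \Cref{tbl:axioms_resp_2} therefore reduces to a purely member-level property that can be read off from the definitions of $\rlike$, $\rrisk$, and $\rneg$.

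For the positive entries I would argue as follows. Under (CC), the singleton $\{i\}$ can unilaterally steer to $w$ from every node on the path, so $\gamma \equiv 0$ and $\omega(\cdot,\zeta) \equiv 0$ along the path; consequently every member contribution, and therefore also $\rneg = \rrisk - \underline\rho$, vanishes. For (NUR) under $\rlike$, I would fix a strategy $\sigma^*$ realising the minimum $\gamma(v_r)$ and argue by a Bellman-type step that $\gamma(v) = \gamma(c_v(\sigma^*(v)))$ at every $v\in V_G$ reachable under $\sigma^*$, so $\Delta\gamma = 0$ at each $G$-decision on any reachable path. For (NUR) under $\rneg$, the inner minimum in $\underline\rho(G,i,v)$ guarantees at every $G$-decision node an action with $\rneg = 0$; choosing such an action at every node yields the required strategy. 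For (NRV) and (NIRV) under $\rlike$ (resp.\ $\rrisk$), I would traverse the path to any $w\in\epsilon$ and track the risk measure $\gamma_{\{i\}}$ (resp.\ $\omega(\cdot,\zeta^*)$ for a suitable worst-case $\zeta^*$) with $i$ ranging over the agent deciding at each node: this quantity equals $1$ at $w$ but is strictly below $1$ earlier (since $\epsilon\neq V_o$ and the tree has no probability or ambiguity nodes), hence a node where it strictly increases witnesses a strictly positive member contribution for an individual, yielding both (NRV) and (NIRV).

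For the three negative entries I would rely on the existing examples. \Cref{fig:scenarios}(a) gives $\rrisk(\{i\},i,v_1,\mathtt{evade}) = 1-p$ and $\rrisk(\{i\},i,v_1,\mathtt{steady}) = p$, both strictly positive, so every strategy for $\{i\}$ produces positive $\R^{\mathrm{risk}}$ at every reachable outcome, refuting (NUR). \Cref{fig:coordination_rescue} combined with \Cref{tbl:pw-examples} shows $\rneg \equiv 0$ on every branch of the coordination game for all three groups considered, so $\R^{\mathrm{negl}}(G,w_2) = 0$ for every $G\subseteq I$, simultaneously refuting (NRV) and (NIRV) for $\rneg$. The main obstacle I expect is formalising the Bellman step for (NUR) under $\rlike$ in the presence of information sets: uniformity of strategies in $\Sigma(\T,G,v)$ can obstruct the naïve identity $\gamma(v) = \min_{\a\in A_v}\gamma(c_v(\a))$ when $v$ shares an information set with a node outside $B(v)$, which I would sidestep by lifting the optimality of $\sigma^*$ globally from the root rather than choosing actions locally. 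An analogous precaution is needed in the (NRV)/(NIRV) argument for $\rrisk$, where the worst-case scenario at $v$ need not coincide with the worst-case scenario at $c_v(\a_{v\to w})$.
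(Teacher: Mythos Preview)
Your plan mirrors the paper's: both reduce the outcome-level claims to member-level ones via (RED)/(BSM) and then verify those directly. Your ``node where the risk measure first jumps to $1$'' argument for (NIRV) is exactly the paper's ``decision node closest to $w$ with $B(v)\nsubseteq\epsilon$'', and your use of the coordination game for the (NRV)/(NIRV) failure of $\rneg$ coincides with the paper's.

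The one substantive methodological difference concerns the failure of (NUR) under $\rrisk$: you exhibit \Cref{fig:scenarios}(a) as a direct counterexample, whereas the paper instead invokes \Cref{thm:NURvsNRV}(i) (since $\rrisk$ satisfies both (KSym) and (FMC), it cannot also satisfy (NUR)). Both routes are valid; yours is more self-contained, the paper's ties the failure back to the earlier impossibility.

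One gap remains in your sketch for (NUR) under $\rlike$. You correctly diagnose that information sets obstruct the local identity $\gamma(v)=\min_{\a}\gamma(c_v(\a))$, but lifting to a globally optimal $\sigma^*$ from the root does not repair this. In the coordination game of \Cref{fig:coordination_rescue} with $G=\{j\}$, any uniform strategy (say $\sigma^*=\mathtt{left}$) still reaches $v_3$ when $i$ plays $\mathtt{right}$, and there $\gamma(c_{v_3}(\mathtt{left}))-\gamma(v_3)=\gamma(w_3)-\gamma(v_3)=1-0=1$, so $\rlike$ is not zero along every $\sigma^*$-reachable path. The paper's own justification at this point is a single clause (``an agent always has an action available with zero responsibility'') that is vulnerable to the same objection, so this is a shared subtlety rather than a flaw specific to your proposal; a fully rigorous argument would need either an additional structural hypothesis on information sets or a refined reading of $\gamma$ at successor nodes.
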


\noindent
This result immediately leads us to the following conclusion.

\begin{corollary}
Given a tree $\T$, a group of agents $G \subseteq I$ and an outcome node $w\in V_o$. Then the following two statements hold.
    \begin{itemize}
        \item[(i)] $\mathbf{mprod} \circ \rrisk $ fulfills all the axioms we discussed in \Cref{sec:agg} (except for (SIP)) as well as (KSym), (FMC), (NRV), (NIRV), and (CC).
        \item [(ii)] $\mathbf{mprod} \circ \rneg $ fulfills all the axioms we discussed in \Cref{sec:agg} (except for (SIP)) as well as (KSym), (AMC), (NUR), and (CC).
    \end{itemize}
\end{corollary}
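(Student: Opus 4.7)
The plan is to observe that this corollary is essentially a bookkeeping exercise that assembles three prior results: Proposition~\ref{prop:pw_resp_axioms} (member contribution axioms, Table~\ref{tbl:axioms_pw_1}), Proposition~\ref{prop:agg_compliance} (aggregation axioms, Table~\ref{tbl:axioms-agg}), and Proposition~\ref{prop:eval_combs} (combined axioms, Table~\ref{tbl:axioms_resp_2}). Since the corollary asserts no new property beyond what is tabulated there, no fresh calculations are needed; the task is to verify that the hypotheses of Proposition~\ref{prop:eval_combs} are actually met by $\mathbf{mprod}$ and then to cite the appropriate rows.

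First I would check the prerequisites for applying Proposition~\ref{prop:eval_combs} with $\agg=\mathbf{mprod}$. Reading off Table~\ref{tbl:axioms-agg}, $\mathbf{mprod}$ satisfies both (BSM$^+$) and (BSM$^>$), hence (BSM) by definition, and also (RED). Thus $\mathbf{mprod}$ is an admissible aggregation function in the sense required by Proposition~\ref{prop:eval_combs}, which is the single non-obvious fact the argument hinges on.

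Next, for part~(i), I would read the row for $\rrisk$ in Table~\ref{tbl:axioms_resp_2}, obtaining (NRV), (NIRV), and (CC) for $\mathbf{mprod}\circ\rrisk$, and inherit (KSym) and (FMC) from the underlying member contribution as recorded in Table~\ref{tbl:axioms_pw_1} (this inheritance is explicitly built into the convention adopted in Section~\ref{sec:outcomeresp-axioms}). Finally I would append the list of all aggregation axioms from Section~\ref{sec:agg-axioms} other than (SIP) using Table~\ref{tbl:axioms-agg}. For part~(ii) the argument is entirely parallel: the $\rneg$ row of Table~\ref{tbl:axioms_resp_2} provides (NUR) and (CC), Table~\ref{tbl:axioms_pw_1} provides (KSym) and (AMC) by inheritance, and Table~\ref{tbl:axioms-agg} supplies all aggregation axioms except (SIP).

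The main obstacle is not mathematical but organisational: one has to be careful that the axiom lists in the two bullet points of the corollary align exactly with the positive entries of the three tables, in particular that (FMC) and (NRV)/(NIRV) genuinely appear on the $\rrisk$ side while (AMC) and (NUR) genuinely appear on the $\rneg$ side, and that (SIP) is the only aggregation axiom that must be excluded. Once these alignments are verified, the proof is a direct citation of the three earlier propositions.
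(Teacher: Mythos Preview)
Your proposal is correct and matches the paper's approach: the paper presents this as an immediate corollary of Proposition~\ref{prop:eval_combs} (together with the earlier tables), offering no separate proof, and your argument is precisely the bookkeeping verification that $\mathbf{mprod}$ satisfies (BSM) and (RED) so that Proposition~\ref{prop:eval_combs} applies, followed by reading off the relevant rows of Tables~\ref{tbl:axioms_pw_1}, \ref{tbl:axioms-agg}, and \ref{tbl:axioms_resp_2}.
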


The results presented here show that if we constrain ourselves to functions that satisfy (KSym), we have to choose between fulfilling (NUR) and thus ensure some upper bound on the ascribed responsibility or fulfilling (NIRV) and (FMC) and ensure some lower bound. The two combinations discussed in the above lemma provide solutions to both choices, while being maximally compliant with the other presented axioms.

\subsection{Application to Example Scenario}

In the example presented in \Cref{fig:repeated-shooting} we obtain the following outcome responsibility values using the two functions determined above: $\mathbf{mprod}\circ \rrisk$ and $\mathbf{mprod} \circ \rneg$. The results are summarised in \Cref{tbl:outcome-application}. Note that as selecting $\mathtt{repair}$ always results in a member contribution of zero, $\rrisk$ and $\rneg$ actually return the same results.

\begin{table}
\centering
\small
\begin{tabular}{l C{1.2cm}C{1.2cm}C{1.2cm}C{1.2cm}C{1.2cm}C{1.2cm}} \toprule
  & $w_1$      & $w_2$       & $w_3$ & $w_4$ & $w_5$    & $w_6$   \\ \midrule
$\mathbf{mprod} \circ \rlike$ & 0  & 0.9   & 0.9  & 0.99  & 0.99     & 1    \\
\rowcolor{lightgray}  $\mathbf{mprod} \circ \rrisk$                & 0   & 0.9    & 0.9     & 0.99  & 0.99  & 1\\
\bottomrule
\end{tabular}
\caption{\label{tbl:outcome-application}
Summary of the outcome responsibility values for all outcomes in the example in \Cref{fig:repeated-shooting}.}
\end{table}

\section{Conclusion}\label{sec:summary}

We employed an axiomatic analysis of responsibility quantification in complex decision situations containing interactions and uncertainty.
The model is based on representing appropriate decision scenarios via an extension of extensive form game trees and assigning (moral) responsibility for a certain outcome to a group of agents as a value between 0 and 1, based on the change in the probability of an undesirable outcome.

We split responsibility functions for outcomes into member contribution functions and aggregation functions. The former can be viewed as functions for member contribution assigned to individuals as part of a group, while the latter determine the aggregation of these individual values into a group evaluation.
This separation enables a more fine-grained analysis of the functions' properties. 
We translated different kinds of desiderata from the literature on (moral) responsibility into this setting and formulated respective axioms for the contributing functions. 

We presented an impossibility result for three axioms regarding member contribution. This elucidates the intuition that a principled and sensible assignment of responsibility in coordination games is impossible. 
We were then able to single out two promising candidates that satisfy the two opposing points of view in situations such as the coordination game, and that satisfy all of the other considered axioms for member contribution functions.

We were able to derive a characterisation of an appealing aggregation function, the modified product, using five intuitive axioms. 
For outcome responsibility functions we prove compliance with respective axioms for classes of combinations of member contribution functions and aggregation functions. The previously mentioned impossibility result transfers to the case of outcome responsibility. When we require that agents are assigned the same member contribution in all situations they cannot discern between,
then a responsibility function has to either allow for voids or for unavoidable responsibility. The same pair of member contribution functions which we determined earlier, paired with the selected aggregation function, is able to capture each side of this division.

The work presented in this paper can be continued in various directions.
First of all, we can use the functions which we determined here as fulfilling most of the desired properties in real-world application scenarios. Next, a characterisation of the space of all three functions using appropriate axioms would be ideal.
Another promising direction for future research is the introduction of weighted aggregation functions that are not indifferent to the time at which a decision was taken. This enables the study of responsibility in settings where more recent actions or decisions should carry more weight than more distant ones. 

\subsubsection*{Acknowledgements.}
We thank 
Alexandru Baltag,
Markus Brill, 
Rupert Klein, 
the rest of the Efficient Algorithm research group at Technical University Berlin, 
and 
the copan collaboration at the Potsdam Institute for Climate Impact Research
for fruitful discussions and 
comments.

The research leading to these results received funding from the Deutsche Forschungsgemeinschaft (DFG, German Research Foundation) under Germany's Excellence Strategy -- The Berlin Mathematics Research Center MATH+ (EXC-2046/1, project ID: 390685689) and under grant BR~4744/2-1.

\subsubsection*{Statements and Declarations.}
The authors have no financial or proprietary interests in any material discussed in this article.

\bibliographystyle{splncs04}
\bibliography{references.bib}

\clearpage

\appendix

\section{Example Evaluations of Member Contribution Functions}\label{app:pw_examples}

Recall the examples.
\renewcommand{\thefigure}{1}
\begin{figure}
\centering
{
    {\bf (a)}\begin{tikzpicture} [->, scale=.9, every node/.style={scale=.9}, node distance=1cm, decision node/.style={diamond, draw, aspect=1, minimum height= 7mm},  prob node/.style={regular polygon,regular polygon sides=4, draw, minimum height= 1cm},  dummy/.style = {}, good outcome/.style={circle, draw, minimum width=6mm}, bad outcome/.style={circle, draw, fill=gray, minimum width=6mm }, >= triangle 45]
\node[decision node] (1) [] [label=above left: $v_1$] {$i$};

\node[dummy] (a) [right of = 1] [] {};
\node[dummy] (a2) [right of = a] [] {};

\node[prob node] (2) [above right =  of a2] [label=above left: $v_2$] { };
\node[decision node] (3) [below right = of a2] [label=below left: $v_3$] { \ \ \ \ \ };

\node[dummy] (b) [right of = 2] [] {};
\node[dummy] (b2) [right of = b] [] {};
\node[dummy] (b3) [right of = b2] [] {};

\node[good outcome] (good2) [above right = .4cm of b2] [label = right: $w_1$]{}; 
\node[bad outcome] (bad2) [below right = .4cm of b2] [label = right: $w_2$] {};

\node[dummy] (c) [right of = 3] [] {};
\node[dummy] (c2) [right of = c] [] {};

\node[good outcome] (good3) [above right = .4cm of c2] [label = right: $w_3$] {};
\node[bad outcome] (bad3) [below right = .4cm of c2] [label = right: $w_4$] {};

\path (1) edge node[above] {$\mathtt{evade \ \ }$} (2) (1) edge node[below] {$\mathtt{steady \ \ }$} (3) (2) edge node[above] {$p$}  (good2) (2) edge node[below] {$1 - p$} (bad2) (3) edge node[above=5pt] {$\mathtt{deer \ moves\ \ }$} (good3) (3) edge node[below=3pt] {$\mathtt{deer \ stays\ \ \ }$} (bad3) ;

\end{tikzpicture}
}
{
    {\bf (b)}
    \begin{tikzpicture} [-, scale=.9, every node/.style={scale=.9}, node distance=1cm, decision node/.style={diamond, draw, aspect=1, minimum height= 7mm},  prob node/.style={regular polygon,regular polygon sides=4, draw, minimum height= 1cm},  dummy/.style = {}, good outcome/.style={circle, draw, minimum width=6mm}, bad outcome/.style={circle, draw, fill=gray, minimum width=6mm }, >= triangle 45]
\node[decision node] (1) [] [label=above left: $v_1$] {$i$};

\node[dummy] (a) [right of = 1] [] {};
\node[dummy] (a2) [right of = a] [] {};

\node[decision node] (2) [above right =  of a2] [label=above left: $v_2$] {$j$}; %
\node[decision node] (3) [below right = of a2] [label=below left: $v_3$] {$j$};

\node[dummy] (b) [right of = 2] [] {};
\node[dummy] (b2) [right of = b] [] {};
\node[dummy] (b3) [right of = b2] [] {};

\node[bad outcome] (good2) [above right = .4cm of b2] [label = right: $w_1$]{};
\node[good outcome] (bad2) [below right = .4cm of b2] [label = right: $w_2$] {};

\node[dummy] (c) [right of = 3] [] {};
\node[dummy] (c2) [right of = c] [] {};

\node[good outcome] (bad3) [above right = .4cm of c2] [label = right: $w_3$] {};
\node[good outcome] (good3) [below right = .4cm of c2] [label = right: $w_4$] {};

\path[->] (1) edge node[above] {$\mathtt{ignore}$}  (2) (1) edge node[below] {$\mathtt{test}$} (3) (2) edge node[above] {$\mathtt{ignore}$}  (good2) (2) edge node[below] {$\mathtt{test}$} (bad2) (3) edge node[below] {$\mathtt{test}$} (good3) (3) edge node[above] {$\mathtt{ignore}$} (bad3);

\draw[line width=2pt, dotted=on] (2) -- (3); 

\end{tikzpicture}}

 \caption{(a) An autonomous vehicle (agent $i$) approaches a deer on a street. It can either try to evade the deer with a computed success rate of $p$ or keep steady, resulting in a crash if the deer does not move.
 (b) $i$ and $j$ are two instances in a distributed system which may perform a costly test for the failure of a component, with $i$ coming before $j$. Both may either perform the test or pass, but $j$ has no way of knowing whether $i$ performed the test. If they both pass a failure is overlooked.
 }
\end{figure}
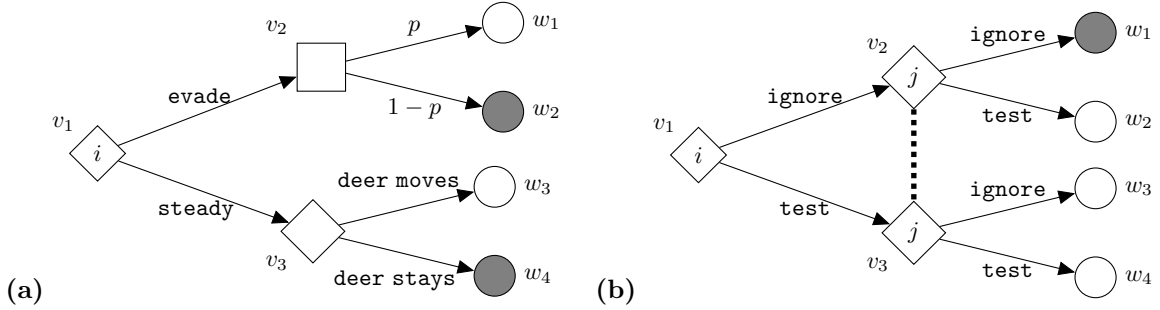

\renewcommand{\thefigure}{2}
\begin{figure}
\centering
    \begin{tikzpicture} [-, scale=.9, every node/.style={scale=.9}, node distance=1cm, decision node/.style={diamond, draw, aspect=1, minimum height= 7mm},  prob node/.style={regular polygon,regular polygon sides=4, draw, minimum height= 1cm},  dummy/.style = {}, good outcome/.style={circle, draw, minimum width=6mm}, bad outcome/.style={circle, draw, fill=gray, minimum width=6mm }, >= triangle 45]
\node[decision node] (1) [] [label=above left: $v_1$] {$i$};

\node[dummy] (a) [right of = 1] [] {};
\node[dummy] (a1) [right of = a] [] {};
\node[dummy] (a2) [right of = a1] [] {};

\node[decision node] (2) [above right =  of a2] [label=above left: $v_2$] {$j$};
\node[decision node] (3) [below right = of a2] [label=below left: $v_3$] {$j$};

\node[dummy] (b) [right of = 2] [] {};
\node[dummy] (b1) [right of = b] [] {};
\node[dummy] (b2) [right of = b1] [] {};
\node[dummy] (b3) [right of = b2] [] {};

\node[good outcome] (good2) [above right = .4cm of b2] [label = right: $w_1$]{};
\node[bad outcome] (bad2) [below right = .4cm of b2] [label = right: $w_2$] {};

\node[dummy] (c) [right of = 3] [] {};
\node[dummy] (c1) [right of = c] [] {};
\node[dummy] (c2) [right of = c1] [] {};

\node[bad outcome] (bad3) [above right = .4cm of c2] [label = right: $w_3$] {};
\node[good outcome] (good3) [below right = .4cm of c2] [label = right: $w_4$] {};

\path[->] (1) edge node[above] {$\mathtt{left}$}  (2) (1) edge node[below] {$\mathtt{right}$} (3) (2) edge node[above] {$\mathtt{left}$}  (good2) (2) edge node[below] {$\mathtt{right}$} (bad2) (3) edge node[below] {$\mathtt{right}$} (good3) (3) edge node[above] {$\mathtt{left}$} (bad3);

\draw[line width=2pt, dotted=on] (2) -- (3); 

\end{tikzpicture}

\caption{A coordination game where two autonomous vehicles (agents $i$ and $j$) approach each other on a street with two lanes, destined to crash into each other if they don't evade. The crash is avoided if and only if both vehicles move to their respective right or both move to their respective left side.}
\end{figure}

We will now compute the member contribution values for these examples given by the three different functions proposed in \Cref{sec:pwresp-functions}. 

\paragraph{$\T$ as in \Cref{fig:scenarios} (a):}

\begin{align*}
    \rlike (\{ i\} , i , v_1, \mathtt{evade})&= \Delta \gamma (v_1, \mathtt{evade}) = \gamma (v_2) - \gamma (v_1) \\
    &= \min_{\sigma \in \Sigma(\T, \{i\}, v_2)} \min_{\zeta \in Z^\sim(\T,\{i\},v_2)} \ell(\epsilon | v_2,\sigma, \zeta) - \min_{\sigma \in \Sigma(\T, \{i\}, v_1)} \min_{\zeta \in Z^\sim(\T,\{i\},v_1)} \ell(\epsilon | v_1,\sigma, \zeta) \\
    &= (1-p)-0 = 1-p
\end{align*}

\begin{align*}
    \rlike (\{ i\} , i , v_1, \mathtt{steady})&= \Delta \gamma (v_1, \mathtt{evade}) = \gamma (v_3) - \gamma (v_1) \\
    &= \min_{\sigma \in \Sigma(\T, \{i\}, v_3)} \min_{\zeta \in Z^\sim(\T,\{i\},v_3)} \ell(\epsilon | v_3,\sigma, \zeta) - \min_{\sigma \in \Sigma(\T, \{i\}, v_1)} \min_{\zeta \in Z^\sim(\T,\{i\},v_1)} \ell(\epsilon | v_1,\sigma, \zeta) \\
    &= 0-0 = 0
\end{align*}

\begin{align*}
    \rrisk (\{ i\} , i , v_1, \mathtt{evade})&= \max_{\zeta \in Z^\sim(\T, \{i\}, v_1)} \Delta \omega(v_1, \zeta, \mathtt{evade}) = \max_{\zeta \in Z^\sim(\T, \{i\}, v_1)} [\omega (v_2, \zeta) - \omega(v_1, \zeta)] \\
    &= \max_{\zeta \in Z^\sim(\T, \{i\}, v_1)} [\min_{\sigma\in\Sigma(\T, \{i\}, v_2)} \ell(\epsilon | v_2,\sigma,\zeta) - \min_{\sigma\in\Sigma(\T, \{i\}, v_1)} \ell(\epsilon | v_1,\sigma,\zeta)]\\
    &= \max_{\zeta \in \{\mathtt{deer\ moves, deer\ stays}\}} [\min_{\sigma\in\Sigma(\T, \{i\}, v_2)} \ell(\epsilon | v_2,\sigma,\zeta) - \min_{\sigma\in\Sigma(\T, \{i\}, v_1)} \ell(\epsilon | v_1,\sigma,\zeta)]\\
    &= \max\{ [(1-p) - \min_{\sigma\in\Sigma(\T, \{i\}, v_1)} \ell(\epsilon | v_1,\sigma,\mathtt{deer\ moves})],  \\
    & \ [(1-p) - \min_{\sigma\in\Sigma(\T, \{i\}, v_1)} \ell(\epsilon | v_1,\sigma,\mathtt{deer\ stays})]\} \\
    &= \max \{ (1-p) - 0, (1-p) - (1-p) \} = 1-p 
\end{align*}

\begin{align*}
    \rrisk (\{ i\} , i , v_1, \mathtt{steady})&= \max_{\zeta \in Z^\sim(\T, \{i\}, v_1)} \Delta \omega(v_1, \zeta, \mathtt{steady}) = \max_{\zeta \in Z^\sim(\T, \{i\}, v_1)} [\omega (v_3, \zeta) - \omega(v_1, \zeta)] \\
    &= \max_{\zeta \in Z^\sim(\T, \{i\}, v_1)} [\min_{\sigma\in\Sigma(\T, \{i\}, v_3)} \ell(\epsilon | v_3,\sigma,\zeta) - \min_{\sigma\in\Sigma(\T, \{i\}, v_1)} \ell(\epsilon | v_1,\sigma,\zeta)]\\
    &= \max_{\zeta \in \{\mathtt{deer\ moves, deer\ stays}\}} [\min_{\sigma\in\Sigma(\T, \{i\}, v_3)} \ell(\epsilon | v_3,\sigma,\zeta) - \min_{\sigma\in\Sigma(\T, \{i\}, v_1)} \ell(\epsilon | v_1,\sigma,\zeta)]\\
    &= \max\{ [0 - \min_{\sigma\in\Sigma(\T, \{i\}, v_1)} \ell(\epsilon | v_1,\sigma,\mathtt{deer\ moves})],  \\
    & \ [1 - \min_{\sigma\in\Sigma(\T, \{i\}, v_1)} \ell(\epsilon | v_1,\sigma,\mathtt{deer\ stays})]\} \\
    &= \max \{ 0 - 0, 1 - (1-p) \} = p 
\end{align*}

Assume $p < 1-p$, thus $\underline\rho(\{i\} , i, v_1) = \min\limits_{\a \in A_{v_1}} \rrisk(\{i\}, i, v_1, \a) = p $.

\begin{align*}
    \rneg(\{i\}, i, v_1, \mathtt{evade}) &=  \max_{\zeta \in Z^\sim(\T, \{i\}, v_1)} \Delta \omega(v_1, \zeta, \mathtt{evade}) - \underline\rho(v_1) \\
    &= \rrisk (\{ i\} , i , v_1, \mathtt{evade}) - \underline\rho(v_1) = (1-p) - p = 1-2p
\end{align*}

This is larger than $0$, as we assumed that $p< 1-p$.

\begin{align*}
    \rneg(\{i\}, i, v_1, \mathtt{steady}) &=  \max_{\zeta \in Z^\sim(\T, \{i\}, v_1)} \Delta \omega(v_1, \zeta, \mathtt{steady}) - \underline\rho(v_1) \\
    &= \rrisk (\{ i\} , i , v_1, \mathtt{steady}) - \underline\rho(v_1) = p - p = 0
\end{align*}

\paragraph{$\T$ as in \Cref{fig:scenarios} (b):}

\begin{align*}
    \rlike (\{ i\} , i , v_1, \mathtt{ignore})&= \Delta \gamma (v_1, \mathtt{ignore}) = \gamma (v_2) - \gamma (v_1) \\
    &= \min_{\sigma \in \Sigma(\T, \{i\}, v_2)} \min_{\zeta \in Z^\sim(\T,\{i\},v_2)} \ell(\epsilon | v_2,\sigma, \zeta) - \min_{\sigma \in \Sigma(\T, \{i\}, v_1)} \min_{\zeta \in Z^\sim(\T,\{i\},v_1)} \ell(\epsilon | v_1,\sigma, \zeta) \\
    &= 0 - 0 = 0
\end{align*}

\begin{align*}
    \rlike (\{ i\} , i , v_1, \mathtt{test})&= \Delta \gamma (v_1, \mathtt{evade}) = \gamma (v_3) - \gamma (v_1) \\
    &= \min_{\sigma \in \Sigma(\T, \{i\}, v_3)} \min_{\zeta \in Z^\sim(\T,\{i\},v_3)} \ell(\epsilon | v_3,\sigma, \zeta) - \min_{\sigma \in \Sigma(\T, \{i\}, v_1)} \min_{\zeta \in Z^\sim(\T,\{i\},v_1)} \ell(\epsilon | v_1,\sigma, \zeta) \\
    &= 0-0 = 0
\end{align*}

\begin{align*}
    \rlike (\{ i,j\} , i , v_1, \mathtt{ignore})&= \Delta \gamma (v_1, \mathtt{ignore}) \\
    &= \gamma (v_2) - \gamma (v_1) \\
    &= \min_{\sigma \in \Sigma(\T, \{i,j\}, v_2)} \min_{\zeta \in Z^\sim(\T,\{i,j\},v_2)} \ell(\epsilon | v_2,\sigma, \zeta) \\
    & \phantom{\min_{\sigma \in \Sigma(\T, \{i,j\}, v_2)}} - \min_{\sigma \in \Sigma(\T, \{i,j\}, v_1)} \min_{\zeta \in Z^\sim(\T,\{i,j\},v_1)} \ell(\epsilon | v_1,\sigma, \zeta) \\
    &= 0 - 0 = 0
\end{align*}

\begin{align*}
    \rlike (\{ i,j\} , i , v_1, \mathtt{test})&= \Delta \gamma (v_1, \mathtt{evade}) \\
    &= \gamma (v_3) - \gamma (v_1) \\
    &= \min_{\sigma \in \Sigma(\T, \{i,j\}, v_3)} \min_{\zeta \in Z^\sim(\T,\{i,j\},v_3)} \ell(\epsilon | v_3,\sigma, \zeta) \\
    & \phantom{\min_{\sigma \in \Sigma(\T, \{i,j\}, v_2)}} - \min_{\sigma \in \Sigma(\T, \{i,j\}, v_1)} \min_{\zeta \in Z^\sim(\T,\{i,j\},v_1)} \ell(\epsilon | v_1,\sigma, \zeta) \\
    &= 0-0 = 0
\end{align*}

\begin{align*}
    \rrisk (\{ i\} , i , v_1, \mathtt{ignore})&= \max_{\zeta \in Z^\sim(\T, \{i\}, v_1)} \Delta \omega(v_1, \zeta, \mathtt{ignore}) = \max_{\zeta \in Z^\sim(\T, \{i\}, v_1)} [\omega (v_2, \zeta) - \omega(v_1, \zeta)] \\
    &= \max_{\zeta \in Z^\sim(\T, \{i\}, v_1)} [\min_{\sigma\in\Sigma(\T, \{i\}, v_2)} \ell(\epsilon | v_2,\sigma,\zeta) - \min_{\sigma\in\Sigma(\T, \{i\}, v_1)} \ell(\epsilon | v_1,\sigma,\zeta)]\\
    &= \max_{\zeta \in \{\mathtt{pass, test}\}} [\min_{\sigma\in\Sigma(\T, \{i\}, v_2)} \ell(\epsilon | v_2,\sigma,\zeta) - \min_{\sigma\in\Sigma(\T, \{i\}, v_1)} \ell(\epsilon | v_1,\sigma,\zeta)]\\
    &= \max\{ [1- \min_{\sigma\in\Sigma(\T, \{i\}, v_1)} \ell(\epsilon | v_1,\sigma,\mathtt{ignore})],  \\
    & \ [0 - \min_{\sigma\in\Sigma(\T, \{i\}, v_1)} \ell(\epsilon | v_1,\sigma,\mathtt{test})]\} \\
    &= \max \{ 1 - 0, 0 - 0 \} = 1 
\end{align*}

\begin{align*}
    \rrisk (\{ i\} , i , v_1, \mathtt{test})&= \max_{\zeta \in Z^\sim(\T, \{i\}, v_1)} \Delta \omega(v_1, \zeta, \mathtt{test}) = \max_{\zeta \in Z^\sim(\T, \{i\}, v_1)} [\omega (v_3, \zeta) - \omega(v_1, \zeta)] \\
    &= \max_{\zeta \in Z^\sim(\T, \{i\}, v_1)} [\min_{\sigma\in\Sigma(\T, \{i\}, v_3)} \ell(\epsilon | v_3,\sigma,\zeta) - \min_{\sigma\in\Sigma(\T, \{i\}, v_1)} \ell(\epsilon | v_1,\sigma,\zeta)]\\
    &= \max_{\zeta \in \{\mathtt{pass, test}\}} [\min_{\sigma\in\Sigma(\T, \{i\}, v_3)} \ell(\epsilon | v_3,\sigma,\zeta) - \min_{\sigma\in\Sigma(\T, \{i\}, v_1)} \ell(\epsilon | v_1,\sigma,\zeta)]\\
    &= \max\{ [0 - \min_{\sigma\in\Sigma(\T, \{i\}, v_1)} \ell(\epsilon | v_1,\sigma,\mathtt{ignore})], [0 - \min_{\sigma\in\Sigma(\T, \{i\}, v_1)} \ell(\epsilon | v_1,\sigma,\mathtt{test})]\} \\
    &= \max \{ 0 - 0, 0-0 \} = 0
\end{align*}

\begin{align*}
    \rrisk (\{ i,j\} , i , v_1, \mathtt{ignore})&= \max_{\zeta \in Z^\sim(\T, \{i, j\}, v_1)} \Delta \omega(v_1, \zeta, \mathtt{ignore}) = \max_{\zeta \in Z^\sim(\T, \{i,j\}, v_1)} [\omega (v_2, \zeta) - \omega(v_1, \zeta)] \\
    &= \max_{\zeta \in Z^\sim(\T, \{i,j\}, v_1)} [\min_{\sigma\in\Sigma(\T, \{i,j\}, v_2)} \ell(\epsilon | v_2,\sigma,\zeta) - \min_{\sigma\in\Sigma(\T, \{i,j\}, v_1)} \ell(\epsilon | v_1,\sigma,\zeta)]\\
    &= 0 - 0 = 0
\end{align*}

\begin{align*}
    \rrisk (\{ i,j\} , i , v_1, \mathtt{test})&= \max_{\zeta \in Z^\sim(\T, \{i, j\}, v_1)} \Delta \omega(v_1, \zeta, \mathtt{test}) = \max_{\zeta \in Z^\sim(\T, \{i,j\}, v_1)} [\omega (v_3, \zeta) - \omega(v_1, \zeta)] \\
    &= \max_{\zeta \in Z^\sim(\T, \{i,j\}, v_1)} [\min_{\sigma\in\Sigma(\T, \{i,j\}, v_3)} \ell(\epsilon | v_3,\sigma,\zeta) - \min_{\sigma\in\Sigma(\T, \{i,j\}, v_1)} \ell(\epsilon | v_1,\sigma,\zeta)]\\
    &= 0 - 0 = 0
\end{align*}

$\underline\rho(\{i\}, i, v_1) = \min\limits_{\a \in A_{v_1}} \rrisk(\{i\}, i, v_1, \a) = 0 $, hence

\begin{align*}
    \rneg(\{i\}, i, v_1, \mathtt{ignore}) &=  \max_{\zeta \in Z^\sim(\T, \{i\}, v_1)} \Delta \omega(v_1, \zeta, \mathtt{ignore}) - \underline\rho(v_1) \\
    &= \rrisk (\{ i\} , i , v_1, \mathtt{ignore}) - \underline\rho(v_1) = 1 - 0 = 1
\end{align*}

\begin{align*}
    \rneg(\{i\}, i, v_1, \mathtt{test}) &=  \max_{\zeta \in Z^\sim(\T, \{i\}, v_1)} \Delta \omega(v_1, \zeta, \mathtt{test}) - \underline\rho(v_1) \\
    &= \rrisk (\{ i\} , i , v_1, \mathtt{test}) - \underline\rho(v_1) = 0 - 0 = 0
\end{align*}

And $\underline\rho(\{i,j\}, i ,v_1) = \min\limits_{\a \in A_{v_1}} \rrisk(\{i, j\}, i, v_1, \a) = 0 $, hence

\begin{align*}
    \rneg(\{i,j\}, i, v_1, \mathtt{ignore}) &=  \max_{\zeta \in Z^\sim(\T, \{i,j\}, v_1)} \Delta \omega(v_1, \zeta, \mathtt{ignore}) - \underline\rho(v_1) \\
    &= \rrisk (\{ i,j\} , i , v_1, \mathtt{ignore}) - \underline\rho(\{i,j\}, i, v_1) = 0 - 0 = 0
\end{align*}

\begin{align*}
    \rneg(\{i,j\}, i, v_1, \mathtt{test}) &=  \max_{\zeta \in Z^\sim(\T, \{i, j\}, v_1)} \Delta \omega(v_1, \zeta, \mathtt{test}) - \underline\rho(\{i,j\}, i, v_1) \\
    &= \rrisk (\{ i,j\} , i , v_1, \mathtt{test}) - \underline\rho(\{i,j\}, i, v_1) = 0 - 0 = 0
\end{align*}

\paragraph{$\T$ as in \Cref{fig:coordination_rescue}:}

\begin{align*}
    \rlike (\{ i\} , i , v_1, \mathtt{left})&= \Delta \gamma (v_1, \mathtt{left}) = \gamma (v_2) - \gamma (v_1) \\
    &= \min_{\sigma \in \Sigma(\T, \{i\}, v_2)} \min_{\zeta \in Z^\sim(\T,\{i\},v_2)} \ell(\epsilon | v_2,\sigma, \zeta) - \min_{\sigma \in \Sigma(\T, \{i\}, v_1)} \min_{\zeta \in Z^\sim(\T,\{i\},v_1)} \ell(\epsilon | v_1,\sigma, \zeta) \\
    &= 0 - 0 = 0
\end{align*}

\begin{align*}
    \rlike (\{ i\} , i , v_1, \mathtt{right})&= \Delta \gamma (v_1, \mathtt{right}) = \gamma (v_3) - \gamma (v_1) \\
    &= \min_{\sigma \in \Sigma(\T, \{i\}, v_3)} \min_{\zeta \in Z^\sim(\T,\{i\},v_3)} \ell(\epsilon | v_3,\sigma, \zeta) - \min_{\sigma \in \Sigma(\T, \{i\}, v_1)} \min_{\zeta \in Z^\sim(\T,\{i\},v_1)} \ell(\epsilon | v_1,\sigma, \zeta) \\
    &= 0-0 = 0
\end{align*}

\begin{align*}
    \rlike (\{ i,j\} , i , v_1, \mathtt{left})&= \Delta \gamma (v_1, \mathtt{left}) \\
    &= \gamma (v_2) - \gamma (v_1) \\
    &= \min_{\sigma \in \Sigma(\T, \{i,j\}, v_2)} \min_{\zeta \in Z^\sim(\T,\{i,j\},v_2)} \ell(\epsilon | v_2,\sigma, \zeta) \\
    & \phantom{\min_{\sigma \in \Sigma(\T, \{i,j\}, v_2)}} - \min_{\sigma \in \Sigma(\T, \{i,j\}, v_1)} \min_{\zeta \in Z^\sim(\T,\{i,j\},v_1)} \ell(\epsilon | v_1,\sigma, \zeta) \\
    &= 0 - 0 = 0
\end{align*}

\begin{align*}
    \rlike (\{ i,j\} , i , v_1, \mathtt{right})&= \Delta \gamma (v_1, \mathtt{right}) \\
    &= \gamma (v_3) - \gamma (v_1) \\
    &= \min_{\sigma \in \Sigma(\T, \{i,j\}, v_3)} \min_{\zeta \in Z^\sim(\T,\{i,j\},v_3)} \ell(\epsilon | v_3,\sigma, \zeta) \\
    & \phantom{\min_{\sigma \in \Sigma(\T, \{i,j\}, v_2)}} - \min_{\sigma \in \Sigma(\T, \{i,j\}, v_1)} \min_{\zeta \in Z^\sim(\T,\{i,j\},v_1)} \ell(\epsilon | v_1,\sigma, \zeta) \\
    &= 0-0 = 0
\end{align*}

\begin{align*}
    \rrisk (\{ i\} , i , v_1, \mathtt{left})&= \max_{\zeta \in Z^\sim(\T, \{i\}, v_1)} \Delta \omega(v_1, \zeta, \mathtt{left}) = \max_{\zeta \in Z^\sim(\T, \{i\}, v_1)} [\omega (v_2, \zeta) - \omega(v_1, \zeta)] \\
    &= \max_{\zeta \in Z^\sim(\T, \{i\}, v_1)} [\min_{\sigma\in\Sigma(\T, \{i\}, v_2)} \ell(\epsilon | v_2,\sigma,\zeta) - \min_{\sigma\in\Sigma(\T, \{i\}, v_1)} \ell(\epsilon | v_1,\sigma,\zeta)]\\
    &= \max_{\zeta \in \{\mathtt{left, right}\}} [\min_{\sigma\in\Sigma(\T, \{i\}, v_2)} \ell(\epsilon | v_2,\sigma,\zeta) - \min_{\sigma\in\Sigma(\T, \{i\}, v_1)} \ell(\epsilon | v_1,\sigma,\zeta)]\\
    &= \max\{ [0 - \min_{\sigma\in\Sigma(\T, \{i\}, v_1)} \ell(\epsilon | v_1,\sigma,\mathtt{left})],  \\
    & \ [1 - \min_{\sigma\in\Sigma(\T, \{i\}, v_1)} \ell(\epsilon | v_1,\sigma,\mathtt{right})]\} \\
    &= \max \{ 0 - 0, 1 - 0 \} = 1 
\end{align*}

\begin{align*}
    \rrisk (\{ i\} , i , v_1, \mathtt{right})&= \max_{\zeta \in Z^\sim(\T, \{i\}, v_1)} \Delta \omega(v_1, \zeta, \mathtt{right}) = \max_{\zeta \in Z^\sim(\T, \{i\}, v_1)} [\omega (v_3, \zeta) - \omega(v_1, \zeta)] \\
    &= \max_{\zeta \in Z^\sim(\T, \{i\}, v_1)} [\min_{\sigma\in\Sigma(\T, \{i\}, v_3)} \ell(\epsilon | v_3,\sigma,\zeta) - \min_{\sigma\in\Sigma(\T, \{i\}, v_1)} \ell(\epsilon | v_1,\sigma,\zeta)]\\
    &= \max_{\zeta \in \{\mathtt{left, right}\}} [\min_{\sigma\in\Sigma(\T, \{i\}, v_3)} \ell(\epsilon | v_3,\sigma,\zeta) - \min_{\sigma\in\Sigma(\T, \{i\}, v_1)} \ell(\epsilon | v_1,\sigma,\zeta)]\\
    &= \max\{ [1 - \min_{\sigma\in\Sigma(\T, \{i\}, v_1)} \ell(\epsilon | v_1,\sigma,\mathtt{left})], [0 - \min_{\sigma\in\Sigma(\T, \{i\}, v_1)} \ell(\epsilon | v_1,\sigma,\mathtt{right})]\} \\
    &= \max \{ 1 - 0, 0-0 \} = 1
\end{align*}

\begin{align*}
    \rrisk (\{ i,j\} , i , v_1, \mathtt{left})&= \max_{\zeta \in Z^\sim(\T, \{i, j\}, v_1)} \Delta \omega(v_1, \zeta, \mathtt{left}) = \max_{\zeta \in Z^\sim(\T, \{i,j\}, v_1)} [\omega (v_2, \zeta) - \omega(v_1, \zeta)] \\
    &= \max_{\zeta \in Z^\sim(\T, \{i,j\}, v_1)} [\min_{\sigma\in\Sigma(\T, \{i,j\}, v_2)} \ell(\epsilon | v_2,\sigma,\zeta) - \min_{\sigma\in\Sigma(\T, \{i,j\}, v_1)} \ell(\epsilon | v_1,\sigma,\zeta)]\\
    &= 0 - 0 = 0
\end{align*}

\begin{align*}
    \rrisk (\{ i,j\} , i , v_1, \mathtt{right})&= \max_{\zeta \in Z^\sim(\T, \{i, j\}, v_1)} \Delta \omega(v_1, \zeta, \mathtt{right}) = \max_{\zeta \in Z^\sim(\T, \{i,j\}, v_1)} [\omega (v_3, \zeta) - \omega(v_1, \zeta)] \\
    &= \max_{\zeta \in Z^\sim(\T, \{i,j\}, v_1)} [\min_{\sigma\in\Sigma(\T, \{i,j\}, v_3)} \ell(\epsilon | v_3,\sigma,\zeta) - \min_{\sigma\in\Sigma(\T, \{i,j\}, v_1)} \ell(\epsilon | v_1,\sigma,\zeta)]\\
    &= 0 - 0 = 0
\end{align*}

$\underline\rho(\{i\}, i, v_1) = \min\limits_{\a \in A_{v_1}} \rrisk(\{i\}, i, v_1, \a) = 1 $, hence

\begin{align*}
    \rneg(\{i\}, i, v_1, \mathtt{left}) &=  \max_{\zeta \in Z^\sim(\T, \{i\}, v_1)} \Delta \omega(v_1, \zeta, \mathtt{left}) - \underline\rho(\{i\}, i, v_1) \\
    &= \rrisk (\{ i\} , i , v_1, \mathtt{left}) - \underline\rho(\{i\}, i, v_1) = 1 - 1 = 0
\end{align*}

\begin{align*}
    \rneg(\{i\}, i, v_1, \mathtt{right}) &=  \max_{\zeta \in Z^\sim(\T, \{i\}, v_1)} \Delta \omega(v_1, \zeta, \mathtt{right}) - \underline\rho(\{i\}, i, v_1) \\
    &= \rrisk (\{ i\} , i , v_1, \mathtt{right}) - \underline\rho(\{i\}, i, v_1) = 1 - 1 = 0
\end{align*}

And $\underline\rho(\{i,j\}, i ,v_1) = \min\limits_{\a \in A_{v_1}} \rrisk(\{i, j\}, i, v_1, \a) = 0 $, hence

\begin{align*}
    \rneg(\{i,j\}, i, v_1, \mathtt{left}) &=  \max_{\zeta \in Z^\sim(\T, \{i,j\}, v_1)} \Delta \omega(v_1, \zeta, \mathtt{left}) - \underline\rho(\{i,j\}, i, v_1) \\
    &= \rrisk (\{ i,j\} , i , v_1, \mathtt{left}) - \underline\rho(\{i,j\}, i, v_1) = 0 - 0 = 0
\end{align*}

\begin{align*}
    \rneg(\{i,j\}, i, v_1, \mathtt{right}) &=  \max_{\zeta \in Z^\sim(\T, \{i, j\}, v_1)} \Delta \omega(v_1, \zeta, \mathtt{right}) - \underline\rho(\{i,j\}, i, v_1) \\
    &= \rrisk (\{ i,j\} , i , v_1, \mathtt{right}) - \underline\rho(\{i,j\}, i, v_1) = 0 - 0 = 0
\end{align*}

\section{Axiom Compliance for Member Contribution Functions}\label{app:pw_proofs}

Here we present the proof omitted in \Cref{sec:pwresp}.

\pwrespaxioms*
\begin{proof}
We consider the axioms one by one.
For (KSym) first note that $\rrisk$ and $\rneg$ fulfill this axiom by definition. The avoidance $\omega$ as well as the minimal risk $\underline{\rho}$ are defined over the whole information set of a given node and both member contribution functions then maximise over $Z^\sim(v)$, which also takes into account all vertices in the same information set as $v$. 
For $\rlike$ this is different. Even though the known guaranteed likelihood $\gamma(v)$ is defined over $\Sigma(v)$ and $Z^\sim(v)$, the member contribution function then measures the difference in $\gamma$ between the node $c_v(\a)$, the agent ends up in after taking the action $\a$, and $v$, the node the agent was before taking action $\a$. Since $c_v(\a)$ depends on the actual node $v$ -- and not its information set -- we cannot guarantee (KSym). For an actual example where (KSym) is violated consider the coordination game depicted in \Cref{fig:coordination_rescue}. Here we have 
\[\rlike(j, v_1, \mathtt{cinema}) = 0 \neq 1 = \rlike(j, v_2, \mathtt{cinema}). \]

It is clear from definition that $\rlike$ satisfy (AMC) as it measures the actual increase in guaranteed known likelihood which is 0 in the setting of (AMC). 
To prove compliance of $\rneg$ with (AMC) assume we are in a situation as described in the definition of the axiom, i.e., a node $v \in V_i$ is given for an agent $i$ and $A_v = \{ \a, \b_1, \ldots, \b_m\}$ such that choosing $\a$ certainly leads to a desirable outcome and choosing any other action certainly leads to an undesirable outcome. We first note that 
\[ \min_{\b \in A_{v}\setminus \{\a\}} \rho(v, \b) = \min_{b \in A_{v}\setminus \{\a\}} \max_{\zeta \in Z^\sim} \omega (c_{v}(\b) \zeta) - \omega(v, \zeta) \geq \omega(c_{v}(\b), \zeta') -\omega(v, \zeta') = 1 - 0 = 1,\]
where $\zeta'$ is a scenario where agent $i$ is actually in node $v$. Thus w.l.o.g.\ we can assume that for computation of $\underline\rho$ action $\a$ is used. Then we have $\rneg(i, v, \a) = \rho(v, \a) - \rho(v, \a) = 0$, which proves compliance. 
$\rrisk$ however does not satisfy (AMC). To see this consider again the coordination game depicted in \Cref{fig:coordination_rescue}. Here, 
\[\rrisk(j, v_1, \mathtt{cinema}) = \max_{\zeta \in Z^\sim(v_1)} \omega(c_{v_1}(\mathtt{cinema}), \zeta) - \omega(v_1, \zeta) = 1.\]
(AMC$^\sim$) is by definition satisfied by all functions that satisfy (AMC). Additionally, also $\rrisk$ satisfies it since choosing action $\a$ is the best action in all nodes $v'\sim v$.

(FMC) is again clearly satisfied by $\rlike$. $\rneg$ does not satisfy it, however. To see this consider the same coordination game as above where $\rneg(j, v_1, \mathtt{theater}) = 0$. 
$\rrisk$ on the other hand satisfies the axiom. To see this assume we are in a situation as described in the definition of the axiom, i.e., a node $v \in V_i$ is given for an agent $i$ and $A_v = \{ \a, \b_1, \ldots, \b_m\}$ such that choosing $\a$ certainly leads to an undesirable outcome and choosing any other action certainly leads to a desirable outcome. Let $\zeta'$ be the scenario where agent $i$ is actually in node $v$ again. Then 
\[\rrisk(i, v, \a) = \max_{\zeta \in Z^\sim(v)} \omega(c_{v}(\a), \zeta) - \omega(v, \zeta) \leq \omega(c_{v}(\a), \zeta') - \omega(v, \zeta) = 1 - 0 = 1. \]
As was the case with (AMC$^\sim$), for (FMC$^\sim$) we only have to check the function that does not satisfy the stronger version of the axiom. Here, this is $\rneg$. For this member contribution function we have $\rneg(i, v, \a) = \rrisk(i, v, \a) - \min_{\b \in A} \rrisk(v, \b) = 1 - 0 = 1$.
\end{proof}

\section{Axiom Compliance for Aggregation Functions}\label{sec:appendix-evaluation-agg}

Here we present the proofs omitted in \Cref{sec:agg}. For each aggregation function we prove the stated axiom compliance resulting in \Cref{prop:agg_compliance}.

\begin{lemma}
$\mathbf{sum}$ fulfills all of the given axioms except (AN1) and (SIP). Further, it fails (01B) and is thus not a proper aggregation function.
\end{lemma}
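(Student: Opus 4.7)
The plan is to verify each of the nine conditions individually for $\mathbf{sum}(\mathbf{x}) = \sum_{i=1}^n x_i$, since the claim decomposes into nine independent sub-claims with no subtle interactions. Most of these follow directly from elementary properties of addition on the non-negative reals, so the bulk of the work is organising the checks cleanly rather than any deep argument.

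First I would dispose of the failures with explicit counterexamples: (01B) fails already on $[1,1]$, since $\mathbf{sum}([1,1]) = 2 \notin [0,1]$; the same input refutes (AN1), since it contains the entry $1$ but the sum is $2 \neq 1$; and (SIP) fails on any $\mathbf{x}$ with nonzero sum, e.g.\ $\mathbf{x} = [1/2]$ with $k=2$, where $\mathbf{sum}(\mathbf{x} \oplus \mathbf{x}) = 1 \neq 1/2 = \mathbf{sum}(\mathbf{x})$.

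For the positive claims, (LIN) is immediate: fixing all but the $i$-th coordinate, $\mathbf{sum}^{(n)}_{\overline{\r},-i}(x) = x + \sum_{j \neq i} \r_j$, which is linear in $x$ with slope $1$. (NE0) follows because inserting or removing a zero entry does not change the sum. (AAT) follows from commutativity of addition, so any permutation of the entries leaves the sum invariant. (RED) follows because each $x_i \in [0,1]$ is non-negative, hence $\sum x_i = 0$ iff every $x_i = 0$.

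It remains to check the two bounded-strict-monotonicity axioms. For (BSM$^+$), appending a coordinate $x_{n+1} \in (0,1]$ increases the sum by $x_{n+1} > 0$, so $\mathbf{sum}([x_1,\ldots,x_n]) < \mathbf{sum}([x_1,\ldots,x_n,x_{n+1}])$ unconditionally, satisfying the disjunction. For (BSM$^>$), replacing $x_j$ by $y_j > x_j$ with all other entries unchanged increases the sum by $y_j - x_j > 0$, again giving strict inequality. Neither argument needs the escape clause $\mathbf{sum}(\mathbf{x}) = 1$, which is why it is safe that $\mathbf{sum}$ is unbounded above. The main obstacle, modest as it is, is simply bookkeeping to make sure each axiom's quantifier structure is matched by the chosen witness or argument; once that is done, no case requires more than a single line.
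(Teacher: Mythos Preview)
Your proposal is correct and follows essentially the same approach as the paper: both proceed axiom by axiom, giving one-line verifications for the satisfied axioms via elementary properties of addition and explicit small counterexamples for (01B), (AN1), and (SIP). The only cosmetic difference is your choice of $[1/2]$ for the (SIP) counterexample where the paper reuses $(1,1)$.
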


\begin{proof}
We provide arguments for each axiom individually.
\begin{description}
    \item[(01B)] It is clear by definition that this is not satisfied..
    \item[(BSM$^+$)] Let $\ov{\r} = (\r_1,\ldots,\r_n) \in [0,1]^n$  and $\ov{\r}' = (\r_1,\ldots,\r_n, \r_{n+1})\in [0,1]^{n+1} $ be point-wise responsibilities, s.t. $\mathbf{sum}(\ov{\r})<1$ and $ \r_{n+1}>0$. Then
    \[ \mathbf{sum}(\ov{\r}') = \sum_{k=1}^{n+1} \r_k = \sum_{k=1}^{n} \r_k + \r_{n+1} = \mathbf{sum}(\ov{\r}) + \r_{n+1} > \mathbf{sum}(\ov{\r}). \]
    \item[(BSM$^>$)] Let $\ov{\r} = (\r_1,\ldots,\r_n) \in [0,1]^n $  and $\ov{\r}' = (\r_1,\ldots,\r_{m-1}, \r'_m,\ldots, \r_n) \in [0,1]^n $ with $m\leq n$ be point-wise responsibilities such that $\mathbf{sum}(\ov{\r})<1$ and  $\r'_m > \r_m$. Then
    \[ \mathbf{sum} (\ov{r}') = \sum_{k=1}^{m-1 } \r_k + \r'_m + \sum_{k=m+1}^n \r_k = \mathbf{sum}(\ov{\r}) + (\r'_m - \r_m) > \mathbf{sum}(\ov{\r}). \]
    \item[(LIN)] Let $n\in \mathbb N _{\geq 1}$, $\overline{\r} = (\r_1, \ldots, \r_n) \in [0,1]^n$ point-wise responsibilities and $i\leq n$ an index. Then
    \[ \mathbf{sum}^{(n)}_{\overline{\r}, -i}(x) = \mathbf{sum} ((\r_1, \ldots, \r_{i-1}, x, \r_i, \ldots, r_n)) =  \underbrace{\sum\limits_{\substack {k=1 \\ k\neq i}}^{n} r_k}_{=:b} + \underbrace{1}_{=:a}x = b+ax.  \]
    \item[(AN1)] This is not satisfied since $\mathbf{sum}((1,1)) = 2 \neq 1$.
    \item[(NE0)] Clear by definition.
    \item[(SIP)] This is not satisfied since $\mathbf{sum}((1,1)) = 2 \neq 1$.
    \item[(AAT)] Holds due to commutativity of the sum.
    \item[(RED)] For the first direction observe that $\sum_{k=1}^n 0 = 0$.
    For the other direction let $\ov{\r}=(\r_1, \ldots, \r_n) \in [0,1]^n $ be point-wise responsibilities such that $ \sum_{k=1}^n \r_k = 0$. Then
    \begin{align*}
    \sum_{k=1}^n \r_k &= 0\text{ and }\r_k \in [0,1] \text{ for all }k=1, \ldots, n \\
    \Rightarrow \ \r_k &= 0\text{ for all } k=1, \ldots, n \\
    \Rightarrow \ \ov{\r}&= (0,\ldots,0).
    \end{align*}
    \end{description}
\end{proof}

\begin{lemma}
$\mathbf{avg}$ fulfills (01B), (BSM$^>$), (AAT) and (RED), but does not fulfill (BSM$^+$), (LIN) and (NE0).
\end{lemma}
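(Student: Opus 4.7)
The plan is to dispatch each of the seven claims by direct inspection of the definition $\mathbf{avg}(x_1,\ldots,x_n) = \frac{1}{n}\sum_{i=1}^n x_i$, using small explicit counterexamples for the failures. The three axioms (01B), (AAT), and (RED) admit one-line verifications: a convex combination of $[0,1]$-values stays in $[0,1]$; the sum is permutation-invariant, so the average is; and a sum of nonnegative reals is zero iff each summand is. For (BSM$^>$), given $x_j < y_j$ and $x_i = y_i$ for $i \neq j$, the identity $\sum_i y_i = \sum_i x_i + (y_j - x_j)$ gives the strict inequality after dividing by $n$; the degenerate case $\mathbf{avg}(\mathbf{x}) = 1$ cannot occur here because it would force $x_j = 1$, contradicting $x_j < y_j \leq 1$.

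For (BSM$^+$) and (NE0) I would produce two-entry counterexamples. Taking $\mathbf{x} = (1/2)$ and appending $x_2 = 1/4$ gives $\mathbf{avg}(\mathbf{x}) = 1/2 \neq 1$ yet $\mathbf{avg}(1/2, 1/4) = 3/8 < 1/2$, directly violating the required strict increase of (BSM$^+$). For (NE0) the pair $\mathbf{avg}(1/2, 0) = 1/4 \neq 1/2 = \mathbf{avg}(1/2)$ shows that appending a zero does change the value, so $0$ fails to be a neutral element.

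The hard part is (LIN). For any fixed arity $n$ and fixed coordinates $\overline{r}$, the slice $\mathbf{avg}^{(n)}_{\overline{r},-i}(x) = \tfrac{1}{n}\,x + \tfrac{1}{n}\sum_{k \neq i} r_k$ is manifestly of the form $ax + b$, so the axiom as literally written is satisfied arity-by-arity. To refute (LIN) the argument must therefore exploit the global consistency imposed by the extended-aggregator setting: the slope $1/n$ depends on the arity, while $\agg^{(1)}$ is forced to equal the identity and hence has slope $1$. Concretely, I would compare $\mathbf{avg}^{(1)}(x) = x$ with $\mathbf{avg}^{(2)}_{(r_2),-1}(x) = x/2 + r_2/2$ to show that no single linear dependence in the first coordinate is compatible across arities. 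If instead (LIN) is meant to be read purely arity-locally, then this claim of the lemma is in tension with the summary table, and the statement would need to be reconciled before the proof can be closed; flagging this ambiguity is the principal obstacle I would need to resolve with the authors before finalising the proof.
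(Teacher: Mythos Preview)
Your treatment of (01B), (BSM$^>$), (BSM$^+$), (NE0), (AAT), and (RED) matches the paper's proof essentially line for line; your counterexamples for (BSM$^+$) and (NE0) differ only in the specific numbers chosen (the paper uses $(0.9)$ versus $(0.9,0.1)$ for (BSM$^+$) and merely says ``clear by definition'' for (NE0)).

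On (LIN) you have spotted a genuine inconsistency in the paper rather than a gap in your own argument. The paper's summary table records $\mathbf{avg}$ as \emph{satisfying} (LIN), and the appendix computation --- despite being prefaced with the words ``To see why this is violated'' --- actually derives
\[
\mathbf{avg}^{(n)}_{\overline{\mathbf r},-i}(x) \;=\; \tfrac{1}{n}\,x \;+\; \tfrac{1}{n}\sum_{k\neq i}\mathbf r_k,
\]
i.e.\ exhibits the required form $ax+b$. Both the table and the body of the paper's own proof therefore agree with your arity-local reading, and the lemma header simply misstates the (LIN) claim. Your attempted rescue via a cross-arity slope comparison is not how (LIN) is intended: the axiom is quantified arity by arity, with $a$ and $b$ allowed to depend on $n$, the index $i$, and the fixed entries $\overline{\mathbf r}$. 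You may drop that paragraph and simply record that the statement is in error on this point --- $\mathbf{avg}$ does satisfy (LIN).
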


\begin{proof}
We provide arguments for each axiom individually.
\begin{description}
    \item[(01B)] Let $\ov{\r} = (\r_1,\ldots,\r_n) \in [0,1]^n$, i.e. $r_k \in [0,1]$ for all $k=1,\ldots,n$. Then $\frac{1}{n}\sum_{k=1}^n \r_k \in [0,1]$. 
    \item[(BSM$^+$)] To see that this is violated let $\ov{\r} = (0.9) \in [0,1]^1$  and $\ov{\r}' = (0.9,0.1)\in [0,1]^2 $ be point-wise responsibilities. $0.1>0$. Then
    \[ \mathbf{avg}(\ov{\r}') = \frac{0.9 + 0.1}{2} = 0.5 < 0.9 =  \mathbf{avg}(\ov{\r}). \]
    \item[(BSM$^>$)] Let $\ov{\r} = (\r_1,\ldots,\r_n) \in [0,1]^n $  and $\ov{\r}' = (\r_1,\ldots,\r_{m-1}, \r'_m,\ldots, \r_n) \in [0,1]^n $ with $m\leq n$ be point-wise responsibilities such that $\mathbf{avg}(\ov{\r})<1$ and  $\r'_m > \r_m$. Then
    \[ \mathbf{avg} (\ov{r}') = \frac{1}{n}[\sum_{k=1}^{m-1 } \r_k + \r'_m + \sum_{k=m+1}^n \r_k] = \mathbf{avg}(\ov{\r}) + \frac{1}{n}(\r'_m - \r_m) > \mathbf{avg}(\ov{\r}). \]
    \item[(LIN)] To see why this is violated let $n\in \mathbb N _{\geq 1}$, $\overline{\r} = (\r_1, \ldots, \r_n) \in [0,1]^n$ point-wise responsibilities and $i\leq n$ an index. Then
    \[ \mathbf{avg}^{(n)}_{\overline{\r}, -i}(x) = \mathbf{avg} ((\r_1, \ldots, \r_{i-1}, x, \r_i, \ldots, r_n)) =  \underbrace{\frac{1}{n}\sum\limits_{\substack {k=1 \\ k\neq i}}^{n} r_k}_{=:b} + \underbrace{\frac{1}{n}}_{=:a}x = b+ax. \]
    \item[(AN1)] It holds that $\mathbf{avg}((1,0.2)) = \frac{1}{2} \times (1+0.2) = 0.6 \neq 1$.
    \item[(NE0)] Clear by definition that this is not satisfied.
    \item[(SIP)] Clear by definition.
    \item[(AAT)] Holds due to commutativity of the sum.
    \item[(RED)] For one direction observe that $\frac{\sum_{k=1}^n 0}{n} = 0$.
    For the other direction let $\ov{\r}=(\r_1, \ldots, \r_n) \in [0,1]^n $ be point-wise responsibilities such that $\frac{1}{n} \sum_{k=1}^n \r_k = 0$. Then
    \begin{align*}
    && \frac{1}{n}\sum_{k=1}^n \r_k  &= 0 \text{ and }\r_k \in [0,1] \text{ for all }k=1, \ldots, n\\
    \Rightarrow&&  \sum_{k=1}^n \r_k &= 0\text{ and }\r_k \in [0,1] \text{ for all }k=1, \ldots, n \\
    \Rightarrow&&  \r_k &= 0\text{ for all } k=1, \ldots, n \\
    \Rightarrow&& \ov{\r} &= (0,\ldots,0).
    \end{align*}
    \end{description}
\end{proof}

\begin{lemma}
$\mathbf{max}$ fulfills (01B), (1ID), (AAT) and (RED), but does not fulfill (BSM$^+$) nor (BSM$^>$), (LIN), and (NE0).
\end{lemma}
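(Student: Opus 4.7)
The plan is to verify each of the listed axioms one by one, following the same pattern used for the $\mathbf{sum}$ and $\mathbf{avg}$ lemmas above. The four positive axioms should all be essentially immediate from elementary properties of the maximum operation on $[0,1]$.

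First I would handle the positive part. For (01B), since each $\r_k \in [0,1]$ we automatically get $\max_k \r_k \in [0,1]$. For (1ID)/(AN1), if some $\r_k = 1$ then $\max_k \r_k \geq 1$, and $[0,1]$-boundedness forces equality. For (AAT), the maximum depends only on the multiset of entries and is therefore permutation-invariant. For (RED), $\max_k \r_k = 0$ together with $\r_k \geq 0$ forces every $\r_k$ to equal $0$, and conversely a zero vector has maximum zero.

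Next I would produce explicit counterexamples for the four failing axioms. For (BSM$^+$), take $\ov{\r} = (0.5)$ and $\ov{\r}' = (0.5, 0.3)$: then $\mathbf{max}(\ov{\r}) = 0.5 < 1$, yet appending the positive entry $0.3$ leaves $\mathbf{max}(\ov{\r}') = 0.5$, violating strict monotonicity. For (BSM$^>$), take $\ov{\r} = (0.5, 0.1)$ and $\ov{\r}' = (0.5, 0.3)$: raising the second coordinate does not change the max, again even though the initial value is below $1$. For (LIN), fix $\r_1 = 0.5$ and note that $\mathbf{max}^{(2)}_{\ov{\r},-2}(x) = \max(0.5, x)$ equals $0.5$ on $[0, 0.5]$ and $x$ on $[0.5, 1]$, which is piecewise affine with a kink at $x = 0.5$ and so not of the form $ax + b$.

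The one delicate case is (NE0). Here the main obstacle is that for $n \geq 2$ we have $\max\{0, x_1, \ldots, x_{n-1}\} = \max\{x_1, \ldots, x_{n-1}\}$, so removing a zero entry genuinely does not change the value, and no counterexample exists at $n \geq 2$. A counterexample must therefore come from the boundary case $n = 1$: we have $\mathbf{max}((0)) = 0$ by the identity requirement on $\agg^{(1)}$, but the vector obtained by deleting the only entry is empty and lies outside the domain of the extended aggregation function, so the right-hand side of (NE0) is undefined. I would argue that this makes (NE0) fail for $\mathbf{max}$, consistent with the corresponding entry in \Cref{tbl:axioms-agg}; if a less vacuous failure is desired, one could instead adopt the convention that $\agg$ on the empty tuple returns any fixed value $c \in [0,1]$ and observe that for $c \neq 0$ the singleton counterexample persists, so no reasonable extension can rescue the axiom.
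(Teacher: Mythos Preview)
Your treatment of (01B), (AN1), (AAT), (RED), (BSM$^+$), (BSM$^>$), and (LIN) is essentially identical to the paper's: the same elementary observations for the positive axioms and the same style of two-entry counterexamples for the negative ones (the paper uses $(0.5,0.4)$, $(0.5,0.8)/(0.6,0.8)$, and $(0.5,0.5)$, but the idea is the same).

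Where you diverge is (NE0), and here you are in fact \emph{more} careful than the paper. The paper's appendix disposes of this case with the single phrase ``Clear by definition,'' offering no counterexample. You correctly observe that for every $n\ge 2$ one has $\max(0,x_1,\ldots,x_{n-1})=\max(x_1,\ldots,x_{n-1})$ on $[0,1]^n$, so the only possible failure lives at $n=1$, where deleting the unique (zero) entry produces an empty tuple outside the domain of $\agg$. Your discussion of this boundary case and of possible conventions for $\agg$ on the empty tuple is a reasonable way to justify the claimed ``$\times$'' in \Cref{tbl:axioms-agg}; the paper does not address this subtlety at all. One could object that treating an undefined right-hand side as a violation is a matter of convention rather than mathematics, but that objection applies equally to the paper's claim, not to your argument for it.
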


\begin{proof}
We provide arguments for each axiom individually.
\begin{description}
    \item[(01B)] Clear by definition.    
    \item[(BSM$^+$)] To see why this is violated let $\ov{\r} = (0.5) \in [0,1]^1$  and $\ov{\r}' = (0.5,0.4)\in [0,1]^2 $ be point-wise responsibilities. $0.4>0$. Then
    \[ \mathbf{max}(\ov{\r}') = 0.5 = \mathbf{max}(\ov{\r}). \]
    \item[(BSM$^>$)] To see why this is violated let $\ov{\r} = (0.5,0.8) \in [0,1]^2 $  and $\ov{\r}' = (0.6, 0.8) \in [0,1]^2 $. $0.6 > 0.5$. Then
    \[\mathbf{max} (\ov{r}') = 0.8 = \mathbf{max} (\ov{r}). \]
    \item[(LIN)] To see why this is violated let $n=2$, $\overline{\r} = (0.5, 0.5) \in [0,1]^2$ and $i=1$. Then
    \[ \mathbf{max}^{(n)}_{\overline{\r}, -i}(x) = \mathbf{max}((x, 0.5)) = \max \{x, 0.5 \} = \begin{cases} x \qquad \text{ if }x\geq 0.5 \\ 0.5 \qquad \text{ else.}\end{cases}\] 
    This is not linear.
    \item[(AN1)] Clear by definition.
    \item[(NE0)] Clear by definition.
    \item[(SIP)] Clear by definition.
    \item[(AAT)] Holds due to commutativity of the maximum.
    \item[(RED)] For one direction observe that $\max (0, \ldots, 0) = 0$.
    For the other direction let  $\ov{\r}=(\r_1, \ldots, \r_n) \in [0,1]^n $ be point-wise responsibilities such that $\max(\ov{\r}) = 0$. Assume $\r_k > 0$ for some $k\in\{1,\ldots,n\}$. Then $\max(\ov{\r}) = r_k > 0 $. Contradiction.
    \end{description}
\end{proof}

\begin{lemma}
$\mathbf{mprod}$ fulfills all of the mentioned axioms except for (SIP).
\end{lemma}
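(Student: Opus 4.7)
The plan is to verify all eight qualifying axioms by direct computation on the closed form $\mathbf{mprod}(\mathbf{x}) = 1 - \prod_k(1-x_k)$, then disprove (SIP) with a small numerical example. The easy block is (01B), (AAT), (AN1), (NE0), and (RED): each factor $(1-x_k) \in [0,1]$ combined with commutativity of multiplication yields (01B) and (AAT); any $x_k = 1$ annihilates the product, yielding (AN1); a factor $(1-0) = 1$ is absorbed by the product, yielding (NE0); and since a product of elements of $[0,1]$ equals $1$ iff every factor equals $1$, we obtain $\mathbf{mprod}(\mathbf{x}) = 0$ iff $\mathbf{x} = \mathbf{0}$, which is (RED).

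For (LIN), I would fix an input vector $\mathbf{x} \in [0,1]^n$ and an index $i$, set $C := \prod_{k \neq i}(1-x_k) \in [0,1]$, and observe that the one-variable restriction equals $1 - (1-x)C = (1-C) + Cx$, which is affine in $x$ as required (covering also the degenerate case $C = 0$, where the restriction is the constant $1$). The two monotonicity axioms need slightly more care because of their disjunctive conclusions. For (BSM$^+$), appending $x_{n+1} \in (0,1]$ multiplies the product $\prod_k(1-x_k)$ by $(1-x_{n+1}) < 1$; provided $\mathbf{mprod}(\mathbf{x}) < 1$ the existing product is strictly positive, the new product strictly smaller, and so $\mathbf{mprod}$ strictly increases; otherwise the $\mathbf{mprod} = 1$ disjunct is already satisfied. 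The argument for (BSM$^>$) is parallel: strictly increasing one entry strictly shrinks exactly the corresponding factor, and the existing product is strictly positive precisely when $\mathbf{mprod} < 1$.

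For the non-compliance with (SIP), a single counterexample suffices: with $\mathbf{x} = (1/2)$ we have $\mathbf{mprod}(\mathbf{x}) = 1/2$ while $\mathbf{mprod}(\mathbf{x} \oplus \mathbf{x}) = 1 - (1/2)^2 = 3/4$. No genuine obstacle arises anywhere in this proof; the one conceptual observation that unifies the two monotonicity cases is the equivalence $\mathbf{mprod}(\mathbf{x}) = 1 \iff \prod_k(1-x_k) = 0 \iff \exists k\colon x_k = 1$, which cleanly activates the appropriate branch of the disjunctive conclusions of (BSM$^+$) and (BSM$^>$).
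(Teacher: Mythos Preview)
Your proposal is correct and follows essentially the same approach as the paper: both verify each axiom by direct computation on the closed form $1-\prod_k(1-x_k)$, using that factors lie in $[0,1]$ for (01B) and (RED), commutativity for (AAT), the annihilating and neutral factors for (AN1) and (NE0), the affine expansion $(1-C)+Cx$ for (LIN), and the observation that the product strictly shrinks under the stated perturbations when it is positive for (BSM$^+$) and (BSM$^>$). Your (SIP) counterexample uses $(1/2)$ where the paper uses $0.9$, and you make the equivalence $\mathbf{mprod}(\mathbf{x})<1 \iff \prod_k(1-x_k)>0$ explicit where the paper uses it tacitly, but these are cosmetic differences only.
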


\begin{proof}
We provide arguments for each axiom individually.
\begin{description}
    \item[(01B)] Let $\ov{\r} = (\r_1,\ldots,\r_n) \in [0,1]^n$ be point-wise responsibilities. $\r_k \in [0,1]$ for all $k=1,\ldots,n$, hence $1-\r_k \in [0,1]$ for all $k=1,\ldots,n$, $\prod_{k=1}^n (1-\r_k) \in [0,1]$, and finally $\mathbf{mprod}(\ov{\r})= 1- [\prod_{k=1}^n (1-\r_k) ]\in [0,1] $.
    \item[(BSM$^+$)] Let $\ov{\r} = (\r_1,\ldots,\r_n) \in [0,1]^n$  and $\ov{\r}' = (\r_1,\ldots,\r_n, \r_{n+1})\in [0,1]^{n+1} $ be point-wise responsibilities, s.t. $\mathbf{mprod}(\ov{\r})<1$ and $\r_{n+1}>0$. Then
    \[ \mathbf{mprod}(\ov{\r}') = 1- [\prod_{k=1}^{n+1} (1-\r_k)] = 1- [\prod_{k=1}^{n} (1-\r_k) \times (1-\r_{n+1})] . \]
    It holds that $\r_{n+1} \in (0,1] $, so $(1-\r_{n+1}) \in [0,1)$, and $\prod_{k=1}^{n} (1-\r_k) \times (1-\r_{n+1}) < \prod_{k=1}^{n} (1-\r_k)$. Therefore $\mathbf{mprod}(\ov{\r}') = 1- [\prod_{k=1}^{n} (1-\r_k) \times (1-\r_{n+1})] > 1- [\prod_{k=1}^{n} (1-\r_k)] = \mathbf{mprod}(\ov{\r})$.
    
    \item[(BSM$^>$)] Let $\ov{\r} = (\r_1,\ldots,\r_n) \in [0,1]^n $  and $\ov{\r}' = (\r_1,\ldots,\r_{m-1}, \r'_m,\ldots, \r_n) \in [0,1]^n $ with $m\leq n$ be point-wise responsibilities such that $\mathbf{mprod}(\ov{\r})<1$ and $\r'_m > \r_m$. Then
    \[ \mathbf{mprod} (\ov{r}') = 1- [\prod_{k=1}^{m-1} (1-\r_k) \times (1-r'_m) \times \prod_{k=m+1}^n \r_k] = 1- [\prod_{k=1}^{m-1} (1-\r_k) \times \prod_{k=m+1}^n \r_k \times (1-r'_m)] .\]
    It holds that $\r'_m > \r_m$ and $\r_m, \r'_m \in [0,1]$, hence $1-\r'_m < 1-\r_m$, and $ \prod_{k=1}^{m-1} (1-\r_k) \times \prod_{k=m+1}^n \r_k \times (1-r'_m) < \prod_{k=1}^{m-1} (1-\r_k) \times \prod_{k=m+1}^n \r_k \times (1-r_m)$. Thus $ \mathbf{mprod} (\ov{r}') = 1- [\prod_{k=1}^{m-1} (1-\r_k) \times \prod_{k=m+1}^n \r_k \times (1-r'_m)] > 1- [\prod_{k=1}^{m-1} (1-\r_k) \times \prod_{k=m+1}^n \r_k \times (1-r_m)] = \mathbf{mprod} (\ov{r}) $.
    \item[(LIN)] Let $n\in \mathbb N _{\geq 1}$, $\overline{\r} = (\r_1, \ldots, \r_n) \in [0,1]^n$ be given point-wise responsibilities and $i\leq n$ be an index. Then
    \begin{align*}
        \mathbf{mprod}^{(n)}_{\overline{\r}, -i}(x) &= \mathbf{mprod} ((\r_1, \ldots, \r_{i-1}, x, \r_i, \ldots, r_n)) \\ 
        &= 1- \prod\limits_{\substack {k=1 \\ k\neq i}}^{n} (1-r_k) \times (1-x) \\
        & = \underbrace{1- \prod\limits_{\substack {k=1 \\ k\neq i}}^{n} (1-r_k)}_{=: b}+\underbrace{ \prod\limits_{\substack {k=1 \\ k\neq i}}^{n} (1-r_k) }_{=:a} \times x \\
        & = b+ax.  
        \end{align*}

    \item[(AN1)] Let $\overline{r} = (\r_1, \ldots, \r_n) \in [0,1]^n$ be point-wise responsibilities such that $\r_i = 1$ for some $i\leq n$. Then $1-\r_i = 0$, hence $\prod_{k=1}^n (1-\r_k) = 0$ and $\mathbf{mprod}(\overline{\r}) = 1 - \prod_{k=1}^n (1-\r_k) = 1$.
    \item[(NE0)] Let $\overline{r} = (\r_1, \ldots, \r_n) \in [0,1]^n$ be point-wise responsibilities such that $\r_i = 0$ for some $i\leq n$. Then we have $1-\r_i = 1$ and thus $\prod\limits_{k=1}^n (1-\r_k) = \prod\limits_{\substack{k=1 \\ k\neq i}}^n (1-\r_k)$, giving us $\mathbf{mprod}((\r_1, \ldots, \r_n)) = \mathbf{mprod} ((\r_1, \ldots, \r_{i-1}, \r_{i+1}, \ldots, \r_n))$.
    \item[(SIP)] To see why this is violated observe that $1-(1-0.9)\times(1-0.9) = 1- 0.1\times0.1 = 1- 0.01 = 0.99 \neq 0.9$.
    \item[(AAT)] Holds due to commutativity of the product.
    \item[(RED)] For one direction observe that $1- [\prod_{k=1}^{n} (1-0)] = 1-[1] = 0$.
    For the other direction let $\ov{\r}=(\r_1, \ldots, \r_n) \in [0,1]^n $ be point-wise responsibilities such that $1- [\prod_{k=1}^{n} (1-\r_k)] = 0$. 
    \begin{align*}
        && 1- [\prod_{k=1}^{n} (1-\r_k)] &= 0\\
        \Leftrightarrow&& \qquad \prod_{k=1}^{n} (1-\r_k) &= 1 \\
        \xRightarrow[]{1-\r_k \in [0,1] \forall k} && \qquad 1-\r_k &= 1 \quad \text{ for all } k=1,\ldots, n\\
        \Leftrightarrow&& \qquad \qquad \r_k &= 0 \quad \text{ for all } k=1,\ldots, n.
    \end{align*}
    \end{description}
\end{proof}

\characterisationmprod*

For ease of readability we break up the proof into two lemmata.
We start out by showing that linearity in each component together with the other axioms prescribes a specific linear combination of the input values as aggregation function, before showing that this equates exactly to $\mathbf{mprod}$.

\begin{lemma}\label{lemma:lin-fnctn}
    The axioms (LIN), (AAT), (NE0), and (AN1) uniquely characterise the aggregation function 
    \[
    \agg([\r_1,\ldots,\r_n]) = 
    \sum\limits_{i=1}^n \r_i - \sum\limits_{\substack{i,i'= 1 \\ i<i'}}^n  \r_i \r_{i'} + \ldots \pm \sum\limits_{\substack{i^{(1)},\ldots, i^{(n-1)} = 1  \\ i^{(1)}<\ldots < i^{(n-1)}}}^n  \r_{i^{(1)}}\ldots\r_{i^{(n-1)}} \mp \r_1\ldots\r_n.
    \]
\end{lemma}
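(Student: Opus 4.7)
The plan is to exploit (LIN) to force $\agg^{(n)}$ to be a multilinear polynomial on $[0,1]^n$, and then use the remaining axioms to pin down its coefficients. Concretely, I would first prove by induction on $n$ that
\[ \agg^{(n)}(\r_1,\ldots,\r_n) = \sum_{S \subseteq \{1,\ldots,n\}} c_S^{(n)} \prod_{i \in S} \r_i \]
for some real constants $c_S^{(n)}$. The base case $n = 1$ is immediate since $\agg^{(1)}$ is the identity on $[0,1]$ by definition. For the inductive step, (LIN) applied with $i = n$ and any fixed $\overline{\r}$ shows that $\agg^{(n)}$ is affine in $\r_n$, so one can write $\agg^{(n)}(\r_1,\ldots,\r_n) = g(\r_1,\ldots,\r_{n-1}) + h(\r_1,\ldots,\r_{n-1}) \cdot \r_n$, where $g$ is the restriction at $\r_n = 0$ and $h$ is the difference of the restrictions at $\r_n = 1$ and $\r_n = 0$. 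Both $g$ and $h$ inherit linearity in each of their arguments from (LIN), so the induction hypothesis yields the claimed multilinear form.

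Next, I would use (AAT) to collapse each coefficient $c_S^{(n)}$ to a value $a^{(|S|)}_n$ depending only on the cardinality of $S$. To make these coefficients independent of $n$, I would invoke (NE0): evaluating at $\r_n = 0$ gives $\agg^{(n)}(\r_1,\ldots,\r_{n-1},0) = \agg^{(n-1)}(\r_1,\ldots,\r_{n-1})$, and matching coefficients of the multilinear monomials on both sides yields $a^{(k)}_n = a^{(k)}_{n-1}$ for every $k \leq n-1$. Hence there is a single sequence $(a^{(k)})_{k \geq 0}$ with $a^{(k)} = a^{(k)}_n$ whenever $n \geq k$. Matching the $n = 1$ case against the identity on $[0,1]$ fixes $a^{(0)} = 0$ and $a^{(1)} = 1$.

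Finally, I would use (AN1) to recursively determine $a^{(k)}$ for $k \geq 2$. Fixing $\r_n = 1$ and splitting each subset $S \subseteq \{1,\ldots,n\}$ according to whether $n \in S$, the identity $\agg^{(n)}(\r_1,\ldots,\r_{n-1},1) = 1$ becomes
\[ 1 = \sum_{T \subseteq \{1,\ldots,n-1\}} \bigl(a^{(|T|)} + a^{(|T|+1)}\bigr) \prod_{i \in T} \r_i, \]
which must hold identically in the remaining inputs. Comparing the coefficient of $\prod_{i \in T} \r_i$ on both sides gives $a^{(0)} + a^{(1)} = 1$ (automatically satisfied) and $a^{(k)} + a^{(k+1)} = 0$ for all $k \geq 1$, so $a^{(k)} = (-1)^{k-1}$ for $k \geq 1$. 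Substituting these coefficients back into the multilinear expansion yields precisely the alternating sum of elementary symmetric polynomials stated in the lemma, completing the characterisation.

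The main obstacle is really the first paragraph: going cleanly from the pointwise axiom (LIN), which only asserts that each one-variable restriction is affine with coefficients possibly depending on the fixed arguments, to a single global multilinear polynomial representation whose coefficients are constants. The induction is straightforward once one observes that fixing a coordinate to $0$ or $1$ preserves (LIN) in the remaining coordinates, but it must be spelled out; the subsequent coefficient-matching steps under (AAT), (NE0), and (AN1) are then essentially bookkeeping.
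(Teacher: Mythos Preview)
Your proposal is correct, but it is organised differently from the paper's proof. The paper runs a \emph{single} induction on $n$ that invokes all four axioms at once: using (LIN) in the last coordinate it writes $\agg^{(n+1)}(\r_1,\ldots,\r_{n+1}) = f(\r_1,\ldots,\r_n) + g(\r_1,\ldots,\r_n)\,\r_{n+1}$, then immediately applies (NE0) at $\r_{n+1}=0$ to identify $f = \agg^{(n)}$ and (AN1) at $\r_{n+1}=1$ to identify $g = 1 - \agg^{(n)}$, so that $\agg^{(n+1)} = \agg^{(n)} + \r_{n+1}(1-\agg^{(n)})$; substituting the induction hypothesis and expanding gives the alternating elementary symmetric sum directly. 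Your approach instead proceeds in two phases: first a structural induction using only (LIN) to obtain a general multilinear polynomial $\sum_S c_S^{(n)}\prod_{i\in S}\r_i$, and then separate coefficient-matching arguments via (AAT), (NE0), and (AN1) to force $c_S^{(n)} = (-1)^{|S|-1}$. The paper's route is more compact and never needs to name the coefficients or match monomials; your route is more modular and makes transparent exactly which axiom contributes which constraint (in particular it shows clearly that (LIN) alone already forces multilinearity). One small point worth tightening in your write-up: the first induction is really about \emph{any} function on $[0,1]^n$ that is affine in each coordinate, not about $\agg^{(n)}$ specifically, since the auxiliary functions $g,h$ you extract are not themselves instances of $\agg^{(n-1)}$; you allude to this in your final paragraph, but the statement of the induction hypothesis should reflect it.
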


\begin{proof}
    The proof proceeds via induction on n. The property (1ID) of the definition of aggregation functions will play a crucial role and we will refer to it specifically. We explicitly name the step $n=2$ as it uses a similar reasoning to the general case of $n\to n+1$ and thus makes the latter step easier to follow.
    \begin{description}
        \item[\underline{n=1:}] $\agg([\r_1]) \stackrel{\text{(1ID)}}{=} \r_1. $ \checkmark
        \item[\underline{n=2:}] 
        \begin{align*}
         \agg([\r_1, \r_2]) \stackrel{\text{(LIN)}}{=} &f(\r_2) + g(\r_2) \r_1 = h(\r_1) + k(\r_1) \r_2 \\
        \stackrel{\text{(AAT)}}{\Rightarrow} \qquad  \agg([\r_1, \r_2]) \ = \ & f(\r_2) + g(\r_2) \r_1 = f(\r_1) + g(\r_1) \r_2. 
        \end{align*}
        For some functions $f, g, h$ and $k$.\\
        Now we can use the axioms (NE0) and (AN1) to solve these equations for specific values.
        \begin{align*}
            \agg([0, \r_2]) = f(\r_2) + g(\r_2)\cdot 0 = f(\r_2) \stackrel{\text{(NE0)}}{=} \agg([\r_2]) \stackrel{\text{(1ID)}}{=}\  &\r_2 \\
            \text{and} \qquad \agg([1, \r_2]) = f(\r_2) + g(\r_2)\cdot 1 = \r_2 + g(\r_2) \stackrel{\text{(AN1)}}{=} &1\\
            \Leftrightarrow \qquad g(\r_2) = \ & 1-\r_2.
        \end{align*}
        Thus \[\agg([\r_1, \r_2]) = f(\r_2) + g(\r_2) \r_1 = \r_2 + (1-\r_2) \r_1 = \r_2 + \r_1 - \r_1\r_2. \quad \text{\checkmark}\] 
        \item[\underline{n$\to$ n+1:}] Assume $\agg([\r_1, \ldots ,\r_n]) = \sum\limits_{i=1}^n \r_i - \sum\limits_{\substack{i,i'= 1 \\ i<i'}}^n  \r_i \r_{i'} + \ldots \pm \r_1\ldots\r_n.$ \\
        As before
        \[ \agg([\r_1,\ldots, \r_n, \r_{n+1}]) \stackrel{\text{(LIN)}}{=} f([\r_1, \ldots , \r_n]) + g([\r_1, \ldots, \r_n]) \r_{n+1} \]
        for some functions $f$ and $g$. Thus
        \[\agg([\r_1, \ldots, \r_n]) \stackrel{\text{(NE0)}}{=} \agg([\r_1, \ldots, \r_n, 0]) = f([\r_1, \ldots, \r_n]) + 0, \]
        but also
         \begin{alignat*}{3}
             \agg([\r_1, \ldots, \r_n, 1]) &&= f([\r_1, \ldots, \r_n]) + g([\r_1, \ldots, \r_n]) &\stackrel{\text{(AN1)}}{=}&& 1 \\
             \Leftrightarrow \qquad &&\agg([\r_1, \ldots, \r_n]) + g([\r_1, \ldots, \r_n]) &\ \ =&& 1\\
             && \Leftrightarrow \qquad g([\r_1, \ldots, \r_n]) &\ \ =&& 1 - \agg([\r_1, \ldots, \r_n]).
         \end{alignat*}
         Hence
         \begin{align*}
             &\agg([\r_1, \ldots, \r_{n+1}]) \\
             &= \agg([\r_1, \ldots, \r_n]) + \r_{n+1}[1 - \agg([\r_1, \ldots, \r_n])] \\
             &= \agg([\r_1, \ldots, \r_n]) + \r_{n+1} - \r_{n+1}\agg([\r_1, \ldots, \r_n])]\\
             &\stackrel{\mathclap{\text{ind.h.}}}{=} \,\,\, {\color{mycolor1}\sum\limits_{i=1}^{n}\r_i}  {\color{mycolor2} - \sum\limits_{\substack{i, i' = 1 \\ i< i'}}^n \r_i \r_{i'}} { \color{mycolor3} + \sum\limits_{\substack{ i,i',i''= 1 \\ i<i'< i''}}^n \r_i \r_{i'} \r_{i''}} - \ldots {\color{mycolor4}\pm \r_{i^{(1)}}\ldots \r_{i^{(n)}}}  {\color{mycolor1} + \r_{n+1}} \\
            &\phantom{=M} {\color{mycolor2} - \r_{n+1}\sum\limits_{i=1}^{n}\r_i } {\color{mycolor3} + \r_{n+1}\sum\limits_{\substack{i, i' = 1 \\ i< i'}}^n \r_i \r_{i'}}  - \ldots {\color{mycolor4} \pm \r_{n+1}\sum\limits_{\substack{i^{(1)},\ldots, i^{(n-1)}=1 \\ i^{(1)}<\ldots< i^{(n-1)}}}^n \r_{i^{(1)}} \ldots \r_{i^{(n-1)}}} \\
            &\phantom{=M} \mp \r_{i^{(1)}}\ldots \r_{i^{(n)}}\r_{n+1} \\
            &= {\color{mycolor1}\sum\limits_{i=1}^{n+1}\r_i}{\color{mycolor2} - \sum\limits_{\substack{i, i' = 1 \\ i< i'}}^{n+1} \r_i \r_{i'}} { \color{mycolor3} + \sum\limits_{\substack{ i,i',i''= 1 \\ i<i'< i''}}^{n+1} \r_i \r_{i'} \r_{i''}} - \ldots {\color{mycolor4} \pm \sum\limits_{\substack{i^{(1)},\ldots, i^{(n)}=1 \\ i^{(1)}<\ldots< i^{(n)}}}^{n+1} \r_{i^{(1)}} \ldots \r_{i^{(n)}}} \mp \r_{i^{(1)}}\ldots \r_{i^{(n)}}\r_{n+1}.
         \end{align*}
         This completes the proof.
    \end{description}
\end{proof}

\begin{lemma}\label{lemma:prod-sum}
Let $[\r_1, \ldots, \r_n] \in [0,1]^n$. Then
    \[1- \prod\limits_{i=1}^n (1-\r_i) = \sum\limits_{i=1}^{n}\r_i - \sum\limits_{\substack{i, i' = 1 \\ i< i'}}^n \r_i \r_{i'} + \ldots \pm \sum\limits_{\substack{i^{(1)},\ldots, i^{(n-1)}=1 \\ i^{(1)}<\ldots< i^{(n-1)}}}^n \r_{i^{(1)}} \ldots \r_{i^{(n-1)}} \mp \r_{i^{(1)}}\ldots \r_{i^{(n)}}. \]
\end{lemma}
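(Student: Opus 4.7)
My plan is to prove this by directly expanding the product on the left-hand side using the distributive law, though an induction on $n$ would work equally well and may be cleaner given the structure of the preceding lemma.

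For the direct approach, I would first observe that, by distributivity,
\[
\prod_{i=1}^n (1-\r_i) = \sum_{S \subseteq \{1,\ldots,n\}} (-1)^{|S|} \prod_{i \in S} \r_i,
\]
since each factor $(1 - \r_i)$ contributes either $1$ (taken on $S^c$) or $-\r_i$ (taken on $S$). Grouping the terms by the cardinality $k = |S|$ and isolating the $k = 0$ contribution (which equals $1$), I obtain
\[
\prod_{i=1}^n (1-\r_i) = 1 + \sum_{k=1}^{n} (-1)^k \sum_{\substack{i^{(1)}<\cdots<i^{(k)}}} \r_{i^{(1)}}\cdots\r_{i^{(k)}}.
\]
Subtracting from $1$ and flipping signs gives
\[
1 - \prod_{i=1}^n (1-\r_i) = \sum_{k=1}^{n} (-1)^{k-1} \sum_{\substack{i^{(1)}<\cdots<i^{(k)}}} \r_{i^{(1)}}\cdots\r_{i^{(k)}},
\]
which is exactly the right-hand side in the statement.

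Alternatively, one can proceed by induction on $n$: the base case $n = 1$ gives $1 - (1 - \r_1) = \r_1$, matching both sides. For the inductive step, write $1 - \prod_{i=1}^{n+1}(1-\r_i) = 1 - \bigl(\prod_{i=1}^n (1-\r_i)\bigr)(1 - \r_{n+1})$, apply the induction hypothesis to the inner product, and then expand. This mirrors precisely the computation already carried out in the inductive step of Lemma~\ref{lemma:lin-fnctn}, so the bookkeeping is identical and no new obstacle arises.

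There is no real difficulty here: the identity is purely algebraic and amounts to recognising elementary symmetric polynomials in the expansion of $\prod(1-\r_i)$. The only step that requires care is the sign tracking, i.e.\ verifying that the $k$-th elementary symmetric polynomial indeed appears with sign $(-1)^{k-1}$ after the subtraction from $1$; this is a direct consequence of the $(-1)^{|S|}$ in the subset expansion. No assumption on $\r_i \in [0,1]$ is actually used — the identity holds formally for arbitrary reals.
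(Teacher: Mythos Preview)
Your proposal is correct. The paper proves this lemma by induction on $n$ (with explicit base cases $n=1$ and $n=2$, then the step $n\to n+1$ via the same ``$y + \r_{n+1} - y\r_{n+1}$'' regrouping already seen in Lemma~\ref{lemma:lin-fnctn}), which is exactly your alternative route. Your primary argument---the direct subset expansion $\prod_{i=1}^n(1-\r_i)=\sum_{S\subseteq\{1,\dots,n\}}(-1)^{|S|}\prod_{i\in S}\r_i$ followed by grouping by $|S|$---is a genuinely different and more concise approach: it dispatches the identity in one line by recognising elementary symmetric polynomials, without any inductive bookkeeping. The paper's induction has the minor advantage of visibly reusing the computation from Lemma~\ref{lemma:lin-fnctn}, making the parallel between the two lemmas explicit; your direct expansion has the advantage of being self-contained and making the sign pattern $(-1)^{k-1}$ transparent. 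Your closing remark that the hypothesis $\r_i\in[0,1]$ is unused is also correct.
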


\begin{proof}
    The proof proceeds via induction on n and uses the same observation as the previous proof. Again, we explicitly name the step $n=2$ as it uses a similar reasoning to the general case of $n\to n+1$ and thus makes the latter step easier to follow.
    \begin{description}
        \item[\underline{n=1:}] Clear.
        \item[\underline{n=2:}] \[ 1-\prod_{i=1}^2 (1-\r_i) = 1-(1-\r_1)(1-\r_2) = 1-(1-\r_1-\r_2+\r_1\r_2) = \r_1 + \r_2 - \r_1\r_2. \]
        \item[\underline{n$\to$ n+1:}] Let $1- \prod\limits_{i=1}^n (1-\r_i) = \sum\limits_{i=1}^{n}\r_i - \sum\limits_{\substack{i, i' = 1 \\ i< i'}}^n \r_i \r_{i'} + \ldots \pm \sum\limits_{\substack{i^{(1)},\ldots, i^{(n-1)}=1 \\ i^{(1)}<\ldots< i^{(n-1)}}}^n \r_{i^{(1)}} \ldots \r_{i^{(n-1)}} \mp \r_{i^{(1)}}\ldots \r_{i^{(n)}}$\\
        Which is equivalent to\\
        $\prod\limits_{i=1}^n (1-\r_i) = 1 -  \underbrace{\left(\sum\limits_{i=1}^{n}\r_i - \sum\limits_{\substack{i, i' = 1 \\ i< i'}}^n \r_i \r_{i'} + \ldots \pm \sum\limits_{\substack{i^{(1)},\ldots, i^{(n-1)}=1 \\ i^{(1)}<\ldots< i^{(n-1)}}}^n \r_{i^{(1)}} \ldots \r_{i^{(n-1)}} \mp \r_{i^{(1)}}\ldots \r_{i^{(n)}}\right)}_{=:y}$.
        With this we obtain
        \begin{align*}
            & 1- \prod\limits_{i=1}^{n+1} (1-\r_i) \\
            =& 1-\prod\limits_{i=1}^n (1-\r_i)\times(1-\r_{n+1}) = 1-[(1-y) (1-\r_{n+1})] \\
            =& 1-[1-y-\r_{n+1}+y\r_{n+1}] = y + \r_{n+1} - y\r_{n+1} \\
            =& {\color{mycolor1}\sum\limits_{i=1}^{n}\r_i}  {\color{mycolor2} - \sum\limits_{\substack{i, i' = 1 \\ i< i'}}^n \r_i \r_{i'}} { \color{mycolor3} + \sum\limits_{\substack{ i,i',i''= 1 \\ i<i'< i''}}^n \r_i \r_{i'} \r_{i''}} - \ldots {\color{mycolor4}\pm \r_{i^{(1)}}\ldots \r_{i^{(n)}}}  {\color{mycolor1} + \r_{n+1}} \\
            &{\color{mycolor2} - \r_{n+1}\sum\limits_{i=1}^{n}\r_i } {\color{mycolor3} + \r_{n+1}\sum\limits_{\substack{i, i' = 1 \\ i< i'}}^n \r_i \r_{i'}}  - \ldots {\color{mycolor4} \pm \r_{n+1}\sum\limits_{\substack{i^{(1)},\ldots, i^{(n-1)}=1 \\ i^{(1)}<\ldots< i^{(n-1)}}}^n \r_{i^{(1)}} \ldots \r_{i^{(n-1)}}} \mp \r_{i^{(1)}}\ldots \r_{i^{(n)}}\r_{n+1}\\
            =& {\color{mycolor1}\sum\limits_{i=1}^{n+1}\r_i}{\color{mycolor2} - \sum\limits_{\substack{i, i' = 1 \\ i< i'}}^{n+1} \r_i \r_{i'}} { \color{mycolor3} + \sum\limits_{\substack{ i,i',i''= 1 \\ i<i'< i''}}^{n+1} \r_i \r_{i'} \r_{i''}} - \ldots {\color{mycolor4} \pm \sum\limits_{\substack{i^{(1)},\ldots, i^{(n)}=1 \\ i^{(1)}<\ldots< i^{(n)}}}^{n+1} \r_{i^{(1)}} \ldots \r_{i^{(n)}}} \mp \r_{i^{(1)}}\ldots \r_{i^{(n)}}\r_{n+1}. 
        \end{align*}
        This completes the proof. \qed
    \end{description}
\end{proof}

\noindent The theorem now follows immediately from \Cref{lemma:lin-fnctn} and \Cref{lemma:prod-sum}.

\section{Axiom Compliance for Outcome Responsibility Functions}\label{app:outcome_proofs}

Here we present the proofs omitted in \Cref{sec:outcomeresp}.

\NURvsNRV*
\begin{proof}
To prove result (i), consider the coordination game depicted in \Cref{fig:coordination_rescue} and assume that the responsibility function $\R$ satisfies (KSym).
Towards a contradiction assume that $\R$ satisfies (FMC) and (NUR) in addition to (KSym). Again, by (FMC) we have $\R(w_2, j) = \R(w_3, j) = 1$ and by (KSym) it additionaly has to hold that $\R(w_1, j) = \R(w_4, j) = 1$. But this leaves agent $j$ without a strategy to avoid responsibility and thus violates (NUR).
Concerning (ii), consider the adapted decision tree in \Cref{fig:coordination_with_nature}. By symmetry of the tree, (NUR) and (KSym) agent $i$ gets assigned responsibility of 0 for all four outcome nodes. 
This violates (NRV) as no agent or group of agents is assigned any responsibility.
\end{proof}

\renewcommand{\thefigure}{4}
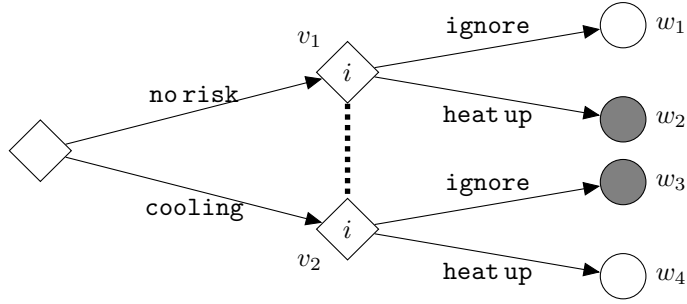
\begin{figure}
\centering
    \begin{tikzpicture} [-, scale=1, every node/.style={scale=1}, node distance=1cm, decision node/.style={diamond, draw, aspect=1, minimum height= 7mm},  prob node/.style={regular polygon,regular polygon sides=4, draw, minimum height= 1cm},  dummy/.style = {}, good outcome/.style={circle, draw, minimum width=6mm}, bad outcome/.style={circle, draw, fill=gray, minimum width=6mm }, >= triangle 45]
\node[decision node] (1) [] [] {\phantom{$i$}};

\node[dummy] (a) [right of = 1] [] {};
\node[dummy] (a1) [right of = a] [] {};
\node[dummy] (a2) [right of = a1] [] {};

\node[decision node] (2) [above right =  of a2] [label=above left: $v_1$] {$i$};
\node[decision node] (3) [below right = of a2] [label=below left: $v_2$] {$i$};

\node[dummy] (b) [right of = 2] [] {};
\node[dummy] (b1) [right of = b] [] {};
\node[dummy] (b2) [right of = b1] [] {};
\node[dummy] (b3) [right of = b2] [] {};

\node[good outcome] (good2) [above right = .4cm of b2] [label = right: $w_1$]{};
\node[bad outcome] (bad2) [below right = .4cm of b2] [label = right: $w_2$] {};

\node[dummy] (c) [right of = 3] [] {};
\node[dummy] (c1) [right of = c] [] {};
\node[dummy] (c2) [right of = c1] [] {};

\node[bad outcome] (bad3) [above right = .4cm of c2] [label = right: $w_3$] {};
\node[good outcome] (good3) [below right = .4cm of c2] [label = right: $w_4$] {};

\path[->] (1) edge node[above] {$\mathtt{no\,risk}$}  (2) (1) edge node[below] {$\mathtt{cooling}$} (3) (2) edge node[above] {$\mathtt{ignore}$}  (good2) (2) edge node[below] {$\mathtt{heat\,up}$} (bad2) (3) edge node[below] {$\mathtt{heat\,up}$} (good3) (3) edge node[above] {$\mathtt{ignore}$} (bad3);

\draw[line width=2pt, dotted=on] (2) -- (3); 

\end{tikzpicture}

\caption{\label{fig:coordination_with_nature} An adaption of the coordination game from \Cref{fig:coordination_rescue}. Agent $i$ can select whether to induce global warming, not knowing whether there is a threat of an impending ice age that this would alleviate or not, in which case the warming leads to a climate crisis.}
\end{figure}

\evalcombs*
\begin{proof}
Note that (NRV) is implied by (NIRV). The coordination game depicted in \Cref{fig:coordination_rescue} shows that $\Rneg$ does not satisfy (NRV) or (NIRV) since both agents get assigned responsibility 0 in all their decision nodes no matter what action they take. To prove (NIRV) for the other responsibility functions we single out the decision node $v \in V_d$ that is closest to some $w \in \epsilon$ while still achieving $B(v) \nsubseteq \epsilon$. Each of the remaining two responsibility functions assigns positive responsibility to the agent that corresponds to that decision node.
Regarding (NUR) it is easy to check for $\Rlike$ and $\Rneg$ that an agent always has an action available with zero responsibility. The exception is $\Rrisk$, which satisfies (FMC) and (KSym) and thus by \Cref{thm:NURvsNRV} can not satisfy (NUR).
Concerning (CC), it can be shown for all three responsibility functions that they do not assign responsibility along the path of the strategy that leads to $w \in V_o \setminus \epsilon$.

\end{proof}

\section{Computation of Outcome Responsibility Values}

Recall the example.

\renewcommand{\thefigure}{3}
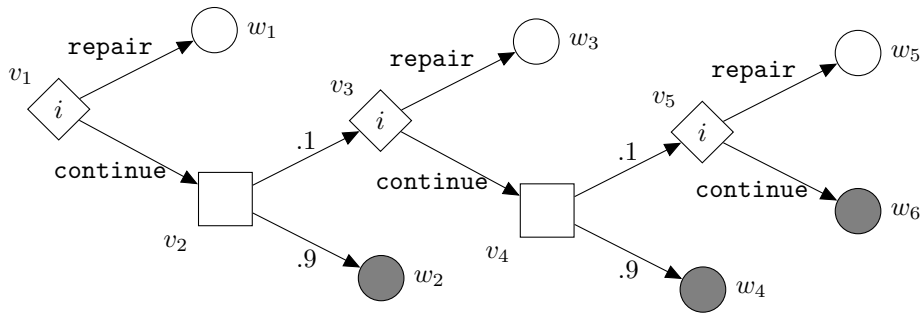
\begin{figure}
\centering
\begin{tikzpicture} [->, scale=1, every node/.style={scale=1}, node distance=1cm, decision node/.style={diamond, draw, aspect=1, minimum height= 7mm},  prob node/.style={regular polygon,regular polygon sides=4, draw, minimum height= 1cm},  dummy/.style = {}, good outcome/.style={circle, draw, minimum width=6mm}, bad outcome/.style={circle, draw, fill=gray, minimum width=6mm }, >= triangle 45]
\node[decision node] (v1) [] [label=above left: $v_1$] {$i$};

\node[dummy] (a) [right of = 1] [] {};

\node[good outcome] (w1) [above right =  of a] [label=right: $w_1$] { };
\node[prob node] (p1) [below right = of a] [label=below left: $v_2$] { };

\node[dummy] (b) [right of = p1] [] {};

\node[decision node] (v2) [above right =  of b] [label=above left: $v_3$]{$i$};
\node[bad outcome] (w2) [below right =  of b] [label=right: $w_2$] {};

\node[dummy] (c) [right of = v2] [] {};

\node[good outcome] (w3) [above right = of c] [label=right: $w_3$] {};
\node[prob node] (p2) [below right = of c] [label=below left: $v_4$] {};

\node[dummy] (d) [right of = p2] [] {};

\node[decision node] (v3) [above right = of d] [label=above left: $v_5$] {$i$};
\node[bad outcome] (w4) [below right = of d] [label=right: $w_4$] {};

\node[dummy] (e) [right of = v3] [] {};

\node[good outcome] (w5) [above right = of e] [label=right: $w_5$] {};
\node[bad outcome] (w6) [below right = of e] [label=right: $w_6$] {};

\path (v1) edge node[above] {$\mathtt {repair \qquad}$} (w1) (v1) edge node[below] {$\mathtt{continue \, \qquad}$} (p1) (p1) edge node[above] {$.1$}  (v2) (p1) edge node[below] {$.9$} (w2) (v2) edge node[above] {$\mathtt {repair \qquad}$} (w3) (v2) edge node[below] {$\mathtt{continue \, \qquad}$} (p2) (p2) edge node[above] {$.1$} (v3) (p2) edge node[below] {$.9$} (w4) (v3) edge node[above] {$\mathtt {repair \qquad}$} (w5) (v3) edge node[below] {$\mathtt{continue \, \qquad}$} (w6);

\end{tikzpicture}
\caption{Scenario of a repeated action. The operator of a large machine can either ignore a warning of the machine and continue using it, knowing that this has a high chance of irreversible damage, or perform a repair. At some point in time the last chance to repair the machine is reached.}
\end{figure}

Let $\T$ as in the example and $G = \{i\}$. We first need to compute the values for $\rrisk$ and $\rneg$ for every action in each decision node.

\begin{align*} 
\rrisk  (\{i\}, i, v_1, \mathtt{repair}) &= \max\limits_{\zeta \in Z^\sim(\T, \{i\}, v_1)} \Delta \omega(v_1, \zeta, \mathtt{repair}) = \max\limits_{\zeta \in Z^\sim(\T, \{i\}, v_1)} [\omega (w_1, \zeta) - \omega(v_1, \zeta)]\\
&= \max\limits_{\zeta \in Z^\sim(\T, \{i\}, v_1)} [ \min_{\sigma\in\Sigma(\T, \{i\}, w_1)} \ell(\epsilon | w_1,\sigma,\zeta) - \min_{\sigma\in\Sigma(\T, \{i\}, v_1)} \ell(\epsilon | v_1,\sigma,\zeta)]\\
&= 0 - 0 = 0
\end{align*}

\begin{align*} 
\rrisk  (\{i\}, i, v_1, \mathtt{continue}) &= \max\limits_{\zeta \in Z^\sim(\T, \{i\}, v_1)} \Delta \omega(v_1, \zeta, \mathtt{continue}) = 0.9 - 0 = 0.9
\end{align*}

\begin{align*} 
\rrisk  (\{i\}, i, v_3, \mathtt{repair}) &= \max\limits_{\zeta \in Z^\sim(\T, \{i\}, v_3)} \Delta \omega(v_3, \zeta, \mathtt{repair}) = 0 - 0 = 0
\end{align*}

\begin{align*} 
\rrisk  (\{i\}, i, v_3, \mathtt{continue}) &= \max\limits_{\zeta \in Z^\sim(\T, \{i\}, v_3)} \Delta \omega(v_3, \zeta, \mathtt{continue}) = 0.9 - 0 = 0.9
\end{align*}

\begin{align*} 
\rrisk  (\{i\}, i, v_5, \mathtt{repair}) &= \max\limits_{\zeta \in Z^\sim(\T, \{i\}, v_5)} \Delta \omega(v_5, \zeta, \mathtt{repair}) = 0 - 0 = 0
\end{align*}

\begin{align*} 
\rrisk  (\{i\}, i, v_5, \mathtt{continue}) &= \max\limits_{\zeta \in Z^\sim(\T, \{i\}, v_5)} \Delta \omega(v_5, \zeta, \mathtt{continue}) = 1 - 0 = 1
\end{align*}

As the member contributions for selecting $\mathtt{repair}$ are always 0, i.e. $\underline\rho(G, i, v) = 0$, it holds that $\rrisk (G, i, v, \a) = \rneg (G, i, v, \a)$ for all $G\subseteq I$, $i\in G$, $v \in V_i$ and $\a \in A_v$ in this example.

Now we can use the member contribution values together with the aggregation function $\mathbf{mprod}$ to compute the outcome responsibility. 

\begin{align*}
    \mathbf{mprod} \circ \rrisk (\{i\}, i, w_1) &= \rrisk (\{i\}, i, v_1, \mathtt{repair}) = 0 = \mathbf{mprod} \circ \rneg (\{i\}, i, w_1)
\end{align*}

\begin{align*}
    \mathbf{mprod} \circ \rrisk (\{i\}, i, w_2) &= \rrisk (\{i\}, i, v_1, \mathtt{continue}) = 0.9 = \mathbf{mprod} \circ \rneg (\{i\}, i, w_2)
\end{align*}

\begin{align*}
    \mathbf{mprod} \circ \rrisk (\{i\}, i, w_3) &= 1- [ (1- \rrisk (\{i\}, i, v_1, \mathtt{continue})) (1- \rrisk (\{i\}, i, v_3, \mathtt{repair})] \\
    & = 1- [(1-0.9)(1-0)] = 1-[0.1] = 0.9 \\
    & =  \mathbf{mprod} \circ \rneg (\{i\}, i, w_3)
\end{align*}

\begin{align*}
    \mathbf{mprod} \circ \rrisk (\{i\}, i, w_4) &= 1- [ (1- \rrisk (\{i\}, i, v_1, \mathtt{continue})) (1- \rrisk (\{i\}, i, v_3, \mathtt{continue})] \\
    & = 1- [(1-0.9)(1-0.9)] = 1-[0.01] = 0.99 \\
    & =  \mathbf{mprod} \circ \rneg (\{i\}, i, w_4)
\end{align*}

\begin{align*}
    \mathbf{mprod} \circ \rrisk (\{i\}, i, w_5) &= 1- [ (1- \rrisk (\{i\}, i, v_1, \mathtt{continue})) \\
    & \phantom{= 1- [ (1-} \cdot (1- \rrisk (\{i\}, i, v_3, \mathtt{continue}) (1-\rrisk(\{i\}, i, v_5, \mathtt{repair}))] \\
    & = 1- [(1-0.9)(1-0.9)(1-0)] = 1-[0.01] = 0.99 \\
    & =  \mathbf{mprod} \circ \rneg (\{i\}, i, w_5)
\end{align*}

\begin{align*}
    \mathbf{mprod} \circ \rrisk (\{i\}, i, w_6) &= 1- [ (1- \rrisk (\{i\}, i, v_1, \mathtt{continue})) \\
    & \phantom{= 1- [ (1-} \cdot (1- \rrisk (\{i\}, i, v_3, \mathtt{continue}) (1-\rrisk(\{i\}, i, v_5, \mathtt{continue}))] \\
    & = 1- [(1-0.9)(1-0.9)(1-1)] = 1-[0] = 1 \\
    & =  \mathbf{mprod} \circ \rneg (\{i\}, i, w_6)
\end{align*}

\end{document}